\newif\ifsubmit     %
\newif\ifllncs      %
\newif\ifexabs      %
\newif\ifblind      %
\def\checkmark{\tikz\fill[scale=0.4](0,.35) -- (.25,0) -- (1,.7) -- (.25,.15) -- cycle;}
  \spnewtheorem{claim}{Claim}{\bfseries}{\rmfamily}
  \crefname{claim}{claim}{claims}
  \Crefname{claim}{Claim}{Claims}
  \newtheorem{theorem}{Theorem}[section]
  \newtheorem{definition}[theorem]{Definition}
  \newtheorem{remark}[theorem]{Remark}
  \newtheorem{lemma}[theorem]{Lemma}
  \newtheorem{claim}[theorem]{Claim}
  \newtheorem*{remark*}{Remark}
  \newtheorem{fact}[theorem]{Fact}
  \newtheorem*{theorem*}{Theorem}
  \newtheorem*{lemma*}{Lemma}
\setlist[description]{noitemsep}
\setlist[enumerate]{noitemsep}
\setlist[itemize]{noitemsep}
    \NewDocumentCommand{\whiten}{ m }
    {
      \int_step_function:nnnN {1}{1}{#1} \white_text:n
    }
  \NewDocumentCommand{ \varul }{ D<>{5} O{0.2ex} O{0.1ex} +m } {%
    \begingroup
    \setul{#2}{#3}%
    \def\SOUL@uleverysyllable{%
      \setbox0=\hbox{\the\SOUL@syllable}%
      \ifdim\dp0>\z@
      \SOUL@ulunderline{\phantom{\the\SOUL@syllable}}%
      \whiten{#1}%
      \llap{%
        \the\SOUL@syllable
        \SOUL@setkern\SOUL@charkern
      }%
      \else
      \SOUL@ulunderline{%
        \the\SOUL@syllable
        \SOUL@setkern\SOUL@charkern
      }%
      \fi}%
    \ul{#4}%
    \endgroup
  }
\newcommand{\As}{\mathcal{A}}
\newcommand{\Bs}{\mathcal{B}}
\newcommand{\Ds}{\mathcal{D}}
\newcommand{\Es}{\mathcal{E}}
\newcommand{\Ts}{\mathcal{T}}
\newcommand{\Hs}{\mathcal{H}}
\renewcommand{\P}{\sf P}
\newcommand{\enc}{{\sf Enc}}
\newcommand{\dec}{{\sf Dec}}
\newcommand{\gen}{{\sf Gen}}
\newcommand{\skgen}{{\sf SKGen}}
\newcommand{\pkgen}{{\sf PKGen}}
\newcommand{\A}{\mathsf{A}}
\newcommand{\B}{\mathsf{B}}
\newcommand{\C}{\mathsf{C}}
\newcommand{\Q}{\mathsf{Q}}
\newcommand{\R}{\mathsf{R}}
\newcommand{\E}{\mathsf{E}}
\newcommand{\D}{\mathsf{D}}
\newcommand{\ct}{{\sf ct}}
\newcommand{\sk}{{\sf sk}}
\newcommand{\pk}{{\sf pk}}
\renewcommand{\kappa}{\ell}
\newcommand{\poly}{{\sf poly}}
\newcommand{\negl}{{\sf negl}}
\DeclareMathOperator{\Tr}{Tr}
\newcommand{\supp}{\mathsf{SUPP}}
\newcommand{\view}{{\sf View}}
\newcommand{\Wsf}{\mathsf{W}}
\newcommand{\Msf}{\mathsf{M}}
\title{
How (not) to Build Quantum PKE in Minicrypt
}
\author{
Longcheng Li\thanks{State Key Lab of Processors, Institute of Computing Technology, Chinese Academy of Sciences. Email: \texttt{lilongcheng22s@ict.ac.cn}} \quad 
Qian Li\thanks{Shenzhen International  Center For Industrial  And  Applied  Mathematics, Shenzhen Research Institute of Big Data. Email: \texttt{liqian.ict@gmail.com}}
\quad
Xingjian Li\thanks{Tsinghua University. Email: \texttt{lxj22@mails.tsinghua.edu.cn}}
\quad
Qipeng Liu\thanks{University of California San Diego. Email: \texttt{qipengliu0@gmail.com}}
}
\date{}
\begin{document}

\maketitle

\begin{abstract}
The seminal work by Impagliazzo and Rudich (STOC'89) demonstrated the impossibility of constructing classical public key encryption (PKE) from one-way functions (OWF) in a black-box manner. Quantum information has the potential to bypass classical limitations, enabling the realization of seemingly impossible tasks such as quantum money, copy protection for software, and commitment without one-way functions. However, the question remains: can quantum PKE (QPKE) be constructed from quantumly secure OWF?

A recent line of work has shown that it is indeed possible to build QPKE from OWF, but with one caveat. These constructions necessitate public keys being quantum and unclonable, diminishing the practicality of such ``public’’ encryption schemes --- public keys cannot be authenticated and reused. In this work, we re-examine the possibility of perfect complete QPKE in the quantum random oracle model (QROM), where OWF exists.

\begin{itemize}[topsep=1pt]
\item[] 
Our first main result: QPKE with classical public keys, secret keys and ciphertext, does not exist in the QROM, if the key generation only makes classical queries. 
\end{itemize}

Therefore, a necessary condition for constructing such QPKE from OWF is to have the key generation classically ``un-simulatable’’. Previous results (Austrin~et al. CRYPTO'22) on the impossibility of QPKE from OWF rely on a seemingly strong conjecture. Our work makes a significant step towards a complete and unconditional quantization of Impagliazzo and Rudich’s results. 

Our second main result extends to QPKE with quantum public keys.

\begin{itemize}[topsep=1pt]
\item[] 
The second main result: QPKE with quantum public keys, classical secret keys and ciphertext, does not exist in the QROM, if the key generation only makes classical queries and the quantum public key is either pure or ``efficiently clonable''.
\end{itemize}

The result is tight due to these existing QPKEs with quantum public keys, classical secret keys, quantum/classical ciphertext and classical-query key generation require the public key to be mixed instead of pure; or require quantum-query key generation, if the public key is pure. Our result further gives evidence on why existing QPKEs lose reusability. 

We also explore other sufficient/necessary conditions to build QPKE from OWF. Along the way, we use a new argument based on conditional mutual information and Markov chain to reprove the classical result; leveraging the analog of quantum conditional mutual information and quantum Markov chain by Fawzi and Renner (Communications in Mathematical Physics), we extend it to the quantum case and prove all our results. We believe the techniques used in the work will find many other usefulness in separations in quantum cryptography/complexity. 
\end{abstract}

\section{Introduction}

Quantum information and computation has the remarkable capability to transform classical impossibility into reality, ranging from breaking classically secure cryptosystems (Shor'a algorithm~\cite{shor1999polynomial}), realizing classically impossible primitives (quantum money~\cite{wiesner1983conjugate}, quantum copy-protection~\cite{aaronson2009quantum,aaronson2021new}) to weakening assumptions (quantum key distribution~\cite{Bennett_2014}, oblivious transfer/multi-party computation~\cite{bartusek2021round,grilo2021oblivious}, commitment~\cite{ananth2022cryptography,morimae2022quantum})\footnote{Here, we only cite works that initialized each area.}.

In the seminal work by Impagliazzo and Rudich~\cite{impagliazzo1989limits}, they proved that one-way functions (OWFs) were insufficient to imply the existence of public-key encryption (PKE) in a black-box manner. Coined by Impagliazzo~\cite{impagliazzo1995personal}, the word ``Minicrypt'' was referred to a world where only one-way functions exist; this word now broadly denotes all cryptographic primitives that are constructible from one-way functions. Thus, their result is now often interpreted as ``classical PKE is not in Minicrypt''. Given the increasing instances of quantum making classical impossibility feasible, we explore the following question in this work: 

\begin{center}
{\it 
Does quantum PKE (with classical plaintext) exist in Minicrypt?
}
\end{center}

Upon posing the question, ambiguity arises. A general quantum public-key encryption (QPKE) scheme allows everything to be quantum: its interaction with an OWF, both its public key and secret key, as well as ciphertext. Indeed, many efforts have already been made towards understanding different cases. 

\paragraph{Classical Keys, Classical Ciphertext.} 
In the work by Austrin, Chung, Chung, Fu, Lin and Mahmoody~\cite{austrin2022impossibility}, they initialized the study on the impossibility of quantum key agreement (QKA) in the quantum random oracle model. QKA is a protocol that Alice and Bob can exchange classical messages in many rounds, quantumly query a random oracle, and eventually agree on a classical key. They show that, under a seemingly strong\footnote{{It seems strong because it implies that the eavesdropper can attack quantum Alice and Bob with only a polynomial number of classical queries.}} assumption called ``polynomial compatibility conjecture'', such QKA with perfect completeness does not exist. Since QPKE with both keys and ciphertext being classical implies a two-round QKA, their conditional impossibility extends to this type of QPKE as well. 

The result provides evidence on the negative side: such QPKE does not exist under the polynomial compatibility conjecture. However, not only proving or refuting the conjecture is quite challenging, but also the conjecture (or some form of the conjecture) is necessary. To prove QKA is impossible, one needs to design an eavesdropper that observes the whole transcript, interacts with the random oracle, and guesses a classical key. In a general QKA, both Alice and Bob can make quantum queries to the random oracle; the eavesdropper in the attack by Austrin et al. only makes classical queries. It inherently requires a simulation of a quantum-query algorithm using only classical queries (at least for the QKA functionality). Although not directly comparable, such efficient simulation for decision problems~\cite{aaronson2009need} (Aaronson-Ambainis conjecture) is conjectured to exist but the question still remains open until now.  Moreover, \cite{austrin2022impossibility} showed that if the Aaronson-Ambainis conjecture is false, a classical-query eavesdropper is insufficient to break imperfect complete QKA protocols.

In light of these considerations, our focus in this work centers on the following question:
\begin{center}
    {\it Q1. Can we separate QPKE with classical keys, classical ciphertext from Minicrypt, \\ without any conjecture?}
\end{center}

\paragraph{Quantum Public Key, Classical Secret Key and Quantum/Classical Ciphertext.} 
The quantum landscape introduces a paradigm shift. It was first realized by Morimae and Yamakawa~\cite{morimae2022one} that some forms of QPKE with quantum public keys and quantum ciphertext might be constructed from OWFs (or even presumably weaker primitives). Subsequent QPKE schemes were later proposed by Coladangelo~\cite{coladangelo2023quantum}, Kitagawa et al.~\cite{kitagawa2023quantum}, Malavolta and Walter~\cite{cryptoeprint:2023/500} and Barooti et al.~\cite{barooti2023publickey}. 

On the surface, it seems to give a good answer: QPKE with quantum public keys exists in Minicrypt, showcasing a notable distinction between the quantum and classical world. However, is this demarcation as unequivocal as it seems?  Indeed, all the aforementioned constructions share one limitation --- they lose one of the most important properties inherent in all classical PKE --- reusability. A classical PKE scheme allows a user who possesses a public key to encrypt any polynomial number of messages, by reusing the classical public key. In contrast, public keys in \cite{morimae2022one,coladangelo2023quantum,kitagawa2023quantum,cryptoeprint:2023/500,barooti2023publickey} are essentially unclonable\footnote{Although unclonability does not necessarily imply no-reusability, it is an evidence of no-reusability.} and not reusable. Although some may argue for the practicality of QPKE with an additional public interface for generating quantum public keys, it introduces additional complexities such as the authentication of quantum keys, leading to increased interactions and other potential challenges.

Thus, our second focus is the question:
\begin{center}
    {\it Q2. Must reusability be sacrificed in constructing QPKE with quantum public keys from OWFs?}
\end{center}

\subsection{Our Main Results}
We make progress towards these two questions. A table discussing and comparing all existing results and our results is provided on the next page~(\Cref{tab:all_results}). 
Our first main result establishes an impossibility result on QPKE with classical keys and classical ciphertext in the QROM. 
\begin{theorem}\label{thm:main_theorem1}
    QPKE with classical keys and classical ciphertext does not exist in the QROM, if 
    \begin{enumerate}
        \item It has perfect completeness.
        \item The key generation algorithm only makes classical queries to the random oracle. 
    \end{enumerate}
\end{theorem}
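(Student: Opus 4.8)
The plan is to adapt the Impagliazzo--Rudich eavesdropper attack to the QROM setting, exploiting that the key generation makes only classical queries while allowing the encryptor and decryptor to make quantum queries. The key conceptual tool, as the introduction foreshadows, is an information-theoretic one: I would track a quantum conditional mutual information quantity and use the Fawzi--Renner approximate quantum Markov chain bound to argue that conditioning on the (classical) public key and transcript makes the honest parties' views on the oracle approximately independent, so that an eavesdropper who learns the ``heavy'' query--answer pairs can reconstruct the secret.

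First I would fix notation: let $\Hs$ be the random oracle, let $\pk$ be the classical public key output by $\keygen$, let $Q_{\keygen}$ denote the (random, classical) set of query/answer pairs made during key generation, and let $\ct$ be the ciphertext encrypting a uniform plaintext $b$. The eavesdropper $\eve$, given $\pk$ and $\ct$ (all classical), runs the standard ``learning the heavy queries'' loop: repeatedly simulate the honest decryption on a fresh simulated oracle consistent with everything learned so far, and whenever a query is made with non-negligible probability, actually query $\Hs$ and record the answer. Because $\keygen$ is classical-query, the set $Q_{\keygen}$ is a classical random variable of polynomial size; the heart of the argument is that after polynomially many rounds, the probability that the real decryptor queries a point in the ``intersection'' of $\keygen$'s influential queries that $\eve$ has not learned is negligible. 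Then, conditioned on $\eve$'s learned set, the distribution of $\Hs$ restricted to the decryptor's relevant queries is (approximately) independent of $Q_{\keygen}$, so $\eve$ can sample a fake secret key / fake decryptor run that is statistically close to the real one, and by perfect completeness recovers $b$.

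The step I expect to be the main obstacle is making the independence/decoupling argument rigorous when the \emph{decryptor} (and encryptor) are quantum-query algorithms: one cannot talk about ``the set of queries the decryptor makes'' as a well-defined classical object, so the classical heavy-query counting must be replaced by a quantum statement. Here I would invoke the quantum analog of conditional mutual information and the Fawzi--Renner result promised in the abstract: show that $I(\text{decryptor's oracle register} : Q_{\keygen} \mid \pk, \ct, \text{learned set})$ is small after enough rounds of learning, and that small conditional mutual information implies the joint state is close to one recoverable from $\eve$'s side information alone (an approximate quantum Markov chain $Q_{\keygen} - (\pk,\ct,\text{learned}) - \text{decryptor}$). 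Converting this ``closeness of states'' back into ``$\eve$ guesses $b$ correctly with probability close to $1$'' uses perfect completeness: the real decryptor outputs $b$ with certainty, so any state statistically close to the real joint state yields $b$ with probability close to $1$. The remaining routine pieces are the standard union bounds over the polynomially many learning rounds, the choice of the non-negligible threshold, and checking that $\eve$ makes only polynomially many quantum queries; I would also need a lemma that simulating a consistent oracle (lazy sampling / compressed oracle) is indistinguishable to the honest quantum algorithms, which is standard in the QROM literature.
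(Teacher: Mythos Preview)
Your proposal has the right high-level ingredients (CMI, Fawzi--Renner) but misses the central technical step and gets the structure of the attack backwards in a way that matters.

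\textbf{Which party Eve simulates, and how CMI is driven down.} You have Eve ``simulate the honest decryption'' to learn heavy queries. But decryption needs $\sk$, which Eve does not have. In the paper, Eve instead runs \emph{Bob} (the encryptor, who only needs $\pk$) many times in parallel; the permutation-invariance lemma then gives $I(\A:\B\mid \E)\le S(\B)/t$ directly, with no iterative heavy-query loop. The heavy-query extraction is a \emph{separate} mechanism (measuring a random query of a fresh Bob run) whose purpose is not to shrink CMI at all, but to record a classical list $R_E$ of Bob's high-weight inputs for later use. Your ``after polynomially many learning rounds the CMI is small'' claim has no obvious proof: the classical BM-style argument that intersection queries are covered does not translate to a bound on quantum CMI, and this is exactly why the paper replaces it with permutation invariance.

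\textbf{The missing step: running $\As_2$ on the fake view.} After Fawzi--Renner produces a fake Alice view $(st_{A'},R_{A'})$, your proposal simply says ``by perfect completeness recovers $b$''. This is the actual hard part. Decryption makes (quantum) oracle queries; if you use the real oracle $H$ it may contradict $R_{A'}$, and if you use an arbitrary oracle consistent with $R_{A'}$ it may contradict what the real Bob saw, so perfect completeness buys nothing (the paper's Examples 1 and 2 in the overview illustrate both failures). The paper's resolution is to run $\As_2$ under the \emph{reprogrammed} oracle $\tilde{H}$ that equals $H$ everywhere except on the inputs in $R_{A'}$. The point of having recorded Bob's heavy queries $R_E$ is that the sampled fake view, being in the support of the true joint distribution, must be consistent with $R_E$; hence the reprogrammed points lie outside Bob's heavy set, and BBBV bounds the disturbance to Bob's output. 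Only then does perfect completeness under $\tilde{H}$ let Eve recover $k_B$. None of this reprogramming-plus-BBBV argument appears in your proposal, and ``lazy sampling / compressed oracle is standard'' does not address it: the issue is not indistinguishability of a simulated oracle to honest parties, but compatibility of a \emph{specific} hybrid oracle with both the fake Alice and the real Bob.
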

Unlike the approaches used in \cite{austrin2022impossibility}, our result does not require any conjecture, as the eavesdropper in our attack makes quantum queries. Perfect completeness is a natural property shared by many PKE schemes\footnote{Very recently, Mazor~\cite{mazor2023key} proposed the first perfectly complete ``Merkle Puzzle'', which was not known for many years.}

Consequently, if QPKE could be constructed in the QROM, one must have the key generation procedure being classical-query ``un-simulatable''; meaning the keys can not be computed by an efficient classical-query algorithm. If we believe the Aaronson-Ambainis conjecture~\cite{aaronson2009need} (all quantum-query-solvable decision problems can be simulated by classical-query), the theorem suggests that the key generation  must be a sampling procedure. Thus, to build QPKE in the QROM, we need to find a search/sampling problem that is only tractable by quantum queries; one such example is the Yamakawa-Zhandry problem~\cite{yamakawa2022verifiable}. 

\cite{austrin2022impossibility} ruled out perfect complete QPKE with classical keys and classical ciphertext in the QROM, if (i) the encryption only makes classical queries, or (ii) both the key generation and decryption makes classical queries, without using the polynomial compatibility conjecture. Our result immediately improves their result (ii). Additionally, both of their impossibility results apply to the case where the oracle access in encryption and decryption procedure is asymmetric. Our result further completes the picture by making both encryption and decryption symmetric (quantum queries). 

Another difference in our work is that our eavesdropper makes quantum queries. The capability of leveraging quantum eavesdroppers and conditioning on quantum events in this work, immediately gives us two strengthened versions (\Cref{thm:main_theorem2} and \Cref{thm:main_theorem3}) of \Cref{thm:main_theorem1}, which works for quantum public keys and quantum ciphertext. These two extensions provide many interesting discussions on the feasibility and impossibility of building QPKE in Minicrypt. \cite{grilo2023towards} also discussed the feasibility of QPKE with quantum ciphertext, with a weaker result and based on the polynomial compatibility conjecture. Thus, we believe our framework is versatile and has the potential to completely answer these questions. We elaborate on them now. 

\vspace{1em}

Our second main result extends the previous theorem to QPKE with quantum public keys and ciphertexts. This scheme is tight to all existing QPKE with quantum public keys and perfect completeness, see \Cref{tab:comparison}.
\begin{theorem}[Subsuming \Cref{thm:main_theorem1}]\label{thm:main_theorem2}
    QPKE with quantum public key, classical secret key, and classical or quantum ciphertext does not exist in the QROM, if 
    \begin{enumerate}
        \item It has perfect completeness.
        \item The key generation algorithm only makes classical queries to the random oracle. 
        \item The quantum public key is either pure or ``efficiently clonable''. 
    \end{enumerate}
\end{theorem}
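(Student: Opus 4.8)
The plan is to build a quantum-query eavesdropper $\eve$ that, given polynomially many copies of the quantum public key and one copy of the (classical or quantum) ciphertext, recovers the message with non-negligible probability, contradicting the security of the scheme; the argument refines the one behind \Cref{thm:main_theorem1}. First I would fix a uniformly random message $m$ and pass to the global pure state of the honest execution --- run $\keygen^{\rH}$, then $\enc^{\rH}(\pk,m)$ --- with all parties sharing one purified (compressed) random oracle $\rH$. Because $\keygen$ makes only classical queries, its interaction with $\rH$ is recorded by a classical list $L$ of query/answer pairs, so $\pk$ and the classical $\sk$ are both functions of $\keygen$'s randomness and $L$. I would group registers as $\A$ (everything internal to $\keygen$, so $\sk$ is determined by $\A$), $\B$ (everything internal to $\enc$, including $m$ and the ciphertext $\ct$), and $\E$ ($\eve$'s workspace, holding copies of $\pk$ and the ciphertext), and track the potential $I(\A:\B\mid\E_t)$, the conditional mutual information after $\eve$'s $t$-th query.

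Next, $\eve$ runs a ``heavy-query harvesting'' loop, the quantum analogue of the Impagliazzo--Rudich intersection-query search behind \Cref{thm:main_theorem1}: repeatedly, using her copies of $\pk$ and quantum access to $\rH$, she runs $\enc$ on a fresh random message together with a transcript-consistent reconstruction of $\keygen$ (feasible since $\keygen$ is classical-query, so its query list is classical), identifies an as-yet-unqueried oracle point carrying non-negligible query magnitude for both, and measures the oracle there. A potential argument --- each harvested heavy intersection query drops $I(\A:\B\mid\E_t)$ by an inverse-polynomial amount, while the total budget is $\poly(\ell)$ since $\keygen$, $\enc$, $\dec$ each make only $\poly(\ell)$ queries --- shows that after $\poly(\ell)$ queries $I(\A:\B\mid\E_t)$ is negligible. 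This is precisely where $\eve$'s ability to query in superposition is used, and is why, unlike \cite{austrin2022impossibility}, no compatibility-type conjecture is needed.

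Once $I(\A:\B\mid\E)\le\negl$, I would invoke the Fawzi--Renner approximate-Markov-chain theorem: there is a channel $\cR$ acting only on $\E$ with $\cR_{\E\to\E\A}(\rho_{\B\E})\approx\rho_{\A\B\E}$ up to negligible trace distance, so $\eve$ applies $\cR$ to obtain a register $\tilde\A$ jointly close to the real $\A$, reads off a candidate secret key $\tilde\sk$, and outputs $\dec^{\rH}(\tilde\sk,\ct)$. Correctness is the stitching argument of \Cref{thm:main_theorem1}: since $\eve$ has harvested every heavy intersection query, with good probability $(\pk,\tilde\sk,\ct)$ together with the real oracle extends to a consistent honest execution, so perfect completeness forces $\dec^{\rH}(\tilde\sk,\ct)=m$. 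It remains to obtain the copies of $\pk$: if it is efficiently clonable, $\eve$ clones her one copy; if it is pure, a pure state output by a classical-query generator is, conditioned on $\keygen$'s classical query list, essentially pinned down by $\keygen$'s private randomness, so once the heavy part of that list is harvested a second Markov recovery --- now applied to $\A$ versus the $\pk$ register --- lets $\eve$ regenerate states negligibly close to $\ket{\pk}$ on demand.

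The two genuinely new pieces, which are also the two main obstacles, are the harvesting loop and the recovery of copies of $\pk$. In the quantum world ``how heavy a query is'' is a squared amplitude on a quantum register, so one must argue both that a sufficiently heavy un-harvested intersection query exists whenever $I(\A:\B\mid\E_t)$ is non-negligible, and that $\eve$ can locate and measure it without irreparably disturbing her copies of $\pk$ and the ciphertext; quantifying this disturbance, and making the Fawzi--Renner errors compound gracefully across the $\poly(\ell)$ iterations, is where most of the effort goes. The pure-public-key case is the other sensitive point: $\eve$ holds only a single copy and cannot clone it outright, so the regenerated $\ket{\pk}$ must be controlled uniformly well, since it is reused polynomially many times throughout the attack.
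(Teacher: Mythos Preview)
Your proposal has two concrete gaps and one misconception about the setup.

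\textbf{Running decryption with the real oracle fails.} After Fawzi--Renner you propose to output $\dec^{\rH}(\tilde\sk,\ct)$ using the \emph{real} oracle. The paper's technical overview explicitly exhibits a counterexample (``Example 2''): $\As_1$ queries $H(0)$ and stores $y:=H(0)$; $\Bs$ sends a random $x\ne 0$; $\As_2$ re-queries $H(0)$, aborts if it mismatches the stored $y$, else outputs $H(x)$. Here there are \emph{no} intersection queries, so your harvesting loop finds nothing and CMI is already zero. Fawzi--Renner then hands you a fresh uniform $\tilde y$ that is almost surely not $H(0)$, and $\dec^{\rH}$ aborts. The point is that Fawzi--Renner guarantees $(\tilde\A,\B)\approx(\A,\B)$ only after tracing out the oracle; it gives you no correlation between $\tilde\A$ and the specific $H$ that was actually sampled. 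The paper's fix is essential: recover the \emph{entire} classical query list $R'_A$ of the fake Alice, \emph{reprogram} the oracle to $\tilde H$ that agrees with $R'_A$ and equals $H$ elsewhere, and run $\dec^{\tilde H}(\tilde\sk,\ct)$. To argue this succeeds you must show Bob's output $(k_B,\ct)$ is still in the support of $\Bs^{\tilde H}$, which is where the heavy-query harvesting is actually used---not to lower CMI, but so that the reprogrammed points lie outside Bob's heavy set and a BBBV bound (\Cref{lem:BBBV}) controls the perturbation.

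\textbf{CMI is reduced by repetition, not by harvesting.} Your claim that ``each harvested heavy intersection query drops $I(\A:\B\mid\E_t)$ by an inverse-polynomial amount'' is not the paper's mechanism and is not justified for quantum CMI; it is closer in spirit to the Austrin~et~al.\ approach you are trying to avoid. The paper instead runs Bob $t$ times on $t$ copies of $\pk$: by permutation invariance (\Cref{lemma:permuation-invariance}) there is some $i\le t$ with $I(\A:\B\mid \E,\B_1,\ldots,\B_i)\le S(\B)/t$. This is what buys you small CMI without any conjecture, and it is why Eve needs many copies of $\pk$ in the first place.

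\textbf{You already have the copies of $\pk$.} In the IND-CPA game for quantum public keys (\Cref{def:PKE_qpk}) the adversary is \emph{given} $\rho_{\pk}^{\otimes t(\lambda)}$. If $\pk$ is pure, those are $t(\lambda)$ identical pure states and Eve can feed them directly into the repeated runs of Bob; there is no need for a ``second Markov recovery'' to regenerate $\ket{\pk}$. The ``efficiently clonable'' clause handles the mixed case by letting Eve turn one sample into many identical pure copies.
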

At first glance, the theorem may appear unexciting: how can a classical-query key generation produce meaningful quantum public keys? Interestingly, the QPKE constructions in \cite{kitagawa2023quantum,cryptoeprint:2023/500} have only classical-query key generation procedures. Namely, their public keys are of the form $\frac{1}{\sqrt{2}} (\ket {s_0} + \ket {s_1})$ of two strings $s_0, s_1$, which can be computed using only classical queries. 

Let us explain the third condition in our result. In QPKE, one needs to guarantee that having multiple copies of a quantum public key, the IND-CPA security still holds. If the public key generation procedure outputs a pure state, in other words the security holds against copies of the same pure state, we call the public key pure. Otherwise, each copy of a quantum public key is a mixed state (a distribution of pure states) and even holding multiple copies from the same distribution does not mean having the same pure states\footnote{For example, even if having two copies of the same mixed state $\frac{\mathbb{I}}{2}$, they can be two different states, like $\ket 0 \ket 1$.}. In the latter case, we require that there exists a query-efficient cloning procedure that can perfectly duplicate the pure state of the quantum public key.

As the schemes in \cite{kitagawa2023quantum,cryptoeprint:2023/500} have perfect completeness and classical-query key generation, our result essentially says that their public keys must be mixed and can not be cloned. While the absence of perfect cloning does not imply non-reusability, our result provides crucial insights into the feasibility of building QPKE with quantum public keys and reusability from OWFs. Many intriguing open questions will be discussed further at the end of this section. 

Our second theorem also is tight to the QPKE scheme by \cite{coladangelo2023quantum,barooti2023publickey}. Their scheme has perfect completeness, pure state quantum public keys. Our theorem suggests that their schemes must make quantum queries; which is the case in their construction, as they need to query a pseudorandom function (implied by OWFs in a black-box manner) on an equal superposition.

\vspace{1em}

Our next result is on the type of QPKE with classical keys and quantum ciphertext, whose decryption makes no queries to an oracle. The impossibility extends to any oracle model (even quantum oracles), not just the random oracle model.  %

\begin{theorem}\label{thm:main_theorem3}
    (Imperfect/Perfect) QPKE with classical keys and quantum ciphertext does not exist \emph{in any oracle model}, if 
    \begin{enumerate}
        \item The decryption algorithm makes no queries to the oracle. 
    \end{enumerate}
\end{theorem}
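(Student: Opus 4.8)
The plan is to exploit the fact that the decryption algorithm makes no oracle queries at all, so the entire decryption map is a fixed (oracle-independent) quantum channel $\dec_{\sk}(\cdot)$ acting on the quantum ciphertext. First I would fix the oracle $\mc{O}$ (whatever it is --- classical, quantum, structured) and condition on it throughout; since $\dec$ never touches $\mc{O}$, all of the oracle's ``power'' is confined to $\keygen$ and $\enc$. The key observation is that correctness of the scheme forces, for every message $m$ and for keys $(\pk,\sk)$ in the support of $\keygen^{\mc{O}}$, that the state $\enc^{\mc{O}}(\pk,m)$ is (approximately, or exactly in the perfect case) distinguishable from $\enc^{\mc{O}}(\pk,m')$ by the \emph{fixed} measurement $\{\dec_{\sk}\to m\}_m$. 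In other words, the ciphertext states for distinct messages lie in nearly-orthogonal regions of a decomposition determined entirely by $\sk$, which is classical.

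Next I would build an eavesdropper $\eve$ that, given only the classical public key $\pk$ and a challenge ciphertext $\rho_m$, recovers $m$ without knowing $\sk$. The idea: $\eve$ can run $\keygen^{\mc{O}}$ herself (she has oracle access --- the theorem places no restriction on $\eve$'s queries, and indeed $\keygen$'s queries are irrelevant here because the obstruction is information-theoretic given $\pk$) to sample a \emph{fresh} secret key $\sk'$ consistent with the observed $\pk$. Because $\dec$ is oracle-free and the scheme is correct, $\dec_{\sk'}$ must also decrypt correctly any ciphertext honestly generated under $\pk$ --- otherwise there would be two keys $\sk,\sk'$ both ``valid'' for $\pk$ that disagree on some ciphertext, contradicting perfect correctness (for the imperfect case, a union/averaging bound over the correctness error gives a noticeable success probability). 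Hence $\eve$ simply applies the fixed decryption channel with her own $\sk'$ and outputs the result; this breaks IND-CPA. Crucially this argument needs \emph{no} random-oracle structure and no bound on query counts, which is why the statement holds in any oracle model.

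The step I expect to be the main obstacle is the resampling of a ``consistent'' $\sk'$ and the precise sense in which it is guaranteed to decrypt correctly. In the perfect-completeness case this is clean: the set of $(\pk,\sk)$ pairs has the property that $\Pr[\dec_{\sk}(\enc^{\mc{O}}(\pk,m))=m]=1$ for all $m$, and conditioning on the marginal value $\pk$ leaves a distribution over $\sk$ all of whose elements inherit this guarantee on the support of $\enc^{\mc{O}}(\pk,\cdot)$; so any sampled $\sk'$ works verbatim. In the imperfect case the subtlety is that $\sk'$ may fail on a small-measure set of ciphertexts that nonetheless correlates with the challenge; here I would argue via an averaging argument over the joint distribution of $(\pk,\sk,\sk',\rho_m)$ --- the expected decryption error of $\dec_{\sk'}$ on $\enc^{\mc{O}}(\pk,m)$ equals the honest error (since $\sk'$ and the challenge's $\sk$ are exchangeable given $\pk$), so $\eve$ succeeds with probability $1-\negl$ if honest correctness is $1-\negl$, and more generally with probability bounded below by the completeness parameter. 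A secondary point to handle carefully is that $\pk$ might only \emph{computationally} determine the distribution of $\sk$; but since we allow $\eve$ unbounded oracle queries (the theorem makes no efficiency demand of the adversary beyond what IND-CPA requires), $\eve$ can sample from the exact conditional distribution, so no hardness assumption is needed. I would then package this as: perfect correctness $+$ oracle-free decryption $\Rightarrow$ the scheme is not IND-CPA secure relative to any $\mc{O}$, completing the proof.
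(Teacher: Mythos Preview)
Your approach has a genuine gap: you assume Eve can sample a fresh $\sk'$ from the conditional distribution of $\sk$ given the observed $\pk$ under the true oracle, but this is not achievable with polynomially many oracle queries in general. The IND-CPA notion used here (Definition~\ref{def:QKA}) restricts the adversary to $\poly(\lambda)$ queries, so your parenthetical that ``the theorem places no restriction on $\eve$'s queries'' is simply wrong. The only direct way to sample from $\sk\mid\pk$ is to rerun $\gen^{\mc{O}}$ until the returned public key matches, and when $\pk$ has super-logarithmic min-entropy this costs super-polynomially many runs and hence super-polynomially many queries. You also cannot shortcut this by ``fixing $\mc{O}$ and conditioning on it throughout'': Eve does not know the oracle's description, she only has query access to it, and the conditional law of $\sk$ given $\pk$ genuinely depends on $\mc{O}$.

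The paper's proof sidesteps this obstacle entirely. Rather than reconstructing Alice's secret key, Eve repeatedly runs \emph{Bob's} algorithm $\Bs(m_1)$ on the known classical first message $m_1=\pk$; each such run costs a fixed polynomial number of queries regardless of the entropy of $\pk$. After $O(de/\epsilon^2)$ copies, the permutation-invariance lemma forces $I(\A:\B\mid\E)\leq O(\epsilon^2)$, and the Fawzi--Renner recovery channel (which makes \emph{zero} oracle queries) produces a fake Alice state $\A'$ with $\rho_{\A'\B}$ within $O(\epsilon)$ in trace distance of $\rho_{\A\B}$. Because $\dec$ is oracle-free, Eve simply applies it to $\A'$ together with the received quantum ciphertext and recovers the key with probability at least $\Pr[k_A=k_B]-O(\epsilon)$. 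Your exchangeability intuition is morally what the CMI argument encodes, but the CMI/approximate-Markov-chain machinery is precisely what turns it into an attack by a query-bounded Eve in an arbitrary oracle model.
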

Here, an oracle model means an oracle is sampled from a distribution, and the KA is executed under this oracle; both the key generation algorithm, the encryption algorithm and the attacker can have quantum access to the oracle. We remark that in our impossibility result, such QPKE does not need to be perfectly complete. We also note that, our theorem holds in the classical setting; PKE does not exist in any oracle model if the decryption algorithm makes no queries. This does not contradict constructions in the generic group model, as the decryption algorithm is required to make oracle queries. 

A recent work by Bouaziz–Ermann, Grilo, Vergnaud and Vu ~\cite{grilo2023towards} also discusses a similar separation. They proved that, under the polynomial compatibility conjecture, perfect complete QPKE with classical keys and quantum ciphertext does not exist in the QROM, if decryption makes no queries. Our \Cref{thm:main_theorem3} improves their results in three aspects: we remove the conjecture and the requirement on perfect completeness, and the impossibility works in any oracle model.

\paragraph{Other results.} As we progress towards achieving our main results, our techniques also enable us to establish additional impossibility 
results on QPKE and QKA as well. These results include
\begin{itemize}
\item Separation between pseudorandom quantum states and QKA (Remark \ref{rem:prs});
\item Impossibility of Merkle-like QKA (more generally, non-interactive QKA) in \emph{any oracle model} (Theorem \ref{thm:non-interactive}); we emphasize that this result also holds for quantum oracles, e.g., when a black-box unitary is chosen from Haar measure. 

\item Impossibility of QPKE with a short classical secret key in \emph{any oracle model} (Appendix \ref{sec:shortkey}).
\end{itemize}
We refer interested readers to these sections for more details.

\begin{table}[!htb]
  \centering
  \begin{tabular}{|l||c|c|c|c||c|c|}
    \hline
     & \cite{austrin2022impossibility} & \cite{austrin2022impossibility}  & \cite{austrin2022impossibility} & Thm \ref{thm:main_theorem1} & \cite{grilo2023towards} & Thm \ref{thm:main_theorem3} \\
    \hline
    perfect complete & \checkmark & \checkmark & \checkmark & \checkmark & \checkmark &   \\
    \hline
    ciphertext & C & C & C & C & Q & Q\\
    \hline
    ${\sf Gen}$ & Q & Q & C & C & Q & Q\\
    \hline
    ${\sf Enc}$ & Q & C & Q & Q & Q & Q \\
    \hline
    ${\sf Dec}$ & Q & Q & C & Q & $\star$ & $\star$ \\
    \hline
    conjecture & \checkmark &  &  & & \checkmark & \\
    \hline
    oracle & RO & RO & RO & RO & RO & any oracle \\
    \hline
  \end{tabular}
  \caption{Comparing impossibility results on classical public keys. `Q' denotes quantum, `C' denotes classical, `checkmark' denotes yes, and `$\star$' denotes no oracle queries. ``RO'' stands for ``(quantum) random oracle''. }
  \label{tab:all_results}
\end{table}

\begin{table}[!htb]
  \centering
  \begin{tabular}{|l||c|c|c|}
    \hline
      & Thm \ref{thm:main_theorem2} (Impossibility) & \cite{kitagawa2023quantum,cryptoeprint:2023/500}  & \cite{coladangelo2023quantum,barooti2023publickey} \\
    \hline
    perfect complete & \checkmark & \checkmark & \checkmark  \\
    \hline
    ciphertext & pure or ``efficiently clonable'' & mixed and unclonable & pure\\
    \hline
    ${\sf Gen}$ & C & C & Q\\
    \hline
  \end{tabular}
  \caption{Comparing our Thm \ref{thm:main_theorem2} with existing constructions. \cite{kitagawa2023quantum,cryptoeprint:2023/500} overcame the impossibility by allowing the ciphertext to be mixed; \cite{coladangelo2023quantum,barooti2023publickey} overcame the impossibility by allowing the Gen procedure to make quantum queries. }
  \label{tab:comparison}
\end{table}

\subsection{Open Questions and Discussions}

Before the overview of our techniques in the next section, we discuss some  open questions and directions. Some discussions may become clearer upon reviewing both the overview and the entirety of the paper.

\paragraph{Remove perfect completeness.} Our \Cref{thm:main_theorem1} and \Cref{thm:main_theorem2} both rely on the underlying QPKE is perfectly complete. Can we remove this condition and make the impossibility result work for any QPKE? %

\paragraph{Extending clonable keys to reusable keys.}
When quantum public keys are reusable, that means there exists an encryption procedure $\enc'$ that takes two messages $m, m'$ and only one quantum public key $\pk$, and outputs two valid ciphertexts. Since $\enc'$ is  not necessarily honest, it seems difficult to convert such assumptions into the setting of a QKA protocol. An intermediate goal is to have $\enc$ ``deterministic'', which means it takes $\pk$ and $m$, outputs the original $\pk$ together with a valid $\ct$ --- this is a weaker but promising step towards a full understanding of reusability.
Another direction is to relax the ``clonable'' condition to some form of ``approximate clonable'' in our impossibility result.

\paragraph{Make key generation quantum.} Our \Cref{thm:main_theorem1} and \Cref{thm:main_theorem2} assume key generation procedure only makes classical queries. The current techniques require addressing some sorts of ``heavy queries'' made by an encryption procedure. For classical-query key generation, the number of such possible inputs is only polynomial. But it will be less clear how to handle this, when the key generation is fully quantum. Upon resolving this, one can get a full separation between QPKE and OWFs. 

The basic idea of our approach is to keep the conditional mutual information (between Alice and Bob, conditioned on quantum Eve) small during the execution of QKA protocols or QPKE schemes. The characterization of quantum states with small CMI provided by Fawzi and Renner \cite{fawzi2015quantum} (see  \Cref{thm:quantum_mutual_info_operational}) plays a major role. To achieve a full separation between QPKE and OWF, the argument has to critically utilize the condition that $H$ is a uniform random function rather than drawn from a general oracle distribution. This is because our separation works in both the quantum and classical setting, but in the full classical KA case, KA exists in the generic group model. 
Thus, we really need to use the structures of a random oracle. In this work, we show it is possible when key generation only makes classical queries but not clear for quantum-query key generation.

\subsection*{Acknowledgement}
Qipeng Liu would like to thank Navid Alamati for bringing up this problem in 2019 while interning at Fujitsu Research of America, and Jan Czajkowski, Takashi Yamakawa and Mark Zhandry for insightful discussions at Princeton during 2019 and 2020. The authors also thank Luowen Qian for pointing out the problem of Merkle-like QKA with Haar-random unitary, which prompted the realization of Theorem \ref{thm:main_theorem3} applicable to even quantum oracles.

Qian Li's work was supported by the National Natural Science Foundation of China Grants (No. 62002229), and Hetao
Shenzhen-Hong Kong Science and Technology Innovation Cooperation Zone Project (No. HZQSWS-KCCYB-2024016).

Longcheng Li was supported in part by the National Natural Science Foundation of China Grants No. 62325210, 62272441, 12204489, 62301531, and the Strategic Priority Research Program of Chinese Academy of Sciences Grant No. XDB28000000.

\section{Technical Overview}

In this section, we shed light on several key ideas used in the work, especially those on how to attack with quantum queries and how to deal with quantum public keys. Our method for handling quantum ciphertext is similar to that in~\cite{grilo2023towards} by leveraging the Gentle Measurement Lemma. 

\paragraph{Translating QPKE to QKA.}
Although we focus on the separation between QPKE and OWFs (or the quantum random oracle model), we will mostly study two-round QKA protocols. Say, if we have a QPKE scheme in the QROM ($\gen, \enc, \dec$), we can easily convert it to a two-round QKA: 
\begin{definition}[Informal, Two-round  QKA]\label{def:informal_QKA}
We define a two-round QKA as follows.
\begin{itemize}
    \item Alice runs $\gen$ to produce $\pk, \sk$ and sends $\pk$ as the first message $m_1$ to Bob. We call Alice's algorithm at this stage $\As_1$. 
    \item Bob upon receiving $m_1:=\pk$, it samples a uniformly random key $k$. It computes $\ct \gets \enc(\pk, k)$ and  sends $m_2 := \ct$ to Alice. We call Bob's algorithm at this stage $\Bs$. 
    \item Alice will then run $\dec(\sk, \ct)$ to retrieve $k$. We call Alice's algorithm at this stage $\As_2$. 
\end{itemize}
\end{definition}
If the QPKE has security against randomly chosen messages, then the underlying QKA is secure. Quantum public keys will make $m_1$ quantum in the corresponding QKA, quantum ciphertext translates to a quantum $m_2$ and classical-query $\gen$ makes $\As_1$  classical-query. 
Thus, we focus on breaking various types of two-round QKA with perfect completeness.  

\subsection{Recasting the Classical Idea for Merkle-like KA}

The classical proofs~\cite{impagliazzo1989limits,barak2009merkle} shares one common idea: as long as an eavesdropper Eve learns all the queries that are both used in Alice and Bob's computation, Eve can learn the key\footnote{\cite{brakerski2011limits} also gave a proof for the case of perfect completeness. Their proof works even if not all intersection queries are learned.}. Here we re-interpret the idea for a special case: two-round Merkle-like KA. In this type of KA, $m_1, m_2$ are generated based on oracle queries and sent simultaneously to the other party (so that $m_2$ has no dependence on $m_1$). Furthermore, to recover the shared key, Alice and Bob only need to do local computation on their internal states and the communication, \emph{without making any oracle queries}. This is a strong form of KA, which is not implied by PKE using the aforementioned reduction. We refer to it as Merkle-like because the famous Merkle Puzzles are of this form.

Let us assume for a specific execution of Alice and Bob, the queries made by Alice is the list $R_A$, her private coins are $s_A$ and similarly $R_B, s_B$, $R_E$ is the query made by Eve; we define $\view_A = (s_A, R_A)$ as the personal view of Alice, similarly $\view_B$ for Bob.
A tuple $(\view_A, \view_B, m_1, m_2, H)$ is a possible execution of the KA right after Alice and Bob exchange their messages but have not started working on computing the key.
The tuple specifies Alice's random coins and queries, similarly for Bob, communication $m_1, m_2$ and the oracle $H$ under which the protocol is executed. 
We say $(\view_A, \view_B)$ is consistent with a transcript $(m_1, m_2)$ if there exists an oracle $H$, such that $(\view_A, \view_B, m_1, m_2, H)$ has \emph{strictly positive} probability of appearing in some real execution. Similarly, $\view_A$ is consistent with $(m_1, m_2)$ if $(\view_A, m_1, m_2, H)$ has non-zero probability for some oracle $H$.

For a transcript $(m_1, m_2)$, it is always easy to find a pair of Alice's fake $\view_A' = (s'_A, T'_A)$ that is consistent with $m_1, m_2$. Since we do not care about the actual computation cost, and only the number of queries matters, we can keep sampling oracles until we see a transcript $(m_1, m_2)$. Upon receiving $\view_A'$, we hope that $(\view_A', \view_B)$ has the same distribution as $(\view_A, \view_B)$; if that is true, based on perfect completeness, any non-zero support in $(\view_A, \view_B)$ should provide us with the agreed key.

Unfortunately, this does not hold true. The overlap in inputs between $R_A$ and $R_B$ leads to a correlation in the distribution $(\view_A, \view_B)$, making it impossible to sample independently. This correlation is the resource for Alice and Bob to compute an agreed key, akin to Merkle Puzzles.

The key insight from \cite{impagliazzo1989limits,barak2009merkle} is to eliminate this correlation, often referred to as ``intersection queries''. In their attacks, Eve queries the oracle in a manner such that $R_E$ encompasses all the shared knowledge between Alice and Bob. More precisely, for any execution $R_A, R_B$ consistent with $(m_1, m_2)$, it holds that $R_A \cap R_B \subseteq R_E$. Consequently, conditioned on $R_E$ and the transcript, the distribution $(\view_A, \view_B)$ becomes close to a product distribution. Eve can then sample $\view_A'$ conditioned on $m_1, m_2, R_E$, and this fake view will yield the correct key.

Finally, \cite{impagliazzo1989limits,barak2009merkle} demonstrate that as long as Eve queries Alice and Bob's ``heavy queries'' (those with a relatively noticeable probability of being queried), with high probability $R_E$ will contain all intersection queries.

\subsection{A Quantization Attempt by Austrin~et al.}

\cite{austrin2022impossibility} focuses on the fully general QKA, but here we explain their ideas for the Merkle-like protocols. The first challenge arises when attempting to formally describe an execution for both quantum Alice and Bob. As they can make quantum queries, neither the random coins $s_A$ nor the query list $R_A$ can be explicitly delineated. A quantum algorithm can possess randomness that is impossible to be purified as random coins, and can make quantum superposition queries. An execution right after Alice and Bob exchange messages, is represented by $(\view_A, \view_B, m_1, m_2, H)$ where $\view_A := \rho_A, \view_B := \rho_B$ are the internal quantum states of Alice and Bob.

To quantize the strategy of \cite{impagliazzo1989limits,barak2009merkle}, one needs to define ``heavy quantum queries'' and ``quantum intersection queries'', and establish some form of independence between Alice and Bob conditioned on transcripts and Eve's knowledge. Austrin~et al., leveraged the breakthrough technique (``compressed oracle'') by Zhandry~\cite{zhandry2019record}, defined ``heavy quantum queries''. On a very high level (without introducing Zhandry's technique), ``heavy quantum queries'' are classical inputs that have high weights on the oracle, when the oracle is examined under the Fourier basis. 
Their proposed attack queries all ``heavy quantum queries'' classically. Under their polynomial compatibility conjecture, they can argue the success of their attacks. 

This approach is less ideal in the following aspects. First, even for Merkle-like protocols, the polynomial compatibility conjecture seems necessary. Second, the classical-query Eve has to somehow ``simulate'' the ability of quantum Alice or quantum Bob. Although such simulation is believed to be true for decision problems~\cite{aaronson2009need}, this kind of simulation in the QKA setting is both unclear to hold and potentially unnecessary. Finally, establishing independence between two quantum states (quantum Alice and Bob) is challenging to define and can be intricate.

\subsection{Step Back --- Classical and Quantum Proofs Using Markov Chain}

Stepping back, let's reconsider if there are other approaches that are more quantum-friendly: potentially can take advantage of quantum-query Eve. The key insight in \cite{impagliazzo1989limits,barak2009merkle} is, when conditioned on some query list $R_E$ and $(m_1, m_2)$, the distribution $(\view_A, \view_B)$ is a product distribution, meaning Alice and Bob are independent. Based on this, Eve can therefore sample $\view_A'$ and they together with the real $\view_B$ have the same distribution as $(\view_A, \view_B)$. 

Our first contribution is to give an alternative view of the classical proofs for Merkle-like KA.
We realize that, in the classical proof, when intersection queries $R_A \cap R_B$ always is in $R_E$, the conditional mutual information (CMI) between $R_A, R_B$ conditioned on $E, m_1, m_2$ is 0. 
\begin{align*}
    I(\view_A: \view_B| (m_1, m_2, R_E)) = 0.
\end{align*}
There are two seemingly classically equivalent consequences when CMI is 0. 
\begin{itemize}
    \item \textbf{Perspective 1}. From Pinsker's inequality, a CMI of 0 immediately implies that $\view_A$ and $\view_B$ are independent conditioned on $(m_1, m_2, R_E)$. This further implies that we can sample $\view_A'$ accordingly.

    \item \textbf{Perspective 2}. Another perspective is that, $\view_B \rightarrow (m_1, m_2, R_E) \rightarrow \view_A$ forms a Markov chain. 

    Three random variables $XYZ$ form a Markov chain if $p_{xyz} = p_{xy} p_{z|y}$. In our case, it says there exists a way to take $(m_1, m_2, R_E)$ as inputs and sample $\view'_A$ such that $(\view'_A, \view_B)$ and $(\view_A, \view_B)$ are identically distributed. 
\end{itemize}
The above discussion explains the attacks for Merkle-like protocols, through the lens of CMI. 

\vspace{1em}

From this, we propose one candidate quantum attack for Merkle-like QKA: 
\begin{mdframed}
    \textbf{Quantum Eve: A Framework} 
    \begin{itemize}
    \item Eve makes some quantum queries and let $\view_E := \rho_E$ be its internal quantum state. It ``somehow'' makes sure that the CMI between Alice and Bob, conditioned on Eve and the transcript is small enough. 
    \begin{align*}
        I(\view_A:\view_B\mid m_1, m_2,\view_E) < \epsilon.
    \end{align*}
    \item Use ``some quantum analogy'' of 1 or 2 above to produce a quantum state $\view'_A:=\rho_{A'}$ from $\rho_E$ such that, 
    \begin{align*}
        \view'_A \view_B \approx_{\poly(\epsilon)} \view_A \view_B. 
    \end{align*}
    \end{itemize}
\end{mdframed}
There are two questions remain to be answered:
\begin{center}
\it Does there exist a query-efficient Eve's strategy that always makes CMI small? \\

Does there exist a quantum analogy of step 1 or 2?
\end{center}

We answer both of the questions affirmatively. 

\paragraph{Decreasing CMI.}
Assume Alice and Bob each makes at most $d$ queries and $H:[2^n] \to \{0,1\}$ be any oracle of domain $[2^n]$ and binary range. We show that, 
\begin{lemma}[Informal] \label{lem:inf}
    For any standard two-round QKA right after Alice receives $m_2$, Eve can run the same Bob $t$ times for some $t \in \{0, 1, \ldots, 2 d n / \epsilon\}$ and store all $t$ copies of Bob's internal states, such that the CMI between Alice and Bob conditioned on Eve's register and $m_1, m_2$ is at most $\epsilon$. 
\end{lemma}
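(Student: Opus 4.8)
The plan is to combine a ``one-out-of-many'' averaging argument for quantum conditional mutual information with Zhandry's compressed-oracle picture~\cite{zhandry2019record}, so that the only resource coupling Alice and Bob --- the oracle values on their intersection queries --- gets exhausted once Bob has been re-run enough times. First I would pass to the purified random oracle in its compressed form, in which Alice's at most $d$ queries leave her register $\view_A$ entangled only with a database register $\mathsf{D}$ holding at most $d$ input--output pairs over domain $[2^n]$; thus $H(\mathsf{D}) \le dn$ up to lower-order terms, and so $I(\view_A : \mathsf{D} \mid m_1) \le 2dn$.

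\noindent\textbf{Main steps.} Fix the global state right after Alice outputs $(m_1, \view_A)$. Have Eve produce $T+1$ copies of $m_1$ --- legitimate since $m_1$ is classical in \Cref{thm:main_theorem1}, or pure/efficiently clonable in \Cref{thm:main_theorem2} --- and run Bob's algorithm on each copy, obtaining internal states $\view_B^{(0)}, \view_B^{(1)}, \dots, \view_B^{(T)}$, where run $0$ is the ``real'' Bob whose message $m_2$ Alice receives and each $\view_B^{(i)}$ records the message it sent. Three ingredients finish it. (i) \emph{Exchangeability:} in the un-compressed purified-oracle view every query is controlled on the oracle register in its computational basis, so the unitaries implementing distinct Bob runs act on disjoint workspaces, commute, and never touch $\view_A$; hence the joint state of $(\view_A, \view_B^{(0)}, \dots, \view_B^{(T)})$ is invariant under permuting the run indices, a symmetry preserved by the fixed basis change to the compressed oracle. (ii) \emph{Chain rule:} $\sum_{t=0}^{T} I(\view_A : \view_B^{(t)} \mid m_1, \view_B^{(0)}, \dots, \view_B^{(t-1)}) = I(\view_A : \view_B^{(0)} \cdots \view_B^{(T)} \mid m_1)$. (iii) \emph{Data processing:} since $\view_B^{(0)}, \dots, \view_B^{(T)}$ are produced by a channel acting only on $\mathsf{D}$ and copies of $m_1$ (never on $\view_A$), the right-hand side of (ii) is at most $I(\view_A : \mathsf{D} \mid m_1) \le 2dn$. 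Averaging the $T+1$ terms in (ii) with $T+1 = 2dn/\epsilon$ gives an index $t^\star$ with $I(\view_A : \view_B^{(t^\star)} \mid m_1, \view_B^{(0)}, \dots, \view_B^{(t^\star-1)}) \le \epsilon$; by the exchangeability of (i) this equals $I(\view_A : \view_B^{(0)} \mid m_1, \view_B^{(1)}, \dots, \view_B^{(t^\star)})$, so setting $\view_E := (\view_B^{(1)}, \dots, \view_B^{(t^\star)})$ and noting that $\view_B^{(0)}$ records $m_2$, we get $I(\view_A : \view_B \mid m_1, m_2, \view_E) \le \epsilon$ with $t := t^\star \in \{0, 1, \dots, 2dn/\epsilon\}$ re-runs of Bob, as claimed.

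\noindent\textbf{Hardest part.} I expect the main obstacle to be making (i) rigorous: formalizing that re-running Bob on copies of $m_1$ yields a state that is permutation-symmetric in the Bob-registers \emph{jointly} with Alice's register. This rests on the commuting/block-diagonal structure of oracle calls in the purified picture and, for \Cref{thm:main_theorem2}, on ``efficiently clonable'' genuinely letting Eve hand every Bob run an identical input so that the symmetry is exact. Secondary bookkeeping points are nailing down the constant in $H(\mathsf{D}) \le dn$ from the compressed-database encoding and checking that the conditional data-processing step survives when $m_1$ (and its clones) is a quantum register. The remaining ingredients --- the quantum chain rule, monotonicity of conditional mutual information under local channels, and the final averaging --- are standard.
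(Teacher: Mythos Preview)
Your plan is correct, and the chain-rule/averaging/exchangeability core is exactly the paper's permutation-invariance lemma (\Cref{lemma:permuation-invariance}). The detour through the compressed oracle, however, is unnecessary and costs you generality.

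Where you upper-bound $I(\view_A : \view_B^{(0)} \cdots \view_B^{(T)} \mid m_1)$ by data-processing down to $I(\view_A : \mathsf{D} \mid m_1)\le 2dn$ via the compressed-database dimension, the paper takes a shorter, oracle-agnostic route. The global state is a mixture over classical oracle values $H$ and hence fully separable, so conditional entropies are nonnegative (\Cref{fact:separable_basic}); thus
\[
I\bigl(\view_B^{(0)}\cdots\view_B^{(T)} : \view_A \,\big|\, m_1\bigr)
= S(\view_A\mid m_1) - S\bigl(\view_A\mid m_1,\view_B^{(0)}\cdots\view_B^{(T)}\bigr) \le S(\view_A).
\]
Then $S(\view_A)$ is bounded by modeling each oracle call as at most $n{+}1$ qubits of two-way interaction with an external system; subadditivity gives $S(\view_A)\le 2d(n{+}1)$ starting from a pure state. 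No compressed oracle, no data-processing step through $\mathsf D$, and --- crucially --- the argument works for \emph{any} oracle distribution, which is exactly what the paper stresses right after stating the lemma (``this strategy works for any oracles'') and what is needed later for \Cref{thm:main_theorem3} and \Cref{thm:non-interactive}. Your route, by contrast, is tied to the random oracle.

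On your ``hardest part'': joint exchangeability of the Bob runs with $\view_A$ is immediate without any purified-oracle or commutativity reasoning. For each fixed classical $H$, the Bob runs are i.i.d.\ (same algorithm, same classical input $m_1$, fresh workspace) and Alice's state is a fixed tensor factor, so the joint state is trivially permutation-invariant in the Bob indices; averaging over $H$ preserves this. Your commuting-unitaries argument is valid but overkill. Likewise, the worry about pinning down the constant in $H(\mathsf D)\le dn$ simply disappears once you drop the compressed-oracle route.
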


A couple of things we clarify here. First, why don't we set $t := 2 d n / \epsilon$ (the largest value)? This is due to the nature of quantum conditional mutual information. Classically, when we condition on more classical queries, the CMI will never increase. However, quantum entropy and mutual information behave unlike their classical counterparts. Still, we are able to show the existence of such small $t$. The existence of such $t$ will not make our Eve non-uniform, as $t$ is (inefficiently) computable without making any oracle query.\footnote{One can also guess a uniform $t$ if we do not require finding the key with probability close to $1$.}

Second, this strategy works for any oracles. Third, the lemma does not distinguish between whether $m_1$ is classical or quantum. Even if Bob takes a quantum input, as long as we have access to $t$ copies of the same quantum state, we can run Bob with the same pure state $t$ times and thus the lemma still holds. This fact will be useful for the case of quantum public keys.

Finally, the lemma works for any two-round QKA\footnote{{It is a general lemma that works for any two-classical-message quantum interactive protocol. We focus on its application in QKA in this paper.}}. 
As Alice and Bob are asymmetric for general protocols, we do not know how to only simulate Alice's queries and make the CMI small in the general two-round QKA case. We will mention it after finishing the discussion on Merkle-like QKA. 

\paragraph{Sampling Fake Alice.}
When the CMI is small, we need to sample a fake Alice. If the CMI is $0$, Hayden et al.~\cite{hayden2004structure} showed an approach that can be viewed as a quantum analogy of Perspective 1. However, their approach only works for the case of CMI being exactly $0$ and is not robust.

We realize the second interpretation works much better. The work by Fawzi and Renner showed that
\begin{lemma}[Approximate Quantum Markov Chain, \cite{fawzi2015quantum}]\label{thm:qmarkov}
    Let $X, Y, Z$ be three quantum registers, and $\rho_{XYZ}$ be the state. If $I(X:Z|Y) < \epsilon$, then there exists a channel $\mathcal{T}: Y \to Y'Z'$ such that 
    \begin{align*}
        \left| \rho_{XZ} - \sigma_{XZ'} \right|_{\sf Tr} \leq \left| \rho_{XYZ} - \sigma_{XY'Z'} \right|_{\sf Tr} \leq O(\sqrt{\epsilon}),
    \end{align*}
    where $\sigma_{XY'Z'}$ is the state from applying $\mathcal{T}$ on the $Y$ register of $\rho_{XY}$. Furthermore, $\mathcal{T}$ is explicitly (and inefficiently) constructible if knowing the state $\rho_{XYZ}$. 
\end{lemma}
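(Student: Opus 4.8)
The plan is to reduce the trace-distance statement to a \emph{fidelity-of-recovery} bound and then prove the latter by complex interpolation around the Petz recovery map. Write $F(\rho,\sigma)=\|\sqrt\rho\sqrt\sigma\|_1$ for the fidelity. By the Fuchs--van de Graaf inequalities one has $1-F(\rho,\sigma)\le\tfrac12\,|\rho-\sigma|_{\sf Tr}\le\sqrt{1-F(\rho,\sigma)^2}$, and by monotonicity of the trace norm under partial trace $|\rho_{XZ}-\sigma_{XZ'}|_{\sf Tr}\le|\rho_{XYZ}-\sigma_{XY'Z'}|_{\sf Tr}$. So it suffices to construct a CPTP map $\mathcal{T}\colon Y\to Y'Z'$ with
\begin{align*}
  -2\log F\big(\rho_{XYZ},\ \mathcal{T}(\rho_{XY})\big)\ \le\ I(X:Z\mid Y)_\rho\ =\ \epsilon .
\end{align*}
Indeed this gives $1-F\le-\log F\le \epsilon/2$, hence $1-F^2\le 2(1-F)\le\epsilon$, hence $|\rho_{XYZ}-\sigma_{XY'Z'}|_{\sf Tr}\le 2\sqrt{1-F^2}\le 2\sqrt\epsilon=O(\sqrt\epsilon)$, with $\sigma_{XY'Z'}=\mathcal{T}(\rho_{XY})$. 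Note $\mathcal{T}$ will depend only on $\rho_{YZ}$ (not on the $X$ part), and its output is written on $Y'Z'$ to stress that it re-creates both a $Y$-like and a $Z$-like register.

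\textbf{Choice of $\mathcal{T}$.} Recall Petz's structure theorem for equality in strong subadditivity: when $I(X:Z\mid Y)_\rho=0$, the map $\mathcal{P}_{Y\to YZ}(\omega)=\rho_{YZ}^{1/2}\big(\rho_Y^{-1/2}\,\omega\,\rho_Y^{-1/2}\otimes\identity_Z\big)\rho_{YZ}^{1/2}$, which is determined by $\rho_{YZ}$ alone, satisfies $\mathcal{P}_{Y\to YZ}(\rho_{XY})=\rho_{XYZ}$ exactly. For the approximate case I would use a \emph{rotated average} of Petz maps: for $t\in\mathbb{R}$ define $\mathcal{P}^{(t)}_{Y\to YZ}(\omega)=\rho_{YZ}^{(1+it)/2}\big(\rho_Y^{-(1+it)/2}\,\omega\,\rho_Y^{-(1-it)/2}\otimes\identity_Z\big)\rho_{YZ}^{(1-it)/2}$ (the imaginary phases are unitaries, so each $\mathcal{P}^{(t)}$ is CPTP on the supports), and set $\mathcal{T}=\int_{-\infty}^{\infty}\beta(t)\,\mathcal{P}^{(t)}\,dt$ with the Fawzi--Renner weight $\beta(t)=\tfrac{\pi}{2}\big(\cosh(\pi t)+1\big)^{-1}$. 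A small regularization $\rho\mapsto(1-\delta)\rho+\delta\tau$ by a full-rank state removes the operator inverses and is sent to $0$ at the very end; $\mathcal{T}$ is thus explicit given $\rho_{YZ}$ (computing it requires diagonalizing $\rho_Y,\rho_{YZ}$, hence ``inefficient'').

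\textbf{The fidelity bound via interpolation (main obstacle).} It is convenient to obtain the target inequality from the \emph{strengthened monotonicity of relative entropy}: applied to $\mathcal{N}=\Tr_Z\colon XYZ\to XY$ and the (unnormalized) operator $\sigma=\identity_X\otimes\rho_{YZ}$ one has $D(\rho_{XYZ}\|\identity_X\otimes\rho_{YZ})-D(\rho_{XY}\|\identity_X\otimes\rho_Y)=I(X:Z\mid Y)_\rho$ and the Petz recovery map for this pair is exactly $\mathcal{P}_{Y\to YZ}$ above, so the desired $-2\log F(\rho_{XYZ},\mathcal{T}(\rho_{XY}))\le I(X:Z\mid Y)_\rho$ is the universal-recovery form of strengthened monotonicity. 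I would then prove strengthened monotonicity by a Hadamard three-lines argument: on the strip $\{z:0\le\operatorname{Re}z\le1\}$ build an operator-valued holomorphic function $G(z)$ by interleaving $\rho_{XYZ}^{(1-z)/2}$, $\rho_{YZ}^{z/2}$, $\rho_Y^{-z/2}$ and $\rho_{XY}^{1/2}$ in the appropriate order; on the line $\operatorname{Re}z=0$ the relevant Schatten norm of $G(it)$ is bounded by $1$ (up to regularization), while on $\operatorname{Re}z=1$ it reproduces $\exp\!\big(\tfrac{\alpha-1}{2\alpha}\,\widetilde D_\alpha(\rho_{XYZ}\,\|\,\mathcal{P}^{(t)}(\rho_{XY}))\big)$-type quantities (sandwiched R\'enyi divergences of the recovered state). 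Three-lines, followed by integrating against $\beta(t)$ and pulling the average inside the fidelity by joint concavity, yields $\widetilde D_\alpha\big(\rho_{XYZ}\,\|\,\mathcal{T}(\rho_{XY})\big)\le I(X:Z\mid Y)_\rho$ for $\alpha\in(1/2,1)$; letting $\alpha\to 1/2$ and using the identity $\widetilde D_{1/2}(\rho\|\sigma)=-2\log F(\rho,\sigma)$ gives the claim, after which $\delta\to0$ and a compactness/continuity argument extract the final $\mathcal{T}$. The hard part is precisely this interpolation step — choosing the interleaving so that both boundary estimates hold, handling non-commutativity and the singular supports via the $\delta$-regularization, and controlling the $\alpha\to1/2$ limit of the rotated-Petz sandwiched R\'enyi divergences; the original Fawzi--Renner argument reaches the same conclusion slightly differently, bounding $I(X:Z\mid Y)$ below by a measured relative entropy of recovery through an asymptotic pinching/de Finetti reduction, and either route is acceptable.
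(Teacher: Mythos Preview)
The paper does not prove this lemma at all: it is imported verbatim from Fawzi and Renner~\cite{fawzi2015quantum} as a black-box tool (it appears once in the technical overview as Lemma~\ref{thm:qmarkov} and is restated in the preliminaries as Theorem~\ref{thm:quantum_mutual_info_operational}, both times without proof). So there is no ``paper's own proof'' to compare against; the authors simply cite the result and use it.

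Your sketch is nonetheless a reasonable and essentially correct plan for proving the cited result. Two remarks. First, the specific route you outline --- rotated Petz maps averaged against the weight $\beta(t)=\tfrac{\pi}{2}(\cosh(\pi t)+1)^{-1}$ together with a Hadamard three-lines/Stein--Hirschman interpolation on the strip --- is really the argument of the follow-up papers (Wilde; Junge--Renner--Sutter--Wilde--Winter; Sutter--Berta--Tomamichel), which give the explicit universal recovery map and the cleaner inequality $-2\log F(\rho_{XYZ},\mathcal{T}(\rho_{XY}))\le I(X{:}Z\mid Y)$. The original Fawzi--Renner proof is somewhat different: it first handles classical $Y$, then lifts to quantum $Y$ via a de~Finetti-type reduction and an optimization over recovery maps, and does not directly yield the rotated-Petz formula. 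Either route delivers exactly the statement the paper quotes. Second, your derivation of the $O(\sqrt{\epsilon})$ trace-distance bound from the fidelity inequality is correct up to base-of-logarithm constants (e.g.\ $1-F\le -\ln F$ holds, so if $\log$ is base~$2$ you pick up a harmless $\ln 2$), which is consistent with the $\sqrt{\ln 2\cdot I(X{:}Z\mid Y)}$ constant quoted in Theorem~\ref{thm:quantum_mutual_info_operational}.
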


On a high level, \Cref{thm:qmarkov} states that if the CMI $I(X:Z|Y)$ for a tripartite state $\rho_{XYZ}$ is small enough, then we can apply a local channel on $Y$ to generate a state close to the original state in $XY$. This directly provides us a way to sample a fake Alice. In our case, $X$ is  the view of Bob, $Z$ is the view of Alice and $Y$ is the view of Eve and the transcript. Thus, the whole density matrix of $\rho_{XYZ}$ is known by Eve\footnote{$\rho_{XYZ}$ in this case, is not the state under a particular oracle, but the mixed state averaged over the distribution of all oracles. Only in this case, Eve knows $\rho_{XYZ}$.}. By applying the lemma, Eve can sample $Z' := \view'_A$  such that $\view_A' \view_B \approx_{O(\sqrt{\epsilon})} \view_A \view_B$; although $\mathcal{T}$ in this case could be inefficient, it makes no queries. This completes the second step of our Eve. 

\vspace{1em}

Combining the approach of making the CMI arbitrarily small with the quantum Markov chain for sampling a fake Alice, our Eve can attack any Merkle-like QKA in any oracle model. We remark that for a general two-round QKA, \Cref{lem:inf} and \Cref{thm:qmarkov} only cover the part up to receiving the last message $m_2$. In the next subsection, we will discuss how to handle additional queries made after the last message.

\subsection{Ruling out QPKE with Classical-Query Key Generation}

The exact same idea from the previous section also applies to the standard two-round QKA as defined in \Cref{def:informal_QKA}, particularly when $\As_2$ (or the decryption algorithm) makes no queries to the oracle. In this scenario, we let $\view_A$ and $\view_B$ represent the views of Alice and Bob right after $m_1$ is received by Alice, and she has not yet begun working on producing the key. When we sample $\view_A' \view_B \approx \view_A \view_B$, since $\As_2$ only applies a local unitary that is independent of the oracle, Eve can perform the same on the fake view and still successfully obtain the key. However, what if $\As_2$ makes queries?

\medskip

It is not immediately clear whether our CMI-based method works when $\As_1$, $\As_2$, and $\Bs$ all make classical queries only, as in the case of classical KA instead of QKA. 
There are two attempts, that one might immediately come out. 
\begin{itemize}
    \item Attempt 1: run $\As_2$ on the fake $\view_A'$ using any oracle that is compatible with $\view_A'$;
    \item Attempt 2: run $\As_2$ on the fake $\view_A'$ using the real  oracle.
\end{itemize}

Unfortunately, both approaches fail. Consider the following two classical examples for KA. 
\begin{itemize}
    \item Example 1: $\As_1$ does not query and does not send messages, $\Bs$ sends a random $m_2 = x$ and both Alice and Bob agree on $H(x)$. 

    It is easy to see that, even if Eve does not query, $I(\view_A:\view_B|m_2) = 0$ as $\view_A$ is empty. Thus, any oracle $H'$ is consistent with a fake $\view_A'$ but with overwhelming probability, $H'(x) \ne H(x)$. 

    \item Example 2: $\As_1$ queries $H(0)$ and stores it as $y$, but does not send messages to $\Bs$; $\Bs$ sends a random $m_2 = x$ that is not equal to $0$; $\As_2$ queries $H(0)$ again and aborts if $y \ne H(0)$, otherwise the key will be $H(x)$. 

    In an honest execution, Alice and Bob will always agree on a key $H(x)$ for some $x \ne 0$ as $y$ is always equal to $H(0)$. 

    We claim that $I(\view_A:\view_B|m_2) = 0$ as $\view_A$ and $\view_B$ have no intersection queries. Let $\view'_A$ be the fake view that consists of some $y'$. 
    A real oracle with overwhelming probability has $H(x) \ne y'$, which fails the second attempt. 
\end{itemize}

Our solution is to combine both Attempt 1 and Attempt 2.

\paragraph{Solution for Simulating $\As_2$ in the Classical Case.}

Our solution provides an alternative proof for the classical impossibility "PKE is not in Minicrypt" using our CMI-based framework and our solution for simulating $\As_2$ below.

When $\As_1$, $\As_2$, and $\Bs$ are all classical-query algorithms, we propose the following method to execute a fake $\view_A' = (s_A', R_A')$. Here, $s_A'$ represents Alice's random coins, and $R_A'$ represents the query list (a list of input-output pairs).
\begin{itemize}
    \item Run $\As_2$ on the fake $\view_A'$ using the real oracle, except for every $x \in R_A'$, respond with the corresponding image $y \in R_A'$. 
\end{itemize}
In other words, we adjust the real oracle such that it is consistent with $R_A'$; let's denote this modified oracle as $H'$.

Why it works? Assume we sample $\view_A'$ such that $\view_A' \view_B$ has the same distribution as the real views for Alice and Bob. 
The fake view $\view_A'$ together with the real $\view_B$ must be reachable under some oracle (not necessarily the real oracle). Therefore, $R_A'$ and $R_B$ must be consistent. We also know that, since $\view_B$ is the real view of Bob under the real oracle $H$, $R_B$ and $H$ must also be consistent. 

Thus, changing the oracle to be consistent with $R_A'$ will only alter its behavior on those $x \notin R_B$. That means, under oracle $H'$, this KA will still have a strictly positive probability to end up with $\view'_A, \view_B, m_1, m_2$. By perfect completeness, when running $\As_2$ on $\view'_A$ with oracle $H'$, we must obtain the key held by Bob. 

\paragraph{Extending to QPKE with Classical-Query Key Generation.}
For QKA with $\As_1$ making only classical queries (or QPKE with the key generation making classical queries), we can still run our Eve algorithm such that $I(\view_A:\view_B|(m_0,m_1,E)) < \epsilon$. 
However, since $\Bs$ is now quantum, $\view_B$ is some quantum state $\rho_B$ and its query list $R_B$ is no longer defined. 

We repeat our strategy again: sample $\view_A'$ and run $\As_2$ on $\view_A'$ with oracle $H'$ as defined above (mostly the real oracle, but made consistent with $R_A'$). 
If Bob has low query weights on $R_A'$, then our attack still works. If the total query weight of Bob on $R_A'$ is $0$, changing the real oracle to $H'$ will not change $\view_B$ at all. Similarly, if the weight is small, changing the real oracle to $H'$ will only change $\view_B$ by a small amount~\cite{BBBV97}. Thus, we can still argue that $\view_A', \view_B, m_1, m_2$ are reachable under $H'$ and with perfect completeness, we must recover the key. 

What if Bob has a large query weight on some $x \in R_A'$? We imagine a hypothetical Bob, who will first produce $\view_B$, but then keep running itself from the beginning multiple times and randomly measure one of its queries. By doing so, the hypothetical Bob's functionality does not change; the advantage is now the hypothetical Bob has a \emph{classical list} $L_B$ that consists of all its queries with high weights, with a high probability.

If we run our Eve with this hypothetical Bob, $\view_A'$ must have $R_A'$ consistent with $L_B$. Thus, changing the oracle to be consistent with $R_A'$ will  only change its behavior on those $x \not\in L_B$, or in other words, those $x$ that do not have a large query weight! Then we can use the argument in the previous paragraphs and claim that $\view_A', \view_B:=(\rho_B, L_B), m_1, m_2$ are reachable under $H'$ and with perfect completeness, we can get the key. 

\paragraph{Handling Quantum Public Keys}

Quantum public keys (or quantum $m_1$) are handled without additional efforts using our CMI-based Eve. This further demonstrates the versatility and power of our new framework; previous approaches based on classical-query Eve do not extend to this case. 

When $m_1$ is quantum, the challenge arises in how Eve can effectively run multiple copies of Bob on quantum input $m_1$. In the context of QPKE, the attacker needs access to multiple copies of the quantum $m_1$. If $m_1$ is a pure state, it becomes evident that running the same Bob on the same pure state $m_1$ is feasible, allowing us to minimize the CMI accordingly. Alternatively, if a query-efficient perfect cloner for $m_1$ exists, multiple runs of the same Bob become possible.

Given that the subsequent analysis relies solely on the capability to run Bob multiple times on input $m_1$, we can extend our conclusion to rule out perfect complete QPKE with quantum public keys and classical-query key generation under the condition that the quantum public keys are either pure or query-efficiently clonable.

\section{Preliminaries}

We refer reader to \cite{nielsen2002quantum} for more details about quantum computing and quantum information. Below, we mention some backgrounds that are heavily used in this work. 

\subsection{Distance measures}
Let us recall the definition of total variation distance and trace distance.

\begin{definition}[Total variation distance]
    For two probabilistic distributions $D_X, D_Y$ over the same finite domain $\mathcal{X}$, we define its total variation distance as
    \begin{align*}
        TV(D_X,D_Y)=\frac{1}{2}\sum_{x\in\mathcal{X}}|D_X(x)-D_Y(x)|.
    \end{align*}
\end{definition}

\begin{definition}[Trace distance]
    For two quantum states $\rho,\sigma$, the trace distance between the two states is
    \begin{align*}
        TD(\rho,\sigma)=\frac{1}{2}\Tr\left[\sqrt{(\rho-\sigma)^\dagger(\rho-\sigma)}\right]=\sup_{0\leq\Lambda\leq I}\Tr[\Lambda(\rho-\sigma)].
    \end{align*}
\end{definition}
\subsection{Quantum Oracle Model and Random Oracle}
A quantum oracle algorithm equipped with access to $H\colon [2^{n_{\lambda}}]\to [2^{m_{\lambda}}]$ is expressed as a series of unitaries: $U_1$, $U_H$, $U_2$, $U_H$, $\cdots$, $U_T$, $U_H$, $U_{T+1}$. Here, $U_i$ denotes a local unitary acting on the algorithm's internal register. Oracle access to $H$ is defined by a unitary transformation $U_H$, where $\ket{x, y}$ is transformed to $\ket{x, y + H(x)}$. For an oracle algorithm $\As$, we would use $\As^H$ to denote the algorithm $\As$ has classical access to the oracle $H$, and $\As^{\ket{H}}$ to denote $\As$ has quantum access to the oracle.

We would also consider the case when the oracle $H\colon [2^{n_{\lambda}}]\to\{0,1\}$ is sampled from some distribution of oracles $\Hs_{\lambda}$. We would call some primitive in the quantum random oracle model(QROM) if the distribution $\Hs_{\lambda}$ is uniformly random over all possible oracles.

\subsection{Entropy}
\begin{definition}[Von Neumann Entropy]
    Let $\rho \in \mathbb{C}^{2^n}$ be a quantum state describing a system $\A$. Let $\ket {\phi_1}, \ket {\phi_2}, \cdots, \ket {\phi_{2^n}}$ be the eigenbasis of $\rho$; $\rho$ is written in this eigenbasis as $\sum_i \eta_i \ket {\phi_i} \bra {\phi_i}$. 
    
    Then its Von Neumann Entropy is denoted by $S(\rho)$ (or $S(\A)_\rho$), 
    \begin{align*}
        S(\A)_\rho = S(\rho) = - \sum_i \eta_i \log(\eta_i). 
    \end{align*}

    Given a composite quantum system $\A\B$ having joint state $\rho_{\A\B}$, we define the conditional Von Neumann Entropy as $S(\A|\B)_\rho$, 
    \begin{align*}
        S(\A|\B)_\rho = S(\A\B)_\rho - S(\B)_\rho.
    \end{align*}
\end{definition}

\noindent Below, we often omit $\rho$ in the definition when the quantum state is clear in the context. For example, $S(\A)$ and $I(\A : \B)$ instead of $S(\A)_\rho$ and $I(\A:\B)_\rho$. 

\begin{fact}[\cite{nielsen2002quantum}]
    \label{fact:qc_entropy}
    Suppose $p_k$ are probabilities, $\ket{k}$ are orthogonal basis of a system $\A$ and $\rho_k$ are quantum states for another system $\B$. Then
    \[
      S\left(\sum_{k} p_k\ket{k}\bra{k}\otimes \rho_k\right)=H(p_k)+\sum_k
      p_k S(\rho_k)
    \]
    where $H(p_k)$ is the Shannon entropy of distribution
    $p_k$.
\end{fact}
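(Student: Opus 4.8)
The plan is to reduce the statement to an explicit computation of the spectrum of the classical-quantum state. First I would diagonalize each $\rho_k$ in its own eigenbasis, writing $\rho_k=\sum_j \lambda_{k,j}\ket{\psi_{k,j}}\bra{\psi_{k,j}}$ with $\{\ket{\psi_{k,j}}\}_j$ an orthonormal basis of system $\B$ and $\sum_j\lambda_{k,j}=1$ (since $\rho_k$ is a state). Substituting into $\rho:=\sum_k p_k\ket{k}\bra{k}\otimes\rho_k$ yields
\[
  \rho=\sum_{k,j}p_k\lambda_{k,j}\,\ket{k}\bra{k}\otimes\ket{\psi_{k,j}}\bra{\psi_{k,j}}.
\]

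The only real point of the argument is the observation that the vectors $\{\ket{k}\otimes\ket{\psi_{k,j}}\}_{k,j}$ are orthonormal: for $k\neq k'$ orthogonality is forced by the $\A$-register, since $\langle k|k'\rangle=0$, and for a fixed $k$ it comes from $\{\ket{\psi_{k,j}}\}_j$ being an eigenbasis of $\rho_k$. In particular it is irrelevant whether $\ket{\psi_{k,j}}$ and $\ket{\psi_{k',j'}}$ happen to coincide as states of $\B$ — the classical labels already separate them. Hence the displayed expression is \emph{already} a spectral decomposition of $\rho$, with eigenvalues exactly $\{p_k\lambda_{k,j}\}_{k,j}$.

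It then remains only to expand the von Neumann entropy over this spectrum:
\begin{align*}
  S(\rho)&=-\sum_{k,j}p_k\lambda_{k,j}\log(p_k\lambda_{k,j})\\
  &=-\sum_{k,j}p_k\lambda_{k,j}\log p_k\;-\;\sum_k p_k\sum_j\lambda_{k,j}\log\lambda_{k,j}\\
  &=-\sum_k p_k\log p_k\;+\;\sum_k p_k\,S(\rho_k),
\end{align*}
where the last line uses $\sum_j\lambda_{k,j}=1$ for the first sum and the definition $S(\rho_k)=-\sum_j\lambda_{k,j}\log\lambda_{k,j}$ for the second, and then one recognizes $-\sum_k p_k\log p_k=H(p_k)$. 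The only care needed is the standard convention $0\log 0=0$, under which the $p_k=0$ (or $\lambda_{k,j}=0$) terms drop out and all manipulations above are valid term by term. There is no genuine obstacle here: the statement is elementary once one notices that the block structure of a classical-quantum state makes its eigendecomposition completely transparent.
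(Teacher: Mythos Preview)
Your proof is correct and is essentially the standard textbook argument (indeed this is the proof in Nielsen--Chuang, which the paper cites). The paper itself does not give a proof of this fact; it is stated as a preliminary with a citation, so there is nothing to compare against.
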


\begin{fact}[\cite{wilde2011classical}] \label{fact:separable_basic}
    If $\rho_{\A\B}$ is a separable state, then $S(\A|\B)\geq 0$.
\end{fact}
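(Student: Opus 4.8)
The plan is to expose the hidden classical ``which-component'' register of the separable state and then appeal to strong subadditivity. Write $\rho_{\A\B} = \sum_k p_k\, \sigma_\A^k \otimes \tau_\B^k$ for a probability distribution $\{p_k\}$ and density operators $\sigma_\A^k$ on $\A$ and $\tau_\B^k$ on $\B$. Introduce a fresh register $\mathsf{K}$ with orthonormal basis $\{\ket{k}\}$ and form the classical-quantum state
\[
  \rho_{\A\B\mathsf{K}} = \sum_k p_k\, \sigma_\A^k \otimes \tau_\B^k \otimes \ket{k}\bra{k}_{\mathsf{K}},
\]
whose marginal on $\A\B$ is exactly $\rho_{\A\B}$.

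Next I would compute $S(\A \mid \B\mathsf{K})$ directly from \Cref{fact:qc_entropy}. Since both $\rho_{\A\B\mathsf{K}}$ and $\rho_{\B\mathsf{K}}$ are classical on $\mathsf{K}$ (conditioned on $\ket{k}$ the $\A\B$-state is the product $\sigma_\A^k \otimes \tau_\B^k$, with $S(\sigma_\A^k \otimes \tau_\B^k) = S(\sigma_\A^k) + S(\tau_\B^k)$), the fact gives $S(\A\B\mathsf{K}) = H(p) + \sum_k p_k\big(S(\sigma_\A^k) + S(\tau_\B^k)\big)$ and $S(\B\mathsf{K}) = H(p) + \sum_k p_k S(\tau_\B^k)$, hence
\[
  S(\A \mid \B\mathsf{K}) = S(\A\B\mathsf{K}) - S(\B\mathsf{K}) = \sum_k p_k\, S(\sigma_\A^k) \geq 0,
\]
using only that von Neumann entropy is nonnegative.

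Finally I would invoke strong subadditivity in the ``conditioning on more reduces conditional entropy'' form, namely $S(\A \mid \B) \geq S(\A \mid \B\mathsf{K})$ — equivalently $S(\A\B) + S(\B\mathsf{K}) \geq S(\A\B\mathsf{K}) + S(\B)$ with the labeling $(\A,\B,\mathsf{K})$ — to conclude $S(\A \mid \B) \geq 0$, which is the claim. (An alternative route, which I would only mention as a remark, is concavity of the conditional entropy $\rho_{\A\B} \mapsto S(\A\mid\B)$, which for a product state equals $S(\sigma_\A)\ge 0$ and hence is nonnegative on the convex hull of product states; but concavity of conditional entropy is itself a repackaging of strong subadditivity, so it is cleaner to use the latter directly.)

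I do not expect a genuine obstacle here: the only substantive input is strong subadditivity of von Neumann entropy, which is a standard theorem I would cite rather than prove, and the rest is bookkeeping. The points to be careful about are (i) verifying that the step $S(\A\mid\B) \geq S(\A\mid\B\mathsf{K})$ is exactly strong subadditivity under the correct relabeling, and (ii) checking that \Cref{fact:qc_entropy} genuinely applies to both $\rho_{\A\B\mathsf{K}}$ and $\rho_{\B\mathsf{K}}$, i.e.\ that $\mathsf{K}$ carries an orthonormal (perfectly distinguishable) classical index in both.
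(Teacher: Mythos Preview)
Your argument is correct and is essentially the standard proof (e.g., the one in Wilde's textbook): extend by a classical flag register, compute $S(\A\mid\B\mathsf{K})=\sum_k p_k S(\sigma_\A^k)\ge 0$, and then use $I(\A:\mathsf{K}\mid\B)\ge 0$ to get $S(\A\mid\B)\ge S(\A\mid\B\mathsf{K})$. The paper does not give its own proof of this fact---it is simply cited from \cite{wilde2011classical}---so there is nothing further to compare.
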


\begin{definition}[Mutual Information]
    Let $\rho$ be a quantum state describing two joint systems $\A$ and $\B$. Then the mutual information between the system $A$ and $B$ is denoted by $I(\A:\B)$, 
    \begin{align*}
        I(\A:\B) = S(\A) + S(\B) - S(\A\B).%
    \end{align*}
\end{definition}

\begin{definition}[Conditional Mutual Information]
    Let $\rho$ be a quantum state describing three joint systems $\A$, $\B$ and $\C$. Then the conditional mutual information $I(\A:\B|\C)$, 
    \begin{align*}
        I(\A:\B|\C) &= S(\A\C) + S(\B\C) - S(\A\B\C) - S(\C). %
    \end{align*}
\end{definition}

\begin{fact}[Chain rule]\label{fact:chain_rule}
$I(\A_1,\A_2,\cdots,\A_t:\B\mid \C)=\sum_{i=1}^t I(\A_i:\B\mid \C,\A_1,\cdots,\A_{i-1})$.
\end{fact}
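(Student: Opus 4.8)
The plan is to reduce the $t$-register chain rule to the two-register case and then induct on $t$. First I would establish the base identity
\begin{align*}
I(\A_1\A_2 : \B \mid \C) = I(\A_1 : \B \mid \C) + I(\A_2 : \B \mid \C, \A_1),
\end{align*}
and then, for general $t$, apply this identity with the composite register $(\A_1, \ldots, \A_{t-1})$ playing the role of $\A_1$ and $\A_t$ playing the role of $\A_2$, obtaining
\begin{align*}
I(\A_1, \ldots, \A_t : \B \mid \C) = I(\A_1, \ldots, \A_{t-1} : \B \mid \C) + I(\A_t : \B \mid \C, \A_1, \ldots, \A_{t-1});
\end{align*}
the induction hypothesis then rewrites the first summand as $\sum_{i=1}^{t-1} I(\A_i : \B \mid \C, \A_1, \ldots, \A_{i-1})$, which completes the step.

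For the base identity I would just expand both right-hand terms with the definition of conditional mutual information in terms of von Neumann entropy. Writing
\begin{align*}
I(\A_1 : \B \mid \C) &= S(\A_1\C) + S(\B\C) - S(\A_1\B\C) - S(\C), \\
I(\A_2 : \B \mid \C, \A_1) &= S(\A_1\A_2\C) + S(\A_1\B\C) - S(\A_1\A_2\B\C) - S(\A_1\C),
\end{align*}
and adding, the terms $S(\A_1\C)$ and $S(\A_1\B\C)$ appear once with each sign and cancel, leaving $S(\A_1\A_2\C) + S(\B\C) - S(\A_1\A_2\B\C) - S(\C)$, which is exactly $I(\A_1\A_2 : \B \mid \C)$ by definition. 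This is a pure bookkeeping computation: it uses only that $S$ is a function of the reduced density operator on each named subsystem, and it invokes no inequality (in particular, strong subadditivity is not needed) and no special structure of $\rho$.

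The only point requiring a little care is the notational convention that a list of registers $\A_1, \ldots, \A_{i-1}$ is treated as a single composite system whenever it appears on the conditioning side, so that the two-register identity can be applied verbatim at each inductive step. Once this is fixed, there is no genuine obstacle: the argument is the exact quantum analogue of the classical Shannon chain rule, and the entire proof amounts to the telescoping cancellation displayed above together with a routine induction.
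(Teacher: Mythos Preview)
Your proof is correct. The paper states this chain rule as a \emph{Fact} without proof, treating it as standard background; your telescoping-plus-induction argument is exactly the usual derivation and there is nothing further to compare.
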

\begin{fact}
    Let $\A\B\C$ be a composite quantum system. When a unitary is applied on $\A$, it will not change $S(\A)$. Similarly, the local unitary on $\A$ will not change $I(\A:\B)$ or $I(\A:\B|\C)$. 
\end{fact}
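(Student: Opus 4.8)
The plan is to derive the whole statement from two elementary facts: (i) conjugating a density operator by a unitary preserves its spectrum, hence its von Neumann entropy; and (ii) the partial trace over a register is invariant under any unitary acting only on that register, i.e. $\Tr_{\A}[(U\otimes I)X(U^\dagger\otimes I)]=\Tr_{\A}[X]$, which follows by rewriting the $\A$-trace in the orthonormal basis $\{U^\dagger\ket{a}\}$. Fact (i) is immediate from the definition $S(\rho)=-\sum_i\eta_i\log\eta_i$, since this expression depends only on the eigenvalue multiset $\{\eta_i\}$ and similar matrices $U\rho U^\dagger$ and $\rho$ have the same multiset.

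First I would handle $S(\A)$: a unitary applied on $\A$ sends the reduced state to $U\rho_\A U^\dagger$ (using (ii) to see that the reduced state on $\A$ transforms this way), so $S(\A)$ is unchanged by (i). Next, for $I(\A:\B)=S(\A)+S(\B)-S(\A\B)$ under the map $\rho_{\A\B}\mapsto(U\otimes I_\B)\rho_{\A\B}(U\otimes I_\B)^\dagger$, I would track each term: $S(\A\B)$ is invariant because the full state is conjugated by the unitary $U\otimes I_\B$, so (i) applies; $S(\B)$ is invariant because by (ii) the reduced state $\rho_\B=\Tr_\A\rho_{\A\B}$ is literally unchanged; and $S(\A)$ is invariant by the previous sentence. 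Hence $I(\A:\B)$ is unchanged. The conditional case $I(\A:\B|\C)=S(\A\C)+S(\B\C)-S(\A\B\C)-S(\C)$ is identical: a local unitary on $\A$ acts as $U\otimes I$ on the groupings $\A\B\C$ and $\A\C$, so $S(\A\B\C)$ and $S(\A\C)$ are invariant by (i), while it acts as the identity on the disjoint groupings $\B\C$ and $\C$, whose reduced states are unchanged by (ii), so $S(\B\C)$ and $S(\C)$ are invariant. Therefore $I(\A:\B|\C)$ is unchanged.

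There is essentially no obstacle here; the only point deserving a sentence of care is making precise that "a local unitary on $\A$" induces $\rho\mapsto(U\otimes I)\rho(U\otimes I)^\dagger$ on every register grouping that contains $\A$ and acts as the identity on reduced states of groupings disjoint from $\A$ — both consequences of the defining property of the partial trace and of (ii). Once (i) and (ii) are recorded, the proof is a three-line bookkeeping of the terms appearing in the definitions of $I(\A:\B)$ and $I(\A:\B|\C)$.
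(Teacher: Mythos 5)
Your proposal is correct and follows the same route as the paper's one-line proof: every entropy term in the definitions of $I(\A:\B)$ and $I(\A:\B|\C)$ is invariant because conjugation by a unitary preserves the spectrum of the relevant (reduced) density matrix, with the partial-trace observation making precise which reduced states are conjugated and which are untouched. Nothing further is needed.
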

\begin{proof}
    This directly follows from the definition of $S(\A)$ and applying any unitary will not change the spectrum of a density matrix. 
\end{proof}

The strong subadditivity for (conditional) mutual information concludes that both mutual information and conditional mutual information are always non-negative. 
\begin{lemma}[Strong Subadditivity, \cite{araki1970entropy}]
\label{lem:strong_subadditivity}
    Given Hilbert spaces $\A, \B, \C$, 
    \begin{align*} 
    S(\A\C) + S(\A\B) \geq S(\A\B\C) + S(\C). 
    \end{align*}

    (The conditional form) Given Hilbert spaces $\A, \B, \C, \D$, 
    \begin{align*} 
    S(\A\C|\D) + S(\A\B|\D) \geq S(\A\B\C|\D) + S(\C|\D). 
    \end{align*}
\end{lemma}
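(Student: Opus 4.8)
The plan is to treat the unconditional inequality as a known result and to derive the conditional form from it in two lines. The unconditional statement is the strong subadditivity of von Neumann entropy, a deep fact that I would invoke directly from \cite{araki1970entropy} rather than reprove; its standard derivations go through Lieb's concavity theorem, or equivalently through the monotonicity of quantum relative entropy under partial trace, and nothing downstream in this paper gains from unpacking that argument. So the only work to do here is to obtain the conditional form as a corollary of the unconditional one.

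For the conditional form, I would expand each conditional entropy via $S(\,\cdot\mid\D)=S(\,\cdot\,\D)-S(\D)$. Both sides then acquire a term $-2\,S(\D)$, which cancels, and what remains is an inequality among ordinary (unconditioned) entropies of marginals of $\rho_{\A\B\C\D}$, namely
\begin{align*}
S(\A\C\D)+S(\B\C\D)\ \geq\ S(\A\B\C\D)+S(\C\D).
\end{align*}
This is precisely the unconditional strong subadditivity inequality applied to the four-party state, with the composite register $\C\D$ playing the role previously played by $\C$ --- equivalently, adjoining $\D$ to every marginal that already contains $\C$. Since the unconditional statement holds for arbitrary states and arbitrary groupings of the registers, the conditional form follows at once.

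The main --- and essentially only --- difficulty is the unconditional inequality itself, which I am deliberately not proving; the conditional version carries no new content, being just the unconditional statement relativized to the extra register $\D$. Once the lemma is in hand, the non-negativity of mutual information and of conditional mutual information, together with the chain rule of \Cref{fact:chain_rule}, are immediate, and these are all that the CMI-shrinking arguments in the rest of the paper require.
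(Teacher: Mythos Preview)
The paper does not prove this lemma; it is stated with a citation and used as a black box, so there is no proof to compare against. Your plan---cite the unconditional inequality and reduce the conditional form to it by expanding $S(\,\cdot\mid\D)=S(\,\cdot\,\D)-S(\D)$---is exactly the right move and goes slightly beyond what the paper records.

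There is, however, a slip in your displayed expansion, and it is entangled with a typo in the paper's own statement. Literally expanding the conditional form as written in the lemma yields
\[
S(\A\C\D)+S(\A\B\D)\ \geq\ S(\A\B\C\D)+S(\C\D),
\]
not $S(\A\C\D)+S(\B\C\D)\geq S(\A\B\C\D)+S(\C\D)$ as you wrote. In fact the paper's unconditional form is itself misstated: the final term should be $S(\A)$, not $S(\C)$. (Take $\A$ trivial and the stated inequality becomes $S(\B)\geq S(\B\C)$, which is false; and note that the one place the lemma is invoked, in the proof of \Cref{lem:cmi_op}, uses the correct form with the common register on the right.) With the corrected statement $S(\A\C)+S(\A\B)\geq S(\A\B\C)+S(\A)$, your reduction goes through cleanly by adjoining $\D$ to $\A$ rather than to $\C$: the expanded conditional form becomes
\[
S(\A\C\D)+S(\A\B\D)\ \geq\ S(\A\B\C\D)+S(\A\D),
\]
which is precisely the unconditional inequality with $\A\D$ in place of $\A$. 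So your method is sound; only the bookkeeping of which register is the ``middle'' one needs fixing.
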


\subsection{Operational Meaning of Conditional Mutual Information: Approximate Quantum Markov Chain}

Fawzi and Renner \cite{fawzi2015quantum} provided a nice  characterization of quantum states for which the
conditional mutual information is approximately zero. Intuitively, if $I(\A:\B\mid \E)$ is small, then $\B$ can be approximately reconstructed from $\E$. 
\begin{theorem}[\cite{fawzi2015quantum}, restate of Lemma \ref{thm:qmarkov}]\label{thm:quantum_mutual_info_operational}
For any state $\rho_{{\sf AEB}}$ over systems $\A\E\B$, there exists a channel ${\cal T} : {\sf E} \to {\sf E} \otimes {\sf B}'$ such that the trace distance between the reconstructed state $\sigma_{\sf A'E'B'} = {\cal T}(\rho_{\A\E})$ and the original state $\rho_{\sf AEB}$ is at most 
\begin{align*}
    \sqrt{\ln 2 \cdot I(\A:\B|\E)_\rho}.
\end{align*}
\end{theorem}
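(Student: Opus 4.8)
The statement to prove is \Cref{thm:quantum_mutual_info_operational}: for any tripartite state $\rho_{\A\E\B}$ there is a recovery channel $\mathcal{T}:\E\to\E\otimes\B'$ with
$TD\bigl(\mathcal{T}(\rho_{\A\E}),\rho_{\A\E\B}\bigr)\le\sqrt{\ln 2\cdot I(\A:\B\mid\E)_\rho}$.

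The plan is to reduce the trace-distance bound to a fidelity bound and then invoke the Fawzi--Renner fidelity lower bound for the rotated Petz recovery map; this is the cleanest route given what the excerpt already provides. Concretely, I would (i) recall the Fuchs--van de Graaf inequality $TD(\tau,\omega)\le\sqrt{1-F(\tau,\omega)^2}\le\sqrt{2(1-F(\tau,\omega))}$ relating trace distance and fidelity $F$; (ii) quote, as the main analytic input, the Fawzi--Renner theorem in its fidelity form: there is a channel $\mathcal{T}_{\E\to\E\B}$ (the universal recovery channel, a mixture of rotated Petz maps $\mathcal{R}^{t}_{\E\to\E\B}(\cdot)=\rho_{\E\B}^{(1+it)/2}\rho_{\E}^{-(1+it)/2}(\cdot)\rho_{\E}^{-(1-it)/2}\rho_{\E\B}^{(1-it)/2}$ composed with tensoring $\B$ back in) such that
\[
  -\log F\bigl(\rho_{\A\E\B},\,\mathcal{T}(\rho_{\A\E})\bigr)^2 \;\le\; I(\A:\B\mid\E)_\rho,
\]
equivalently $F\bigl(\rho_{\A\E\B},\mathcal{T}(\rho_{\A\E})\bigr)^2\ge 2^{-I(\A:\B\mid\E)_\rho}$; and (iii) chain the two: with $x:=I(\A:\B\mid\E)_\rho$ (a nonnegative real, by strong subadditivity, \Cref{lem:strong_subadditivity}) we get $1-F^2\le 1-2^{-x}=1-e^{-x\ln 2}\le x\ln 2$ using $1-e^{-u}\le u$, hence $TD\le\sqrt{1-F^2}\le\sqrt{\ln 2\cdot x}$, which is exactly the claimed bound. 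The monotonicity $TD\bigl(\mathcal{T}(\rho_{\A\E}),\rho_{\A\E}\bigr)$ direction (the inner inequality in \Cref{thm:qmarkov}, with $\sigma_{\A\E}=\rho_{\A\E}$ since $\mathcal{T}$ only acts on $\E$) follows from monotonicity of trace distance under the partial trace over $\B$.

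For the construction/explicitness claim (``$\mathcal{T}$ is explicitly and inefficiently constructible if knowing $\rho_{\A\E\B}$''), I would note that the universal recovery channel is built entirely from $\rho_{\E\B}$ and $\rho_\E$ (its marginals), via operator powers and an integral over a fixed probability density in $t$ (the density $\beta_0(t)=\frac{\pi}{2}(\cosh(\pi t)+1)^{-1}$), so it depends only on data derivable from $\rho_{\A\E\B}$ and can be written down once that state is known; efficiency is not claimed and not needed. I would also record that $\mathcal{T}$ acts as the identity on $\A$, matching the signature $\E\to\E\otimes\B'$ in the statement, and that in our application $\rho_{\A\E\B}$ is the oracle-averaged global state which Eve indeed knows in closed form.

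The main obstacle is that the heart of the argument --- the fidelity lower bound $F^2\ge 2^{-I(\A:\B\mid\E)}$ for a recovery channel acting only on $\E$ --- is genuinely the deep theorem of Fawzi and Renner and is not something one reproves from the strong subadditivity and chain-rule facts listed in the preliminaries; its proof goes through interpolation theory (complex interpolation / the multivariate Lieb--Stein trace inequality, or the de Finetti / pinching approach in later simplifications). Since the excerpt permits me to assume results stated earlier and cites \cite{fawzi2015quantum} for precisely this, my proposal treats that fidelity bound as a black box and the remaining work --- Fuchs--van de Graaf, the elementary inequality $1-e^{-u}\le u$, and trace-distance monotonicity under partial trace --- is routine. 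If one instead wanted a self-contained proof, the obstacle would be reproducing the interpolation argument, which is out of scope here; I would only remark on it and otherwise defer to \cite{fawzi2015quantum}.
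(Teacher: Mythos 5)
Your proposal is correct and matches the paper's treatment: the paper gives no proof of \Cref{thm:quantum_mutual_info_operational} at all, simply importing it from \cite{fawzi2015quantum}, and your derivation of the stated $\sqrt{\ln 2\cdot I(\A:\B|\E)_\rho}$ bound from the Fawzi--Renner fidelity inequality $F^2\ge 2^{-I(\A:\B|\E)}$ via Fuchs--van de Graaf and $1-e^{-u}\le u$ is exactly the standard (and correct) way to obtain this trace-distance form. Treating the fidelity bound as a black box is appropriate here, since reproving it is out of scope for the paper as well.
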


\subsection{Quantum Key Agreement and Quantum Public Key Encryption}\label{sec:key_def}
In the following, we provide  formal definitions of quantum key agreement (QKA)
and quantum public key encryption (QPKE) in the oracle model.

\begin{definition}[Quantum Key Agreement in the Oracle Model]
Let $\lambda \in \mathbb{Z}_+$ be the security parameter and $\Hs_\lambda$ be a distribution of oracles. 
Let $H \gets \Hs_\lambda$ be a classical oracle, drawn according to $\Hs_{\lambda}$.

A key agreement protocol consists of two parties Alice and Bob, who start with all-zero states and have the ability to apply any quantum operator, get quantum access to $H$, and send classical messages to each other.

Both Alice and Bob can make at most $\poly(\lambda)$ number of quantum queries to $H$. Finally, Alice and Bob output classical strings $k_A, k_B$\footnote{$k_A, k_B$ can be of any length (even exponential in $\lambda$). Our impossibility results apply to protocols with any output key length.}. We would call the set of classical messages between Alice and Bob, denoted by $\Pi$, the transcript of the key exchange protocol.
\end{definition}

A key agreement protocol should satisfy both correctness and security.

\begin{definition}[Correctness]
    Let $k_A, k_B$ be the keys outputted in the protocol. Then $\Pr[k_A = k_B] \geq 1/q(\lambda)$ for some polynomial $q(\cdot)$, where the probability is taken over the randomness of Alice and Bob's channels, and the random choice of oracle $H$.  
\end{definition}

\begin{definition}[Security]
    For any eavesdropper Eve that makes at most $\poly(\lambda)$ number of quantum queries to $H \gets \Hs_\lambda$, eavesdrops classical communication between Alice and Bob and outputs $k_E$, the probability that $\Pr[k_A = k_E]$ is negligible in $\lambda$. 
\end{definition}

We are interested in whether there exists a protocol that satisfies both correctness and security in the QROM. 
It is worth noting that the ability of making quantum queries is essential in our setting; when Alice and Bob can only make polynomially many classical queries, secure key agreement does not exist; i.e., there always exists an eavesdropper making polynomially many classical queries and breaking it~\cite{impagliazzo1989limits,barak2009merkle}.

We also focus on breaking quantum public key encryption schemes in the oracle model.  

\begin{definition}[Quantum Public Key Encryption in the Oracle Model]\label{def:QKA}
Let $\lambda \in \mathbb{Z}_+$ be the security parameter and $\Hs_\lambda$ be a distribution over oracles $H\colon [2^{n_\lambda}]\to\{0,1\}$. 
Let $H \gets \Hs_\lambda$ be a classical oracle, 
a QPKE scheme in the oracle model consists of three algorithms ($\gen, \enc, \dec$), each of which is allowed to make at most $d(\lambda)=\poly(\lambda)$ quantum queries to $H$:

\begin{itemize}
    \item $\gen^{\ket{H}}(1^\lambda)\to(\pk,\sk)$: The quantum key generation algorithm that generates a pair of classical public key $\pk$ and secret key $\sk$.
    \item $\enc^{\ket{H}}(\pk,m)\to \ct$: the quantum encryption algorithm that takes a public key $\pk$, the plaintext $m$, produces the ciphertext $\ct$. 
    \item $\dec^{\ket{H}}(\sk,\ct)\to m'$: the quantum decryption algorithm that takes secret key $\sk$ and ciphertext $\ct$ and outputs the plaintext $m'$. 
\end{itemize}

The algorithms should satisfy the following requirements:
\begin{description}
    \item[Completeness] $\Pr\left[\dec^{\ket{H}}\left(\sk,\enc^{\ket{H}}(\pk,m)\right)=m\colon \gen^{\ket{H}}(1^\lambda)\to(\pk,\sk)\right]\geq 1-\negl(\lambda)$.
    \item[IND-CPA Security] For any adversary $\mathcal{E}^{\ket{H}}$ that makes $\poly(\lambda)$ queries, for every two plaintexts $ m_0\neq m_1$ chosen by $\Es^{\ket{H}}(\pk)$, we have
    \begin{align*}
        \Pr\left[\mathcal{E}^{\ket{H}}\left(\pk,\enc^{\ket{H}}(\pk,m_b)\right)=b\right]\leq \frac{1}{2}+\negl(\lambda).
    \end{align*}
    
\end{description}
\end{definition}

Especially in~\Cref{sec:classical_keygen}, we will focus on when the key generation algorithm $\gen^H$ is an algorithm with classical access to the oracle $H$.

It is a folklore result that we can construct a two-round key agreement protocol from a public key encryption scheme as follows:

\begin{enumerate}
    \item Alice runs $\gen^{\ket{H}}$ to produce a public key $\pk$ and a secret $\sk$, and sends the public key $\pk$ as the first message $m_1$ to Bob. We call Alice's algorithm at this stage $\As_1$. 
    \item Bob upon receiving $\pk$, it samples a uniformly random classical string $k$ as the key. It computes the ciphertext $\ct \gets \enc^{\ket{H}}(\pk, k)$ and sends $m_2 := \ct$ to Alice. We call Bob's algorithm at this stage $\Bs$. 
    \item Alice will then run $\dec^{\ket{H}}(\sk, \ct)$ to output $k'$, which is her guess of $k$. We call Alice's algorithm at this stage $\As_2$. 
\end{enumerate}
We further notice that the only information required by $\As_2$ from $\As_1$ is the secret key $\sk$. Thus if $\sk$ is classical, we can assume without loss of generality that the internal state of $\As_1$ at its termination is a mixed state in the computational basis.

We would also consider the recently proposed QPKE with quantum public key~\cite{barooti2023publickey,kitagawa2023quantum,coladangelo2023quantum} in~\Cref{sec:classical_keygen}. We would only focus on the variant where the protocol is perfect complete, and key generation algorithms can only make classical queries.

\begin{definition}[QPKE with quantum public key]\label{def:PKE_qpk}
    Let $\lambda \in \mathbb{Z}_+$ be the security parameter and $\Hs_\lambda$ be a distribution over oracles $H\colon [2^{n_\lambda}]\to\{0,1\}$. 
Let $H \gets \Hs_\lambda$ be a classical oracle, 
a QPKE scheme in the oracle model consists of four algorithms ($\skgen,\pkgen, \enc, \dec$), each of which is allowed to make at most $d(\lambda)=\poly(\lambda)$ queries to $H$:

\begin{itemize}
    \item $\skgen^{{H}}(1^\lambda)\to\sk$: The secret key generation algorithm that generates a classical secret key $\sk$. 
    \item $\pkgen^{{H}}(\sk)\to\rho_{\pk}$: The public key generation algorithm that takes the secret key $\sk$ and generates a quantum public key $\rho_{\pk}$.
    \item $\enc^{\ket{H}}(\rho_{\pk},m)\to \ct$: the quantum encryption algorithm that takes a public key $\pk$, the plaintext $m$, produces the (possibly quantum) ciphertext $\rho_{\ct}$. 
    \item $\dec^{\ket{H}}(\sk,\rho_{\ct})\to m'$: the quantum decryption algorithm that takes secret key $\sk$ and ciphertext $\ct$ and outputs the plaintext $m'$. 
\end{itemize}

The algorithms should satisfy the following requirements:
\begin{description}
    \item[Perfect Completeness] $$\Pr\left[\dec^{\ket{H}}\left(\sk,\enc^{\ket{H}}(\rho_{\pk},m)\right)=m\colon \skgen^{{H}}(1^\lambda)\to\sk,\pkgen^H(\sk)\to\rho_{\pk}\right]=1.$$
    \item[IND-CPA Security] For any adversary $\mathcal{E}^{\ket{H}}$ that makes $\poly(\lambda)$ queries, given any polynomial copies of public key $\rho_{\pk}^{\otimes t(\lambda)}$, for every two plaintexts $ m_0\neq m_1$ chosen by $\Es(\rho_{\pk}^{\otimes t(\lambda)})$, we have
    \begin{align*}
        \Pr\left[\mathcal{E}^{\ket{H}}\left(\rho_{\pk}^{\otimes t(\lambda)},\enc^{\ket{H}}(\rho_{\pk},m_b)\right)=b\right]\leq \frac{1}{2}+\negl(\lambda).
    \end{align*}
    
\end{description}
\end{definition}

We call the public key pure if given $\sk$, $\rho_{\pk}\leftarrow \pkgen(\sk)$ is a pure state, and call the public key clonable if there is some polynomial query algorithm $\Ds^{\ket{H}}$ that takes $\rho_{\pk}=\sum p_i\ket{\psi_i}\bra{\psi_i}$ in its eigenvector decomposition, and generates the state $\rho'=\sum  p_i\ket{\psi_i}\bra{\psi_i}^{\otimes{t(\lambda)}}$ for some polynomial $t(\cdot)$.

\begin{remark}
The definition of clonable here might seem odd at first glance, we would give some examples here for further explanations. The public key generation algorithm from~\cite{kitagawa2023quantum,cryptoeprint:2023/500} $\pkgen(\sk)$ generates a $\rho_{\pk}=(\pk_r,\ket{\psi_r})$ according to some private coin $r$ of $\pkgen(\sk)$, where $\pk_r$ is a classical string, $\ket{\psi_r}$ is a pure quantum state. Thus our cloning algorithm $\Ds$ can be seen as given a sample $(\pk_r,\ket{\psi_r})$, it can generate multiple copies of state $\ket{\psi_r}$.
\end{remark}

\section{Helper Lemmas}\label{sec:helper}
In this section, we introduce three helper lemmas. \Cref{lemma:permuation-invariance} tells how to decrease CMI. \Cref{lem:cmi_op} and \Cref{lem:ccnotcmi} claim that classical communication does not increase CMI. \Cref{lem:support} will be used in \Cref{sec:classical_keygen}.

\subsection{Repetition Decreases CMI}
\begin{definition}[Permutation Invariance]
Let $\A_1,\A_2,\A_3,\ldots,\A_t,\B$ be $(t+1)$-partite quantum system. Given the joint state $\rho_{\B\A_1\A_2\cdots \A_t}$, 
we say $A_1,\ldots,A_t$ are permutation invariant, if for any permutation $\pi$ on $[t]$, we have 
 \[
 \rho_{\B\A_1\A_2\cdots \A_t}=\rho_{\B\A_{\pi(1)}\A_{\pi(2)}\cdots \A_{\pi(t)}}.
 \]
\end{definition}
\begin{lemma}\label{lemma:permuation-invariance}
Let $\A_1,\A_2,\A_3,\ldots,\A_t,\B,\C$ be $(t+2)$-partite quantum system. Suppose the state of the composite system $\rho_{\B\C\A_1\A_2\cdots \A_t}$ is fully separable. If $\A_1,\A_2,\A_3,\ldots,\A_t$ are permutation invariant, %
 then there is a $0\leq i\leq t-1$ such that 
 \[
 I(\A_t: \B\mid \C,\A_1,\ldots,\A_{i})_{\rho}\leq S(\B)/t.
 \]
\end{lemma}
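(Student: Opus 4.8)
The plan is to use the chain rule for conditional mutual information together with permutation invariance, and to bound the telescoping sum by the entropy of $\B$.

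First I would apply the chain rule (\Cref{fact:chain_rule}) to expand
\[
I(\A_1,\A_2,\ldots,\A_t : \B \mid \C) = \sum_{i=1}^{t} I(\A_i : \B \mid \C, \A_1, \ldots, \A_{i-1}).
\]
By permutation invariance of $\A_1,\ldots,\A_t$, the $i$-th summand depends only on the fact that we condition on some $(i-1)$-subset of the $\A_j$'s and look at one more; in particular, reindexing so that the ``new'' register is $\A_t$ and the conditioning set is $\{\A_1,\ldots,\A_{i-1}\}$, each term equals $I(\A_t : \B \mid \C, \A_1, \ldots, \A_{i-1})$ (for the appropriate relabeling). So the sum of $t$ nonnegative terms (nonnegativity is strong subadditivity, \Cref{lem:strong_subadditivity}) whose values are exactly the quantities we want to bound equals $I(\A_1,\ldots,\A_t : \B \mid \C)$. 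Hence at least one term — say the $(i+1)$-th, giving index $i$ with $0 \le i \le t-1$ — is at most $\frac{1}{t} I(\A_1,\ldots,\A_t : \B \mid \C)$.

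Next I would bound $I(\A_1,\ldots,\A_t : \B \mid \C)$ by $S(\B)$. Unpacking the definition, $I(\A_1\cdots\A_t : \B \mid \C) = S(\B\C) + S(\A_1\cdots\A_t\C) - S(\A_1\cdots\A_t\B\C) - S(\C)$. The cleanest route is to use $I(X:\B\mid\C) = S(\B\mid\C) - S(\B\mid X\C)$ and then the assumption that the global state is fully separable: by \Cref{fact:separable_basic}, conditioning on a classical/separable system keeps conditional entropy nonnegative, so $S(\B \mid \A_1\cdots\A_t\C) \ge 0$, giving $I(\A_1\cdots\A_t:\B\mid\C) \le S(\B\mid\C) \le S(\B)$, where the last inequality is again from separability (conditioning does not increase entropy for separable states, equivalently $I(\B:\C)\ge 0$). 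Combining, the chosen term is at most $S(\B)/t$, as desired.

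The main obstacle I anticipate is the bookkeeping around permutation invariance: one must argue carefully that permutation invariance of $\rho_{\B\C\A_1\cdots\A_t}$ (with $\B,\C$ fixed) really does make $I(\A_i : \B \mid \C, \A_1,\ldots,\A_{i-1})$ invariant under permutations of the index set $\{1,\ldots,i\}$ acting within the first $i$ slots, so that after relabeling each chain-rule term can be written in the stated form $I(\A_t : \B\mid \C,\A_1,\ldots,\A_j)$ for some $j$. A second subtlety is making sure the separability hypothesis is invoked in the right place — it is needed both for nonnegativity of each CMI term (so that ``smallest term $\le$ average'' is the only thing we need) and for the bound $I(\cdots:\B\mid\C)\le S(\B)$; without full separability the conditional entropies could be negative and the argument would break. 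Everything else is a short manipulation of von Neumann entropies using facts already stated in the preliminaries.
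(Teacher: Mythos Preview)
Your proposal is correct and matches the paper's proof: chain rule, average over the $t$ nonnegative terms, bound the total $I(\A_1\cdots\A_t:\B\mid\C)$ by $S(\B)$ using separability (for $S(\B\mid \A_1\cdots\A_t\C)\ge 0$) and $I(\B:\C)\ge 0$, then invoke permutation invariance to relabel the chosen term with $\A_t$ in the unconditioned slot. One minor correction to your obstacles paragraph: nonnegativity of each CMI term is strong subadditivity alone (as you correctly said earlier) and does not require separability; the separability hypothesis is used only for the entropy bound.
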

\begin{proof} By the chain rule of conditional mutual information (see Fact \ref{fact:chain_rule}), we have
\begin{equation}\label{eq:non-interactive-eq1}
\sum_{i=1}^t I(\A_i:\B\mid \C,\A_1,\ldots,\A_{i-1})=I(\A_1,\ldots,\A_t:\B\mid \C)
\end{equation}
Besides,
\begin{equation}\label{eq:non-interactive-eq2}
I(\A_1,\ldots,\A_t:\B\mid \C)=S(\B\mid \C)-S(\B\mid \C,\A_1,\ldots,\A_t)\leq S(\B\mid \C)\leq S(\B),
\end{equation}
where the inequalities are by \Cref{fact:separable_basic} and that $I(\B:\C)=S(\B)-S(\B\mid \C)$ is always non-negative. By \eqref{eq:non-interactive-eq1} and \eqref{eq:non-interactive-eq2}, 
there must exist $i\in [t]$ such that $I(\A_i:\B\mid \C,\A_1,\ldots,\A_{i-1})\leq S(\B)/t$.
Finally, by the permutation invariance, we have $I(\A_i:\B\mid \C,\A_1,\ldots,\A_{i-1})=I(\A_t:\B\mid \C,\A_1,\ldots,\A_{i-1})$. Now we finish the proof.
\end{proof}

\subsection{Classical Communication does not Increase CMI}
\begin{lemma}[Local quantum operation does not increase CMI]
    \label{lem:cmi_op}
    Let $\A\B\C$ be a composite quantum system. After performing a quantum operation $M$ on $\A$, the state of the system becomes $\A'\B'\C'$. Then $I(\A':\B'|\C')\leq I(\A:\B|\C)$. 
\end{lemma}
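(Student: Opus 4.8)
The plan is to reduce the statement to strong subadditivity via the Stinespring dilation of the operation $M$, so that the only two ingredients needed are the chain rule for conditional mutual information (\Cref{fact:chain_rule}) and strong subadditivity (\Cref{lem:strong_subadditivity}), both already recorded above.

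First I would recall the standard dilation of a quantum operation: any CPTP map $M$ acting on $\A$ can be implemented by appending a fresh ancilla register $\D$ (chosen large enough) in a fixed pure state $\ket{0}$, applying a unitary $U$ on the joint register $\A\D$, and then discarding a subsystem; that is, $\A\D$ can be repartitioned into two registers $\A'\E$ so that $M$ on $\A$ equals ``apply $U$ on $\A\D$, then trace out $\E$'' (see \cite{nielsen2002quantum}). Crucially, this whole process acts only on $\A$ and its ancilla, touching neither $\B$ nor $\C$, so at the end $\B'=\B$ and $\C'=\C$, and it suffices to show $I(\A':\B\mid\C)\le I(\A:\B\mid\C)$.

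Next I would track the conditional mutual information through the three elementary steps. (i) Appending the product ancilla: since $\rho_{\A\B\C\D}=\rho_{\A\B\C}\otimes\ket{0}\bra{0}_\D$, the chain rule gives $I(\A\D:\B\mid\C)=I(\A:\B\mid\C)+I(\D:\B\mid\C\A)$, and the second term is $0$ because $\D$ is in tensor product with $\A\B\C$; hence $I(\A\D:\B\mid\C)=I(\A:\B\mid\C)$. (ii) Applying $U$ on $\A\D$: by the fact that a local unitary does not change conditional mutual information, $I(\A'\E:\B\mid\C)=I(\A\D:\B\mid\C)=I(\A:\B\mid\C)$. (iii) Discarding $\E$: by the chain rule again, $I(\A'\E:\B\mid\C)=I(\A':\B\mid\C)+I(\E:\B\mid\C\A')\ge I(\A':\B\mid\C)$, where the inequality is exactly the non-negativity of conditional mutual information, i.e. strong subadditivity. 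Chaining the three relations yields $I(\A':\B'\mid\C')=I(\A':\B\mid\C)\le I(\A:\B\mid\C)$, which is the claim.

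I do not expect a genuine obstacle here: this is the quantum data-processing inequality for conditional mutual information, and once the dilation is in place the argument is pure bookkeeping. The only mildly delicate point is setting up the dilation so the total dimensions match and the final step is literally a subsystem discard — handled by taking $\D$ sufficiently large and padding if necessary.
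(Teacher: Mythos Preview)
Your proposal is correct and is essentially the same argument as the paper's: dilate $M$ by adjoining a pure ancilla, note that the ancilla and the unitary leave the CMI unchanged, and then invoke strong subadditivity for the partial trace. The only cosmetic difference is that you phrase the final inequality via the chain rule and non-negativity of $I(\E:\B\mid\C\A')$, whereas the paper expands the CMI into conditional entropies and applies the conditional form of \Cref{lem:strong_subadditivity} directly.
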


\begin{proof}
    We introduce another quantum system $\D$, initialized as zero. The quantum operation $M$ can be treated as first performing a unitary transformation $U$ on $\A\D$ and then discarding $\D$.
    \begin{align*}
        I(\A:\B|\C) &= I(\A\D:\B|\C) \\
        &= I(U(\A\D)U^\dagger:\B|\C) \\
        &= I(\A'\D':\B'|\C') \\
        &= S(\A'\D'|\C')-S(\A'\D'\B'|\C')+S(\B'|\C').
    \end{align*}
    
    By the conditional form of strong sub-additivity (see the condition form of \Cref{lem:strong_subadditivity}), $S(\A'\D'|\C')-S(\A'\D'\B'|\C')\geq S(\A'|\C')-S(\A'\B'|\C')$. Then
    \begin{align*}
        I(\A:\B|\C) &\geq S(\A'|\C')-S(\A'\B'|\C')+S(\B'|\C') \\
        &= I(\A':\B'|\C').  \qedhere
    \end{align*}
\end{proof}

\begin{lemma}[Sending classical message does not increase CMI] \label{lem:ccnotcmi} 
    Let $\A\B\C$ be a composite quantum system and $\A=(\mathsf{W}_\A, \mathsf{M}_\A)$ 
    where $\mathsf{W}_\A$ is the working register and $\mathsf{M}_\A$ is the message register containing a classical state. After 
    $\B$ and $\C$ both obtain a copy of $\mathsf{M}_\A$, the system becomes $\A'\B'\C'$.
    Then $I(\A':\B'|\C')\leq I(\A:\B|\C)$.
\end{lemma}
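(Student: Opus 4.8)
The plan is to reduce \Cref{lem:ccnotcmi} to \Cref{lem:cmi_op}, by viewing ``both $\B$ and $\C$ obtain a copy of the classical message register $\Msf_\A$'' as the composition of two operations, each of which is either a local operation on a single party or a copy of a classical register that we can absorb into a local operation. First I would write the pre-measurement state in the form dictated by the fact that $\Msf_\A$ is classical: $\rho_{\A\B\C} = \sum_x p_x \ketbra{x}{x}_{\Msf_\A} \otimes \rho^{(x)}_{\Wsf_\A \B\C}$. Because $\Msf_\A$ is in the computational basis, broadcasting it to $\B$ and $\C$ produces $\sigma_{\A'\B'\C'} = \sum_x p_x \ketbra{x}{x}_{\Msf_\A} \otimes \ketbra{x}{x}_{\Msf_\B} \otimes \ketbra{x}{x}_{\Msf_\C} \otimes \rho^{(x)}_{\Wsf_\A\B\C}$, where $\Msf_\B, \Msf_\C$ are fresh registers appended to $\B$ and $\C$.

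The key step is to realize the broadcast as: (i) a local CNOT-style unitary on $\A$ alone, acting on $\Msf_\A$ together with two fresh ancillas $\Msf_\B^{\mathrm{tmp}}, \Msf_\C^{\mathrm{tmp}}$ initialized to $\ket 0$, that copies $x$ into the ancillas in the computational basis; followed by (ii) physically relabeling/handing the ancilla $\Msf_\B^{\mathrm{tmp}}$ to party $\B$ and $\Msf_\C^{\mathrm{tmp}}$ to party $\C$. Operation (i) is a local unitary on the $\A$-system (after temporarily enlarging $\A$ by the two ancillas, which only increases the partition and is harmless since we can equivalently think of those ancillas as already sitting in $\A$), so by \Cref{lem:cmi_op} it does not increase $I(\A:\B\mid\C)$. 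Operation (ii) does not change the global state at all — it merely renames which subsystem ``belongs'' to which party — and since CMI is a function only of the reduced density matrices on the three named subsystems, reassigning a subsystem from $\A$ to $\B$ is exactly the map $I(\A\Msf_\B^{\mathrm{tmp}} : \B \mid \C) \to I(\A : \B\Msf_\B^{\mathrm{tmp}} \mid \C)$, which I would handle by invoking \Cref{lem:cmi_op} again in the reverse direction: moving a subsystem out of the first argument is a (trivial, identity) local operation on that first argument combined with a relabel. To make this airtight I would instead argue directly: after step (i) the state on $\A\Msf_\B^{\mathrm{tmp}}\Msf_\C^{\mathrm{tmp}}\B\C$ has $I(\A\Msf_\B^{\mathrm{tmp}}\Msf_\C^{\mathrm{tmp}} : \B \mid \C) \le I(\A:\B\mid\C)$ by \Cref{lem:cmi_op}; then since tracing out / regrouping only the ``giving away'' direction is needed, I note that discarding $\Msf_\C^{\mathrm{tmp}}$ from neither side is required — rather I reassign, and use the elementary monotonicity $I(\A':\B'\mid\C') \le I(\A'\,\Msf_\B^{\mathrm{tmp}} : \B'\mid \C')$? — no: I would instead observe $\A' = \A$, $\B' = \B\Msf_\B^{\mathrm{tmp}}$, $\C' = \C\Msf_\C^{\mathrm{tmp}}$, and apply \Cref{lem:cmi_op} with the single local operation being ``on $\A$, adjoin ancillas and copy,'' noting the ancillas destined for $\B$ and $\C$ can be taken as part of $\B$ and $\C$ from the outset with the copy being a global controlled operation — cleanest is: the whole broadcast is a local operation on $\A$ in the sense of \Cref{lem:cmi_op} where the ``environment $\D$'' absorbed afterward is empty, and the copied-out registers are simply declared to live in $\B$ and $\C$.

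So concretely the proof reads: adjoin registers $\Msf_\B, \Msf_\C$ (initially $\ket 0$) that we declare to be part of systems $\B$ and $\C$ respectively; the broadcast is implemented by the unitary $U$ that, controlled on $\Msf_\A$ in the computational basis, writes the value into $\Msf_\B$ and $\Msf_\C$. This $U$ is not local to $\A$, but we can instead implement the same final state by the local-to-$\A$ operation ``copy $\Msf_\A$ into two fresh ancillas $\D = \D_\B\D_\C$ inside $\A$,'' and then the resulting state on $\A\D\B\C$ has the property that the reduced state on $(\A)(\D_\B\B)(\D_\C\C)$ equals $\sigma_{\A'\B'\C'}$ after identifying $\D_\B$ with $\Msf_\B$ and $\D_\C$ with $\Msf_\C$. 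By \Cref{lem:cmi_op} (with $\A\to\A\D$ local, then no discard), $I(\A\D : \B \mid \C) \le I(\A:\B\mid\C)$; and by another application of \Cref{lem:cmi_op}, moving $\D_\C$ to the conditioning system and $\D_\B$ to the $\B$ system is monotone — indeed $I(\A\D_\B\D_\C : \B\mid\C) \ge I(\A : \B\D_\B \mid \C\D_\C)$ follows because relabeling is transparent and the map $\D_\B\to\B$, $\D_\C\to\C$ is, from $\A$'s viewpoint, just discarding $\D$, which is a local operation on $\A$ not increasing CMI. Therefore $I(\A':\B'\mid\C') \le I(\A:\B\mid\C)$. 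The main obstacle I anticipate is purely bookkeeping: making precise that ``copying a classical register and handing the copies to $\B$ and $\C$'' is legitimately covered by \Cref{lem:cmi_op}'s notion of a local operation on $\A$ followed by discarding an ancilla — i.e., that the order of ``copy then reassign'' versus ``reassign-empty then copy globally'' yields the same state — and I would resolve it by the explicit computational-basis expansion above, where both routes manifestly produce $\sum_x p_x \ketbra{x}{x}^{\otimes 3} \otimes \rho^{(x)}$, so no genuine inequality beyond \Cref{lem:cmi_op} is ever needed.
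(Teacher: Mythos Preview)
Your reduction to \Cref{lem:cmi_op} has a genuine gap at the ``reassignment'' step. After the local copy inside $\A$ you correctly have $I(\A\D_\B\D_\C : \B \mid \C) = I(\A:\B\mid \C)$. But the claim that
\[
I(\A\D_\B\D_\C : \B\mid\C) \ \ge\ I(\A : \B\D_\B \mid \C\D_\C)
\]
``follows because \ldots\ discarding $\D$ is a local operation on $\A$'' is incorrect: applying \Cref{lem:cmi_op} to the partial trace over $\D$ gives $I(\A\D:\B\mid\C) \ge I(\A:\B\mid\C)$ on the \emph{reduced} state, not $I(\A:\B\D_\B\mid\C\D_\C)$ on the full state. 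Reassigning a subsystem from one party to another is not a local operation in the sense of \Cref{lem:cmi_op}, and the displayed inequality is false in general: take $\A$ and $\D_\B$ to be the two halves of a maximally entangled pair with $\B,\C,\D_\C$ trivial; then the left side is $0$ while the right side is $2$. Your closing remark that ``both routes produce the same state'' does not rescue this, since the two quantities being compared are CMIs of the \emph{same} state under \emph{different} tripartitions --- which is exactly the content of the lemma you are trying to prove.

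The paper does not route through \Cref{lem:cmi_op} at all. It first isolates the elementary fact (\Cref{claim:copy_classical}) that adjoining a duplicate of a classical register leaves every von Neumann entropy unchanged. Expanding $I(\A':\B'\mid\C')$ in entropies and using this claim to strip the redundant copies $\Msf_\A,\Msf_\B$ gives $I(\A':\B'\mid\C') = I(\Wsf_\A:\B\mid\C,\Msf_\C)$; the chain rule then yields $I(\Wsf_\A:\B\mid\C,\Msf_\C) = I(\Wsf_\A\Msf_\C:\B\mid\C) - I(\Msf_\C:\B\mid\C) \le I(\Wsf_\A\Msf_\C:\B\mid\C)$, and one more application of the claim swaps $\Msf_\C$ back to $\Msf_\A$ to recover $I(\A:\B\mid\C)$. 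The classicality of $\Msf_\A$ is used explicitly at the entropy level and cannot be replaced by an appeal to \Cref{lem:cmi_op} alone.
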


The following claim will be used.
\begin{claim}[Copying classical state does not change entropy]
    \label{claim:copy_classical}
    Given a system $\P\Q\R$ where $\Q$ and $\R$ contain identical classical states.    
    Then $S(\P\Q\R)=S(\P\R)=S(\P\Q)$.
\end{claim}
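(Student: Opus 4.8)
The plan is to reduce the claim to Fact~\ref{fact:qc_entropy}. Write the state on $\P\Q$ in the form dictated by the fact: since $\Q$ contains a classical state, we may write $\rho_{\P\Q}=\sum_k p_k\, \sigma_k\otimes\ket{k}\bra{k}$, where $\{\ket{k}\}$ is an orthonormal basis of $\Q$ (the classical symbols), $p_k\geq 0$ with $\sum_k p_k=1$, and $\sigma_k$ is a density operator on $\P$ for each $k$ with $p_k>0$. The crucial structural observation is that saying ``$\Q$ and $\R$ contain identical classical states'' means the joint state on $\P\Q\R$ is obtained by coherently copying the classical label into $\R$, i.e.
\begin{align*}
\rho_{\P\Q\R}=\sum_k p_k\,\sigma_k\otimes\ket{k}\bra{k}_{\Q}\otimes\ket{k}\bra{k}_{\R}.
\end{align*}

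First I would compute $S(\P\Q\R)$ directly. Treating $\Q\R$ together as the ``classical'' system in Fact~\ref{fact:qc_entropy}: the states $\{\ket{k}\bra{k}_{\Q}\otimes\ket{k}\bra{k}_{\R}\}_k$ are mutually orthogonal one-dimensional projectors on $\Q\R$, and attached to each is the $\P$-state $\sigma_k$. Hence Fact~\ref{fact:qc_entropy} gives $S(\P\Q\R)=H(p_k)+\sum_k p_k\,S(\sigma_k)$. Next, applying the same fact to $\P\Q$ (with classical system $\Q$ carrying the orthogonal projectors $\ket{k}\bra{k}_{\Q}$), we get $S(\P\Q)=H(p_k)+\sum_k p_k\,S(\sigma_k)$, which is the same quantity, so $S(\P\Q\R)=S(\P\Q)$. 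For $S(\P\R)$, note that tracing out $\Q$ from $\rho_{\P\Q\R}$ leaves $\rho_{\P\R}=\sum_k p_k\,\sigma_k\otimes\ket{k}\bra{k}_{\R}$, which has exactly the same form, so once more Fact~\ref{fact:qc_entropy} yields $S(\P\R)=H(p_k)+\sum_k p_k\,S(\sigma_k)=S(\P\Q)$. Chaining the three equalities proves the claim.

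I do not expect a genuine obstacle here; the only point requiring a little care is pinning down what ``identical classical states in $\Q$ and $\R$'' formally means, so that one is justified in writing $\rho_{\P\Q\R}$ with the doubled $\ket{k}\bra{k}$ factors — this is exactly the hypothesis that the classical register has been copied (in the computational basis), not merely that the marginals on $\Q$ and $\R$ agree. Once that normal form is fixed, everything is a one-line application of Fact~\ref{fact:qc_entropy} three times, and no spectral computation or inequality (not even strong subadditivity) is needed. An alternative, essentially equivalent route is to observe that the CPTP map ``delete $\R$'' and the CPTP map ``copy $\Q$ into a fresh $\R$'' are inverse to each other on states of this doubled form, and entropy is invariant under such a reversible (isometric on the support) transformation; but the direct computation above is cleaner to write out.
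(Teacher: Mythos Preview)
Your proposal is correct and follows essentially the same approach as the paper: write the joint state as $\sum_k p_k\,\sigma_k\otimes\ket{k}\bra{k}_{\Q}\otimes\ket{k}\bra{k}_{\R}$ and apply Fact~\ref{fact:qc_entropy} to each of $\P\Q\R$, $\P\Q$, $\P\R$ to see they all equal $H(p_k)+\sum_k p_k S(\sigma_k)$. The paper dispatches the $S(\P\R)$ case with a one-word appeal to symmetry rather than tracing out $\Q$ explicitly, but the argument is the same.
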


\begin{proof}
    The joint state of $\P\Q\R$ can be written as  
    \[\sum_k p_k\rho_{\P}^{(k)}\otimes \ket{k}\bra{k}_{\Q}\otimes \ket{k}\bra{k}_{\R}\]
    where $p_k$ are probabilities,  $\rho_{\P}^{(k)}$ are quantum states in $\P$, and $\ket{k}$ are computational basis of $\Q$ (also $\R$). 
    Then by \Cref{fact:qc_entropy}, 
    \begin{align*}
        S(\P\Q\R)
        &= S\left(\sum_k p_k\rho_{\P}^{(k)}\otimes \ket{k}\bra{k}_\Q\otimes \ket{k}\bra{k}_{\R}\right) \\
        &= H(p_k)+\sum_k p_kS\left(\rho_{\P}^{(k)}\right) 
        = S\left(\sum_k p_k\rho_{\P}^{(k)}\otimes \ket{k}\bra{k}_\Q\right)
        = S(\P\Q).
    \end{align*}
    By symmetry, $S(\P\Q\R)=S(\P\R)=S(\P\Q)$.
\end{proof}

\begin{proof}[Proof of \Cref{lem:ccnotcmi}]
    Let $\B'=(\B, \Msf_\B), \C'=(\C, \Msf_\C)$ where $\Msf_\B, \Msf_\C$ are the message registers which contain a copy of $\mathsf{M}_\A$. 
    Then 
    \begin{align*}
        I(\A':\B'|\C') 
        &= I(\Wsf_\A,\Msf_\A:\B,\Msf_\B|\C,\Msf_\C) \\
        &= S(\Wsf_\A, \C, \Msf_\A, \Msf_\C)
        + S(\B, \C, \Msf_\B, \Msf_\C)
        - S(\Wsf_\A, \B, \C, \Msf_\A, \Msf_\B, \Msf_\C)
        - S(\C, \Msf_\C).
    \end{align*}
    Since $\Msf_\A, \Msf_\B, \Msf_\C$ contain identical classical states, 
    by \Cref{claim:copy_classical},
    we can remove $\Msf_\A,\Msf_\B$ from above equation while keeping each 
    term unchanged. Then
    \begin{align*}
        I(\A':\B'|\C') 
        &= S(\Wsf_\A, \C, \Msf_\C)
        + S(\B, \C, \Msf_\C)
        - S(\Wsf_\A, \B, \C, \Msf_\C)
        - S(\C, \Msf_\C) \\
        &= I(\Wsf_\A:\B|\C, \Msf_\C) \\
        &= I(\Wsf_\A,\Msf_\C:\B|\C) - I(\Msf_\C:\B|\C). 
    \end{align*}
    By the non-negativity of $I(\Msf_\C:\B|\C)$, 
    \begin{align*}
        I(\A':\B'|\C') 
        &\leq I(\Wsf_\A,\Msf_\C:\B|\C) \\
        &=S(\Wsf_\A,\C,\Msf_\C) + S(\B,\C)
        -S(\Wsf_\A,\B,\C,\Msf_\C) - S(\C).
    \end{align*}
    Since $\Msf_\A$ and $\Msf_\C$ are identical classical states, $\Msf_\C$ can be replaced with $\Msf_\A$ by \Cref{claim:copy_classical}.
    Then
    \begin{align*}
        I(\A':\B'|\C') 
        &\leq S(\Wsf_\A,\C,\Msf_\A) + S(\B,\C)
        -S(\Wsf_\A,\B,\C,\Msf_\A) - S(\C) \\
        &=I(\Wsf_\A,\Msf_\A:\B|\C)=I(\A:\B|\C). \qedhere
    \end{align*}

\end{proof}

\subsection{Other Useful Lemmas}
We would also need the following lemma.
\begin{lemma}\label{lem:support}
    For two classical probabilistic distributions $D_X$ and $D_Y$ over the same domain, if\\ $TV(D_X,D_Y)\leq \epsilon$, we have that
    \begin{align*}
        \Pr_{x\leftarrow D_X}[x\notin\supp(D_Y) ]\leq 2\epsilon.
    \end{align*}
\end{lemma}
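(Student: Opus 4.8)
The statement is elementary: if $TV(D_X, D_Y) \le \epsilon$, then $\Pr_{x \gets D_X}[x \notin \supp(D_Y)] \le 2\epsilon$. I would prove this directly from the definition of total variation distance. Let $S = \{x : D_Y(x) = 0\}$ be the complement of the support of $D_Y$. The key observation is that for every $x \in S$, we have $|D_X(x) - D_Y(x)| = |D_X(x) - 0| = D_X(x)$. Hence the mass $D_X$ places on $S$ is exactly captured, term by term, inside the sum defining $TV$.

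**Key steps, in order.** First, write $TV(D_X, D_Y) = \frac{1}{2}\sum_{x} |D_X(x) - D_Y(x)|$ and restrict the sum to $x \in S$, which only decreases it (all terms are non-negative): $\frac{1}{2}\sum_{x \in S} |D_X(x) - D_Y(x)| \le TV(D_X, D_Y) \le \epsilon$. Second, substitute $D_Y(x) = 0$ for $x \in S$ to get $\frac{1}{2}\sum_{x \in S} D_X(x) \le \epsilon$, i.e. $\sum_{x \in S} D_X(x) \le 2\epsilon$. Third, observe that $\Pr_{x \gets D_X}[x \notin \supp(D_Y)] = \sum_{x \in S} D_X(x)$ by definition, which gives the bound $\le 2\epsilon$ and completes the proof.

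**Main obstacle.** There is essentially no obstacle here — the only thing to be slightly careful about is the factor of $2$ (the total variation distance is defined with the $\frac{1}{2}$ normalization, so restricting the $\ell_1$ sum to $S$ and dividing out yields $2\epsilon$, not $\epsilon$), and making sure the domain of summation is handled cleanly (the sum over $S$ is a sub-sum of the full sum, so no convergence or measure-theoretic subtleties arise for finite or countable domains). I expect the author's proof to be three or four lines along exactly these lines.
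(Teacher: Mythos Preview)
Your proof is correct and is essentially identical to the paper's: the paper also bounds $\sum_{x\notin\supp(D_Y)} D_X(x)$ by the full $\ell_1$ sum $\sum_x |D_X(x)-D_Y(x)| = 2\,TV(D_X,D_Y)\le 2\epsilon$, using that $D_Y(x)=0$ on that set.
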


\begin{proof}
    We use $p_x^X,p_x^Y$ to denote the probability of $x$ drawn from $ D_X,D_Y$ respectively.
    \begin{align*}
        \sum_{x\notin\supp(D_Y) }p_x^X\leq\sum_{x}|p_x^X-p_x^Y|=2TV(D_X,D_Y)\leq2\epsilon.\quad\quad \qedhere
    \end{align*}
\end{proof}

\section{Non-Interactive Quantum Key Agreement}\label{sec:noninteractive}

 We say a QKA is \emph{non-interactive} if all queries are made before communication. Formally, 

\begin{definition}[Non-interactive Quantum Key Agreement]
\label{def:NIKA}
    A non-interactive key agreement protocol between Alice and Bob consists of the following steps:
    \begin{enumerate}
        \item Let $\lambda > 0$ be a security parameter and $H \gets \Hs_\lambda$. 
        \item Alice and Bob each makes $d_{\lambda}=\poly(\lambda)$ queries to $H$. 
        \item Alice and Bob continue an arbitrary number of rounds of classical communication and local quantum operations, but will never make queries to $H$. 
        \item  Eventually, Alice and Bob will output $k_A, k_B$. 
    \end{enumerate}
\end{definition}
\begin{theorem}\label{thm:non-interactive}
Non-interactive QKA  does not exist in any oracle model.
\end{theorem}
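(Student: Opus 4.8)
The plan is to instantiate the ``Quantum Eve framework'' from the technical overview in the simplest possible setting, where there are no queries after the first round. Concretely, I would view a non-interactive QKA as a two-round object: treat Alice's $d_\lambda$ queries as producing a quantum state $\view_A = \rho_A$, treat Bob's $d_\lambda$ queries likewise as $\view_B = \rho_B$, and note that all subsequent classical communication together with the final key-extraction are local channels that make no oracle queries. Since all the interaction happens \emph{after} the queries, Eve only needs to act before the communication starts: Eve will herself run Bob's query phase $t$ times for a suitable $t$, obtaining $t$ independent copies $\rho_B^{\otimes t}$ of Bob's post-query state, keeping all of them in her register $\C$.

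The first key step is to apply \Cref{lemma:permuation-invariance}. Consider the global pure state right after everyone's queries; averaging over the random oracle $H$, the joint state on $(\A_1,\ldots,\A_t,\B,\C')$ — where each $\A_j$ is a copy of Bob's post-query register, $\B$ is Alice's post-query register, and $\C'$ is empty — is fully separable in the $\A_j$'s (it is a convex combination over $H$ of product states across the parties' registers, since each party queries independently in the non-interactive model). The copies are permutation invariant by construction. Hence \Cref{lemma:permuation-invariance} gives an index $i$ with $I(\A_t : \B \mid \A_1,\ldots,\A_i) \le S(\B)/t \le d_\lambda/t$ (using that Bob's post-query register, after tracing out the oracle, has entropy bounded by the number of query registers, hence $\poly(\lambda)$). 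Choosing $t = \poly(\lambda)/\epsilon$ makes this CMI at most $\epsilon$. Eve's strategy is: run Bob's query phase $t$ times, discard all but the first $i$ copies, so that conditioned on Eve's register $\C := (\A_1,\ldots,\A_i)$, the CMI $I(\view_A : \view_B \mid \C)$ is at most $\epsilon$; then, since the transcript $\Pi$ is classical and is generated after the queries, \Cref{lem:ccnotcmi} (applied repeatedly, once per classical message) ensures $I(\view_A : \view_B \mid \C, \Pi) \le \epsilon$ as well.

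The second key step is to invoke the approximate quantum Markov chain theorem (\Cref{thm:quantum_mutual_info_operational}) with $\A \mapsto \view_B$, $\B \mapsto \view_A$, $\E \mapsto (\C,\Pi)$: there is a channel $\mathcal{T}$ acting only on Eve's register-plus-transcript that produces a state $\sigma$ with a fake Alice register $\view_A'$ such that $\trdist{\rho_{\view_B \view_A} - \sigma_{\view_B \view_A'}} \le \sqrt{\ln 2 \cdot \epsilon}$. Crucially $\mathcal{T}$ makes no oracle queries. Now Eve applies $\mathcal{T}$, obtains a fake $\view_A'$, and then simulates the rest of the protocol: she plays the role of Alice in the (already-completed or to-be-completed) classical communication using $\view_A'$, and finally runs Alice's key-extraction channel on $\view_A'$ to get $k_E$. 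Because all of this is a fixed channel applied to $\view_A'$ (composed with the real $\view_B$ held by Bob and the real transcript), and because that channel applied to the \emph{real} pair $(\view_B,\view_A)$ yields $k_A = k_B$ with probability $\ge 1/q(\lambda)$ by correctness, the trace-distance bound gives $\Pr[k_E = k_B] \ge 1/q(\lambda) - \sqrt{\ln 2 \cdot \epsilon}$. Taking $\epsilon$ a small enough inverse polynomial (so $t$ is still $\poly(\lambda)$) makes this non-negligible, contradicting security. One subtlety I would be careful about: the transcript is itself generated jointly by Alice and Bob's post-query states, so I should include $\Pi$ inside the conditioning register $\E$ and appeal to \Cref{lem:ccnotcmi} to control CMI under this conditioning — this is exactly why that helper lemma is stated. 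The main obstacle is making the bookkeeping precise: arguing that Eve can consistently ``be Alice'' in the classical conversation using only the fake register $\view_A'$ while Bob continues honestly, i.e., that the whole post-query evolution (communication + key extraction) really is a single register-$\B$-and-register-$\view_A$ channel with no further oracle access, so that the trace-distance bound transfers directly to the output key distribution. Since the definition of non-interactive QKA explicitly forbids oracle queries after step 2, this should go through, but it requires stating the post-query process carefully as a channel to which data-processing (monotonicity of trace distance) applies.
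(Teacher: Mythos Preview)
Your proposal is correct and follows essentially the same approach as the paper: have Eve copy one party's query phase $t$ times, bound the CMI via \Cref{lemma:permuation-invariance}, preserve it through the classical communication via \Cref{lem:ccnotcmi}, and then apply Fawzi--Renner (\Cref{thm:quantum_mutual_info_operational}) to produce a fake view and run the key-extraction map. The only cosmetic differences are that the paper copies Alice while you copy Bob (immaterial, since CMI is symmetric and the recovery channel can reconstruct either outer register), and that the paper bounds $S(\B^0)\le 2d_\lambda e_\lambda$ by a qubit-communication argument---your claimed bound ``$\le d_\lambda$'' is too small, but your fallback to ``$\poly(\lambda)$'' is what is actually needed.
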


\begin{proof}
Let $e_{\lambda}$ denote the number of qubits on which the quantum query unitary $U_H$ acts, i.e., the total length of the input register and the output register. For example, if $H:[2^{n_\lambda}]\rightarrow \{0,1\}$ is a random function, then $e_{\lambda}=n_{\lambda}+1$.
Let $\rho_{\A}^0$ and $\rho_{\B}^0$ denote the states of Alice and Bob respectively right after the query algorithm but before any communication.

First, Eve repeatedly runs the same Alice's query algorithm $t=C d_{\lambda}e_{\lambda}$ times ($C$ is some polynomial determined afterwards) and prepares the state $\rho_{\E}^0$, which consists of $t$ registers $\A_1^0,\ldots,\A_t^0$ of $\rho_{\A}^0$'s copies. Observing that $\A^0,\A_1^0,\ldots,\A_t^0$ are permutation invariant, by \Cref{lemma:permuation-invariance}, we have
\[
I(\A^0:\B^0\mid \E^0)\leq \frac{S(\B^0)}{t+1}\leq \frac{2d_{\lambda}e_\lambda}{t+1}\leq \frac{1}{2C}.
\]
The second inequality is because we can implement the query unitary $U_H$ by a quantum communication process: suppose there are two parties, namely Bob and Oracle; if Bob wants to apply the unitary $U_H$, then
\begin{enumerate}
\item Bob sends its input register and output register, $e_\lambda$ qubits in total, to Oracle;
\item Oracle applies $U_H$ on these $e_\lambda$ qubits and then sends them back to Bob.
\end{enumerate}
By subadditivity of entropy, the entropy of Bob can increase by at most $2e_{\lambda}$ bits through the above quantum communication process.
Since $\B^0$ is prepared from a pure state by making $d$ quantum queries, we have that $S(\B^0)\leq 2d_\lambda e_{\lambda}$.

Let $\rho_{A}^f$ and $\rho_B^f$ denote the states of Alice and Bob respectively right after finishing the communication but before outputting the key. Since classical communication does not increase CMI (see \Cref{lem:ccnotcmi}), we have
\[
I(\A:\B\mid \E,\Pi)_{\rho_{ABE}^f}\leq I(\A:\B\mid \E)_{\rho_{ABE}^0}\leq \frac{1}{2C}.
\]

Finally, Eve applies the channel in \Cref{thm:qmarkov} and obtains a fake view $\hat{\A}^f$ of $\A^f$ such that the joint state of $\rho_{\hat{A}B}^f$ is $O(1/\sqrt{C})$-close to $\rho_{{A}B}^f$. By letting $C=O(1/\epsilon^2)$ , Eve can use $\hat{\A}^f$ to generate $k_E$ such that $\Pr[k_B=k_E]$ is $\epsilon$-close to $\Pr[k_A=k_E]$.
\end{proof}
\begin{remark}\label{rem:prs} (a). \Cref{thm:non-interactive} directly implies the separation of pseudorandom quantum states (PRS) and QKA. This is because PRS can be prepared by random queries, and without loss of generality, we can assume all random queries are made at the very beginning before any communication.

(b). In the proof of \Cref{thm:non-interactive}, Eve makes $O(d_{\lambda}^2 e_\lambda)$ quantum queries. This bound can be improved to $\widetilde{O}(d_\lambda^2)$ if Alice and Bob are restricted to making non-adaptive queries (i.e., queries can depend on neither
any previous queries nor classical communication) to a random function. The famous Merkle Puzzles and PRS-based protocols are both examples of this non-adaptive case. We refer interested readers to \Cref{sec:non-adaptive} for details.

(c). By almost the same proof, \Cref{thm:non-interactive} can be extended to include quantum oracles, e.g., Haar random oracles, where the query unitary operator is chosen from some Haar measure.
\end{remark}

We can further notice that the non-interactive requirement can be relaxed to one-sided if we are considering two-round key agreement protocols. That is, Bob can make queries after receiving the first message from Alice, while Alice can only make queries before communication. Viewing the protocol in the model of QPKE, we have the following theorem:

\begin{theorem}[Restate of \Cref{thm:main_theorem3}]
    For any QPKE scheme with a classical public key and a classical/quantum ciphertext in any oracle model, if it satisfies the following conditions:
    \begin{enumerate}
        \item The key generation algorithm makes at most $d$ quantum queries;
        \item The encryption algorithm makes at most $d$ quantum queries;
        \item The decryption algorithm makes no queries.
    \end{enumerate}
    There exists an adversary Eve that could break the public key encryption scheme with probability $1-O(\epsilon)$ by making $O(d^2e/\epsilon^2)$ number of oracle queries.
\end{theorem}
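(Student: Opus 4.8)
The plan is to mirror the proof of \Cref{thm:non-interactive}, but tracking the state of the protocol at two distinguished moments rather than one, since Bob is now allowed to query \emph{after} receiving Alice's first message. Concretely, view the QPKE as the two-round QKA of \Cref{def:informal_QKA}: Alice runs $\gen$ making at most $d$ queries, produces $(\pk,\sk)$, and sends the classical $\pk=m_1$; Bob then runs $\enc$ making at most $d$ queries, and sends $\ct=m_2$; Alice decrypts \emph{without any queries}. Because $\dec$ makes no queries, I only need to reconstruct Alice's internal state (which is just $\sk$, a classical string, together with whatever she keeps) well enough that applying the query-free channel $\dec$ to it yields $k_B$.

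\textbf{Step 1 (decrease CMI).} First I would have Eve, after observing $m_1$, run the same $\Bs$ algorithm $t = C d e / \epsilon^2$ times (for a suitable polynomial $C$), storing all $t$ copies of Bob's post-query internal state in a register $\E$. Here the key point, emphasized in the discussion after \Cref{lem:inf}, is that $m_1=\pk$ is \emph{classical}, so Eve literally has unlimited copies of it and can honestly re-run $\Bs$. Since $\sk$ is classical, Alice's register $\A$ at the relevant moment is a mixed state in the computational basis, hence the joint state over $(\A, \B_1, \ldots, \B_t, m_1)$ is fully separable and the $\B_i$'s are permutation invariant. Applying \Cref{lemma:permuation-invariance} with the roles $\A_i := \B_i$, $\B := \A$, $\C := m_1$, and bounding $S(\A) \le 2de$ exactly as in the proof of \Cref{thm:non-interactive} (the query-unitary-as-communication argument), I get
\[
I(\A : \B \mid \E, m_1) \le \frac{S(\A)}{t+1} \le \frac{2de}{t+1} \le \frac{\epsilon^2}{2C'}
\]
for an appropriate constant. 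Then, since $m_2 = \ct$ is sent as (classical or quantum) communication — if classical, invoke \Cref{lem:ccnotcmi}; if quantum, it is appended to Bob's side and only a local channel on $\B$ is applied, which does not increase $I(\A:\B\mid \E, m_1)$ by \Cref{lem:cmi_op} — we still have $I(\A : \B \mid \E, m_1, m_2) \le \epsilon^2/(2C')$.

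\textbf{Step 2 (reconstruct Alice and decrypt).} Now apply \Cref{thm:qmarkov} with $X := \B$ (Bob's full state, including $k_B$), $Y := (\E, m_1, m_2)$, $Z := \A$: there is a channel $\mathcal{T}$ acting only on $Y$ that produces a fake $\hat{\A}$ with $\| \rho_{\B\hat{\A}} - \rho_{\B\A} \|_{\operatorname{tr}} \le \sqrt{\ln 2 \cdot I(\A:\B\mid \E,m_1,m_2)} = O(\epsilon)$. Eve holds all of $Y$, so she can run $\mathcal{T}$ (inefficiently, but making no further queries). Finally Eve applies $\dec(\cdot)$ — which makes \emph{no oracle queries}, hence is a fixed channel independent of $H$ — to $\hat{\A}$ to output $k_E$. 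By perfect/imperfect completeness, $\dec$ applied to the real $\A$ (jointly with Bob's $k_B$ in $\B$) outputs $k_B$ with probability $\Pr[k_A=k_B]$; since $\dec$ is trace-non-increasing, $\Pr[k_E = k_B] \ge \Pr[k_A = k_B] - O(\epsilon)$, and the KA correctness bound then forces $\Pr[k_E = k_B]$ to contradict security. Counting queries: Eve makes $t \cdot d = O(d^2 e / \epsilon^2)$ queries, and no queries in Steps beyond running $\Bs$, matching the claimed bound.

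\textbf{Main obstacle.} The delicate point — and the reason this is not a one-line corollary of \Cref{thm:non-interactive} — is the placement of the conditioning: Bob's queries happen \emph{between} $m_1$ and $m_2$, so the tripartite state to which I apply \Cref{lemma:permuation-invariance} must be the one \emph{after} Bob has queried and produced his state, with $m_1$ already fixed, and I must be careful that conditioning on the classical $m_1$ is what makes Eve's honest re-runs of $\Bs$ genuinely identical copies (this fails if $m_1$ were quantum, which is exactly why \Cref{thm:main_theorem3} keeps the public key classical). I also need the observation, already noted in the excerpt, that because $\sk$ is classical the combined state stays separable so that \Cref{lemma:permuation-invariance}'s separability hypothesis and the bound $S(\A)\le 2de$ both hold; and I must make sure that when $m_2$ is quantum, moving it to Eve's/Bob's side is handled by \Cref{lem:cmi_op} rather than \Cref{lem:ccnotcmi}. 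None of these require new ideas beyond the helper lemmas, but they are where the proof must be written carefully.
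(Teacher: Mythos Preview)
Your proposal is correct and follows essentially the same line as the paper: run $\Bs$ many times on the classical $m_1$, invoke \Cref{lemma:permuation-invariance} to bound the CMI by $S(\A)/t\le 2de/t$, apply the Fawzi--Renner recovery channel, and finish with the query-free $\dec$. The only cosmetic difference is that the paper reconstructs the fake Alice \emph{before} $m_2$ is sent (keeping $m_2$ inside $\B$, so that $\rho_{A'B}\approx\rho_{AB}$ already accounts for it and Eve simply receives $m_2$ afterward and feeds it to $\dec$), which sidesteps your slightly awkward ``condition on a quantum $m_2$'' step; otherwise the arguments coincide.
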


\begin{proof}
    In this case, if Eve generates $O(de/\epsilon^2)$ copies of message $m_1$ and runs Bob $\Bs(m_1)$ according to the first message $m_1$, we can see that the permutation invariant condition still holds. Consider the stage before Bob sends the second message, similar to~\Cref{thm:non-interactive}, we have that
    \begin{align*}
        I(\A:\B\mid \E)\leq \frac{1}{2C}\leq O(\epsilon^2).
    \end{align*}
    Thus by applying the channel in~\Cref{thm:quantum_mutual_info_operational} to $\E$, we can obtain a state $\rho_{A'B}$ that is $O(\epsilon)$ close to the state $\rho_{AB}$. Now Eve receives the message register from Bob and runs the decryption algorithm on $\A'$, by the completeness of the protocol we can obtain that $$\Pr[k_B=k_E]\geq \Pr[k_A=k_B]-O(\epsilon)\geq 1-O(\epsilon)-\negl(\lambda).$$
\end{proof}

Note that our impossibility result can be extended to the Classical Communication One Quantum Message Key Agreement (CC1QM-KA) model from~\cite{grilo2023towards}. In their security definition, the quantum message channel is unauthenticated, thus it can be modified by the adversary $\Es$. As long as the completeness of the original key agreement protocol is $1-\negl(\lambda)$, we can apply the gentle measurement lemma to ensure that $\Pr[k_A=k_B=k_E]\geq 1-O(\sqrt{\epsilon})$.

\section{Public Key Encryption with a Classical Key Generation}\label{sec:classical_keygen}

In this section, we devote ourselves to proving black-box separation results for public key encryption schemes with a classical key generation process and one-way functions.

Recall the definitions of QPKE schemes from~\Cref{sec:key_def}. We will focus on the specific public encryption schemes defined as follows:

\begin{definition} \label{def:PKEck}
A public key encryption scheme with a classical key generation process, relative to a random oracle $H\leftarrow \Hs_{\lambda}$ consists of the following three bounded-query quantum algorithms:
\begin{itemize}
    \item $\gen^H(1^\lambda)\to(\pk,\sk)$: The key generation algorithm that generates a pair of public key $\pk$ and secret key $\sk$.
    \item $\enc^{\ket{H}}(\pk,m)\to \ct$: the encryption algorithm that takes a public key $\pk$, the plaintext $m$, and the randomness $r$, produces the ciphertext $\ct$. 
    \item $\dec^{\ket{H}}(\sk,\ct)\to m'$: the decryption algorithm that takes secret key $\sk$ and ciphertext $\ct$ and outputs the plaintext $m'$. 
\end{itemize}

Here we use $H$ to denote the algorithm has classical access to the oracle, and $\ket{H}$ to denote quantum access to the oracle.

The algorithms should satisfy the following requirements:
\begin{description}
    \item[Perfect Completeness] $\Pr\left[\dec^{\ket{H}}\left(\sk,\enc^{\ket{H}}(\pk,m)\right)=m\colon \gen^H(1^\lambda)\to(\pk,\sk)\right]=1$.
    \item[IND-CPA Security] For any QPT adversary $\mathcal{E}^{\ket{H}}$, for every two plaintexts $ m_0\neq m_1$ chosen by $\Es^{\ket{H}}(\pk)$ we have
    \begin{align*}
        \Pr\left[\mathcal{E}^{\ket{H}}\left(\pk,\enc^{\ket{H}}(\pk,m_b)\right)=b\right]\leq \frac{1}{2}+\negl(\lambda).
    \end{align*}
    
\end{description}

\end{definition}

To prove a black box separation, we would view the public key encryption scheme as a two-round key agreement protocol as described before, and utilize the previous tools from information theory. From the perspective of a key agreement protocol, it could be viewed as a two-stage Alice: before sending the first message, Alice would first make classical queries to the oracle $H$; And after receiving the message from Bob, it would make quantum queries to the oracle, and output the key they agree on. We denote the first stage as $\As_1$ and the second stage as $\As_2$.

We would use the following lemma from~\cite{BBBV97}.

\begin{lemma}\label{lem:BBBV}
Consider a quantum algorithm $\Bs$ that makes $d$ queries to an oracle $H$. Denote the quantum state immediately after $t$ queries to the oracle as 
\begin{align*}
    \ket{\psi_t}=\sum_{x,w}\alpha_{x,w,t}\ket{x,w},
\end{align*}
where $w$ is the content of the workspace register. Denote the query weight $q_x$ of input $x$ as 
\begin{align*}
    q_x=\sum_{t=1}^d\sum_{w}|\alpha_{x,w,t}|^2.
\end{align*}
For any oracle $\Tilde{H}$, denote $\ket{\phi_d}$ as the final state before measurement obtained by running $\Bs$ with oracle $\Tilde{H}$, we have that
\begin{align*}
    \lVert\ket{\psi_d}-\ket{\phi_d}\rVert\leq 2\sqrt{d}\sqrt{\sum_{x\colon\Tilde{H}(x)\neq H(x)}q_x}.
\end{align*}
\end{lemma}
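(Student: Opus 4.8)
The plan is to prove this by the hybrid argument of Bennett--Bernstein--Brassard--Vazirani, adapted to the bookkeeping of the statement. First I would fix notation for the execution: write $\Bs$ as an alternating product of oracle-independent local unitaries $U_1,\dots,U_{d+1}$ and query calls, so that running $\Bs$ with any oracle $G$ produces a final state $\ket{\psi^G}=U_{d+1}\,O_G\,U_d\,O_G\cdots U_1\,O_G$ applied to the fixed initial state, where $O_G\colon\ket{x,y,w}\mapsto\ket{x,y\oplus G(x),w}$ is the query unitary. The two runs in the lemma are the cases $G=H$ (giving $\ket{\psi_d}$) and $G=\tilde H$ (giving $\ket{\phi_d}$). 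Set $S=\{x:\tilde H(x)\neq H(x)\}$ and let $P_S=\sum_{x\in S}\ket{x}\bra{x}\otimes I$ be the projector onto query-input registers in $S$. The goal is to bound $\lVert\ket{\psi_d}-\ket{\phi_d}\rVert$ by interpolating one query at a time between the all-$H$ run and the all-$\tilde H$ run.

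The heart of the argument is the single-query switching estimate. For the $j$-th query, let $\ket{\chi_j}$ be the state on which that query acts in the all-$H$ run (this is the intermediate state whose amplitudes are recorded as $\alpha_{x,w,\cdot}$ in the lemma, up to the lemma's indexing convention, which I reconcile below). Because $O_H$ and $O_{\tilde H}$ act identically on basis states with $x\notin S$, we have $(O_H-O_{\tilde H})\ket{\chi_j}=(O_H-O_{\tilde H})P_S\ket{\chi_j}$, and by the triangle inequality together with the fact that $O_H$ and $O_{\tilde H}$ are unitary,
\[
\bigl\lVert (O_H-O_{\tilde H})\ket{\chi_j}\bigr\rVert \;\le\; \lVert O_H P_S\ket{\chi_j}\rVert + \lVert O_{\tilde H} P_S\ket{\chi_j}\rVert \;=\; 2\,\lVert P_S\ket{\chi_j}\rVert .
\]
Now introduce the hybrids: for $r=0,1,\dots,d$, let $\ket{h_r}$ be the final state when the first $r$ query calls are answered by $H$ and the remaining $d-r$ by $\tilde H$, so $\ket{h_d}=\ket{\psi_d}$ and $\ket{h_0}=\ket{\phi_d}$. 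The runs defining $\ket{h_j}$ and $\ket{h_{j-1}}$ are identical up to and including all unitaries preceding the $j$-th query, reaching the common state $\ket{\chi_j}$; they differ only in whether the $j$-th query uses $O_H$ or $O_{\tilde H}$; and afterwards both undergo the same unitary evolution, which preserves norm. Hence $\lVert\ket{h_j}-\ket{h_{j-1}}\rVert=\lVert (O_H-O_{\tilde H})\ket{\chi_j}\rVert\le 2\lVert P_S\ket{\chi_j}\rVert$.

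To finish, I would chain the hybrids with the triangle inequality and then apply Cauchy--Schwarz to the $d$-term sum:
\[
\lVert\ket{\psi_d}-\ket{\phi_d}\rVert \;\le\; \sum_{j=1}^{d}\lVert\ket{h_j}-\ket{h_{j-1}}\rVert \;\le\; 2\sum_{j=1}^{d}\lVert P_S\ket{\chi_j}\rVert \;\le\; 2\sqrt{d}\,\Bigl(\sum_{j=1}^{d}\lVert P_S\ket{\chi_j}\rVert^2\Bigr)^{1/2},
\]
and then swap the order of summation: $\sum_{j}\lVert P_S\ket{\chi_j}\rVert^2=\sum_{j}\sum_{x\in S}\sum_w|\alpha_{x,w,j}|^2=\sum_{x\in S}q_x=\sum_{x:\tilde H(x)\neq H(x)}q_x$, which is exactly the claimed bound $2\sqrt d\sqrt{\sum_{x:\tilde H(x)\neq H(x)}q_x}$. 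The only genuinely delicate point is the last step's index bookkeeping: one must be careful that ``the state $\ket{\chi_j}$ on which the $j$-th query acts'' is precisely the state whose amplitudes are $\alpha_{x,w,j}$ in the definition of $q_x$ (i.e.\ matching the lemma's ``immediately after $t$ queries'' convention, possibly after an inconsequential off-by-one), so that the per-query weights are neither double-counted nor missed; everything else is just the triangle inequality, unitarity, and Cauchy--Schwarz, so I do not anticipate any further obstacle.
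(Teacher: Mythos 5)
Your proof is correct: the hybrid states $\ket{h_r}$, the single-query switching bound $\lVert(O_H-O_{\tilde H})\ket{\chi_j}\rVert\le 2\lVert P_S\ket{\chi_j}\rVert$, and the final Cauchy--Schwarz step reproduce exactly the standard Bennett--Bernstein--Brassard--Vazirani hybrid argument, and the indexing point you flag is indeed harmless because the query unitary preserves the input register, so the weight on each $x$ is identical immediately before and immediately after a query. The paper itself gives no proof of this lemma---it imports it from \cite{BBBV97}---and your argument is essentially the one given in that reference.
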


To define Eve's algorithm, we would first introduce the following algorithm $\Bs'$ as the modified algorithm of $\Bs$:
\begin{enumerate}
    \item On receiving message $m_1$, first run $\Bs(m_1)$, and stop before the final output measurement.
    \item Repeat the following process $3d^2(\log d+\log(1/\epsilon))$ times: randomly choose $i\leftarrow[d]$, simulate $\Bs(m_1)$ to its $i$th query to the oracle, and measure the input register, obtaining output $x\in[2^n]$, and classically query the oracle to obtain $H(x)$.
    \item Measure the output register of the first $\Bs(m_1)$, and obtain key $k_B$ and the second message $m_2$.
\end{enumerate}
The first part of the the adversary algorithm $\Es$ would be generating $O(dn/\epsilon^2)$ copies of $m_1$, and repeating the $\Bs'$ algorithm on each of these copies without performing step 3. It records all the query input-output pairs $R_E=\{(x_E,H(x_E))\}$ obtained from step 2 of $\Bs$.

We denote the heavy weight set $W_B$ of $\Bs^{\ket{H}}(m_1)$ as $W_B=\{x\colon q_x\geq \epsilon^2/d^2\}$. We have the following lemma:
\begin{lemma}\label{lem:heavy_eve}
    Let $In_E$ be the set of input that is recorded in $R_E$, we have that
    \begin{align*}
        \Pr[W_{B}
        \not\subseteq In_{E}]\leq \epsilon.
    \end{align*}
\end{lemma}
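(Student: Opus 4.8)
The plan is a standard "heavy-query learning" argument, of the kind used to prove the classical Impagliazzo--Rudich result, but adapted to the modified algorithm $\Bs'$ which forces Bob's high-weight quantum queries to become a classical list. First I would fix $m_1$ and analyze a single execution of $\Bs'(m_1)$. Step 2 of $\Bs'$ samples a uniformly random query index $i\gets[d]$, runs $\Bs(m_1)$ up to its $i$-th query, and measures the input register; by the definition of the query weight $q_x$ (\Cref{lem:BBBV}), the probability that one such sample yields a particular input $x$ is exactly $\frac{1}{d}\sum_{w}|\alpha_{x,w,i}|^2$ averaged over $i$, i.e. $\frac{q_x}{d}$. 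Hence for any $x\in W_B$, which by definition has $q_x\ge \eps^2/d^2$, a single sampling round outputs $x$ with probability at least $\eps^2/d^3$. Since step 2 performs $3d^2(\log d+\log(1/\eps))$ independent such rounds, the probability that a fixed $x\in W_B$ is never sampled in the execution of $\Bs'$ is at most $(1-\eps^2/d^3)^{3d^2(\log d+\log(1/\eps))}$; I would like to bound this by something like $\eps/d$.

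The arithmetic here needs a little care because the exponent $3d^2(\log d + \log(1/\eps))$ is not obviously large enough relative to $d^3/\eps^2$ to make $(1-\eps^2/d^3)^{\text{exponent}}$ as small as $\eps/d$ — this is the point I would watch most closely. In fact $\Es$ does not run $\Bs'$ once: it generates $O(dn/\eps^2)$ copies of $m_1$ and runs (the query-recording part of) $\Bs'$ on each, recording every sampled input-output pair into $R_E$. So the effective number of independent sampling rounds in which a given $x$ could be caught is $O(dn/\eps^2)\cdot 3d^2(\log d+\log(1/\eps))$, which comfortably dominates $d^3/\eps^2$, and a Chernoff/union-type estimate gives that any fixed $x$ with $q_x\ge\eps^2/d^2$ is missed by \emph{all} runs with probability at most, say, $\eps/|W_B|$. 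The main subtlety is that $W_B$ is itself a random set (it depends on $m_1$ and, implicitly, on the oracle), and different runs of $\Bs'$ on different copies of $m_1$ see the \emph{same} classical $m_1$ but the query weights $q_x$ are a fixed function of $m_1$ and $H$ once those are fixed — so conditioned on $m_1$ and $H$, the set $W_B$ is deterministic and the sampling events across the $O(dn/\eps^2)$ copies are genuinely independent, which is exactly what makes the concentration argument go through.

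Next I would bound $|W_B|$: since $\sum_x q_x = d$ (each of the $d$ queries contributes total weight $1$ across all $x$) and every $x\in W_B$ has $q_x\ge\eps^2/d^2$, we get $|W_B|\le d^3/\eps^2$, a fixed polynomial. Then a union bound over the (at most $d^3/\eps^2$) elements of $W_B$ converts the per-element miss probability into $\Pr[W_B\not\subseteq In_E\mid m_1,H]\le \eps$, provided the per-element bound was taken to be $\eps\cdot\eps^2/d^3 = \eps^3/d^3$; choosing the number of copies $C\cdot dn/\eps^2$ with a large enough constant $C$ (and using $n$ as an upper bound on $\log d + \log(1/\eps)$ up to constants, since the domain is $[2^n]$) makes this work. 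Finally, averaging the conditional statement over the randomness of $m_1$ and the oracle $H$ yields the unconditional bound $\Pr[W_B\not\subseteq In_E]\le\eps$. The one genuine obstacle, as flagged above, is getting the quantitative trade-off between the number of copies $O(dn/\eps^2)$, the $3d^2(\log d+\log(1/\eps))$ inner repetitions, the heaviness threshold $\eps^2/d^2$, and the union-bound size $|W_B|\le d^3/\eps^2$ to close consistently with the parameters the paper has already committed to; everything else is a routine Chernoff-plus-union-bound calculation.
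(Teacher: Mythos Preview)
Your proposal is correct and follows essentially the same route as the paper: bound the per-sample hit probability of any $x\in W_B$ by $q_x/d\ge \eps^2/d^3$, multiply the $3d^2(\log d+\log(1/\eps))$ inner repetitions by the $O(dn/\eps^2)$ outer runs of $\Bs'$ to get a total of $O(d^3n(\log d+\log(1/\eps))/\eps^2)$ independent samples, conclude a per-element miss probability of at most $\eps^3/d^3$, and union-bound over $|W_B|\le d^3/\eps^2$. Your extra care about conditioning on $(m_1,H)$ to ensure independence across copies is sound and is simply left implicit in the paper's terse calculation.
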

\begin{proof}
For each $x\in W_B$, it would be measured w.p. at least $\epsilon^2/d^3$ at each step 2 $\Bs'$ performed by $\Es$. Thus the probability it is not measured is bounded by %
\begin{align*}
    \Pr[x\notin In_E]\leq \left(1-\frac{\epsilon^2}{d^3}\right)^{3d^3n(\log d+\log(1/\epsilon))/\epsilon^2}\leq \epsilon^3/d^3.
\end{align*}
Since $\sum_{x}q_x=d$, we have $|W_{B}|\leq d^3/\epsilon^2$, thus by a union bound we obtain the desired result.
\end{proof}

Since step 2 of $\Bs'$ would not affect the output of the algorithm, we can equivalently think as the key agreement protocol consists of $\As$ and $\Bs'$. Note that in this case, $\Es$ holds $O(dn/\epsilon^2)$ copies of registers that suffice the permutation invariant condition. Consider the joint state $\rho_{ABE}$ at the stage before $\Bs'$ performing step 3, by~\Cref{lemma:permuation-invariance} we have that $I(\A:\B\mid\E)\leq \epsilon^2$. Moreover, applying the channel $\Ts\colon \E\to\E\otimes \A'$ 
from~\Cref{thm:quantum_mutual_info_operational}, we can generate a state $\rho_{A'BE}$ such that $TD(\rho_{ABE},\rho_{A'BE})\leq \epsilon$.

For the rest of this section, we are only interested in the following registers of $\A,\B,\E$: 
\begin{itemize}
    \item The classical internal state $st_A$ and oracle query input-output pairs $R_A=\{(x_A,H(x_A))\}$. This assumption can be made since the secret key $\sk$ is classical.
    \item The output register consists of the key $k_B$ and the message $m_2$ of $\Bs$.
    \item The input-output pairs recorded by Eve $R_E=\{(x_E,H(x_E))\}$.
\end{itemize}
Without specification, we reuse $\A,\B,\E$ for the beyond registers respectively. We denote the measurement in computational basis on these registers as $\Pi_{ABE}$, and the measurement outcome distribution as $D_{ABE}$. Similarly we can define $\Pi_{A'BE}$ and $D_{A'BE}$. 

By perfect completeness, any sample $\view_{ABE}=(st_A,R_A,k_B,m_2,R_E)\leftarrow D_{ABE}$ would be a valid execution. That is to say, there is some oracle $H'$ that is consistent with $R_A$ and $R_E$, and running $\As_2$ on $H'$ would always output $k_A=k_B$. It also implies for any $x\in In_A\cap In_E$, the corresponding input-output pair should also be consistent for $R_A$ and $R_E$. 

Since $TD(\rho_{ABE},\rho_{A'BE})\leq \epsilon$, we have that $TV(D_{ABE},D_{A'BE})\leq \epsilon$ by the operational meaning of trace distance. By~\Cref{lem:support}, we have that
\begin{align*}
    \Pr_{\view_{A'BE}\leftarrow D_{A'BE}}[\view_{A'BE}\notin\supp(D_{ABE})]\leq 2\epsilon.
\end{align*}

Now we show that if we reprogram the oracle $H$ to make it consistent with $R_{A'}=\{(x_A,H'(x_A))\}$ from the new sample $\view_{A'BE}$, with high probability the algorithm $\Bs$ is consistent with $\Tilde{H}$ and output $(k_B,m_2)$. We define the reprogrammed oracle $\Tilde{H}$ as follows:
\begin{align*}
    \Tilde{H}(x)=\begin{cases} H'(x), &x\in In_A;\\
    H(x),&\text{else.}
    \end{cases}.
\end{align*}

We have the following theorem,
\begin{theorem}\label{thm:key_compatible}
For the reprogrammed oracle $\Tilde{H}$ defined as beyond, for any quantum algorithm $\Bs$ making $d$ queries to the oracle
\begin{align*}
\Pr_{(k_{B},m_2)\leftarrow \Bs^{\ket{H}}(m_1)}\left[(k_{B},m_2)\in\supp\left(\Bs^{\ket{\Tilde{H}}}(m_1)\right)\middle\vert \view_{A'BE}\in\supp(D_{ABE}) \right]\geq 1-O(\epsilon),
\end{align*}
here we slightly abuse the notation $\Bs^{\ket{H}}(m_1)$ for the classical output distribution of $\Bs$.
\end{theorem}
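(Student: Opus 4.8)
The plan is to show that replacing $H$ by the reprogrammed oracle $\tilde H$ perturbs the run of $\Bs^{\ket{H}}(m_1)$ only by $O(\epsilon)$ in Euclidean distance, via the BBBV bound (\Cref{lem:BBBV}), and then convert this closeness into the desired support statement through \Cref{lem:support}. The first step is to pin down exactly where $\tilde H$ and $H$ disagree. Suppose $\view_{A'BE}=(st_{A'},R_{A'},k_B,m_2,R_E)\in\supp(D_{ABE})$, and let $In_{A'}$ (resp.\ $In_E$) be the set of inputs appearing in $R_{A'}$ (resp.\ $R_E$). By definition $\tilde H$ agrees with $H$ outside $In_{A'}$. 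Since the tuple lies in $\supp(D_{ABE})$, there is an oracle $H'$ under which an honest execution produces exactly $(st_{A'},R_{A'},k_B,m_2,R_E)$; because $\As_1=\gen$ makes only classical queries, $R_{A'}$ records the true $H'$-values on $In_{A'}$, which is precisely why $\tilde H=H'$ on $In_{A'}$; and because Eve's recorded pairs $R_E$ are consistent with $H'$ while $R_E$ stores the real values of $H$, we get $H'=H$ on $In_E$. Hence on $In_{A'}\cap In_E$ we have $\tilde H=H'=H$, so $\{x:\tilde H(x)\ne H(x)\}\subseteq In_{A'}\setminus In_E$. Moreover $\view_{A'BE}\in\supp(D_{ABE})$ forces $|In_{A'}|=|R_{A'}|\le d$, since honest key generation makes at most $d$ queries.

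Next I would bound Bob's query weight on this difference set. By \Cref{lem:heavy_eve}, except with probability at most $\epsilon$ the heavy-weight set $W_B=\{x:q_x\ge\epsilon^2/d^2\}$ of $\Bs^{\ket{H}}(m_1)$ is contained in $In_E$; on that event every $x\in In_{A'}\setminus In_E$ is light, i.e.\ $q_x<\epsilon^2/d^2$, so $\sum_{x:\tilde H(x)\ne H(x)}q_x\le|In_{A'}|\cdot\epsilon^2/d^2\le\epsilon^2/d$. Plugging into \Cref{lem:BBBV} with the two oracles $H$ and $\tilde H$ gives $\lVert\ket{\psi_d}-\ket{\phi_d}\rVert\le 2\sqrt{d}\cdot\sqrt{\epsilon^2/d}=2\epsilon$, where $\ket{\psi_d},\ket{\phi_d}$ are the final states of $\Bs$ run on $H$ and on $\tilde H$ respectively. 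Consequently the two classical output distributions of $\Bs^{\ket{H}}(m_1)$ and $\Bs^{\ket{\tilde H}}(m_1)$ are $2\epsilon$-close in total variation distance (a computational-basis measurement does not increase trace distance, which for pure states is at most the Euclidean distance). Since $(k_B,m_2)$ is drawn from $\Bs^{\ket{H}}(m_1)$, \Cref{lem:support} then yields $(k_B,m_2)\in\supp(\Bs^{\ket{\tilde H}}(m_1))$ except with probability at most $4\epsilon$.

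Finally I would assemble the bounds and handle the conditioning. Over the whole probability space the bad events are: (i) $W_B\not\subseteq In_E$, probability $\le\epsilon$ by \Cref{lem:heavy_eve}; and (ii) given that (i) fails and that $\view_{A'BE}\in\supp(D_{ABE})$, the support-transfer failure above, probability $O(\epsilon)$. Because $TV(D_{A'BE},D_{ABE})\le O(\epsilon)$, \Cref{lem:support} also gives $\Pr[\view_{A'BE}\in\supp(D_{ABE})]\ge 1-O(\epsilon)$, so conditioning on this event only inflates probabilities by a factor $1/(1-O(\epsilon))$; a union bound then delivers the claimed $1-O(\epsilon)$. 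The step I expect to be most delicate is exactly this last bookkeeping: $\tilde H$, the weights $q_x$ and set $W_B$, the random sets $In_E,In_{A'}$, and the measured tuple $(st_{A'},R_{A'},k_B,m_2,R_E)$ are all correlated, and the conditioning event involves the sampled tuple itself, so the BBBV estimate and \Cref{lem:support} must be applied to the correct conditional distributions — for each fixed realization of $H$, $m_1$, and $\view_{A'BE}$ the oracle $\tilde H$ is determined, the estimates are applied pointwise and then averaged. One also relies on the fact used when defining $\Bs'$ that its heavy-query measurements do not disturb Bob's output register, so that the produced $(k_B,m_2)$ really is distributed as the output of plain $\Bs^{\ket{H}}(m_1)$, and on the fact that $TD(\rho_{ABE},\rho_{A'BE})\le O(\epsilon)$ survives the measurement $\Pi_{A'BE}$.
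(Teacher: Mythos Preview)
Your proposal is correct and follows essentially the same approach as the paper: combine \Cref{lem:heavy_eve} with the support bound $\Pr[\view_{A'BE}\in\supp(D_{ABE})]\ge 1-O(\epsilon)$, then apply the BBBV estimate \Cref{lem:BBBV} on the at most $d$ reprogrammed points (each light once $W_B\subseteq In_E$), and finish with \Cref{lem:support}. In fact your explicit argument that $\{x:\tilde H(x)\ne H(x)\}\subseteq In_{A'}\setminus In_E$ makes clearer than the paper's writeup exactly why each reprogrammed point has weight below $\epsilon^2/d^2$.
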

\begin{proof}
    Combining~\Cref{thm:quantum_mutual_info_operational} and~\Cref{lem:heavy_eve,} we have that $$\Pr[\view_{A'BE}\in\supp(D_{ABE})\land W_B\subseteq In_{E}]\geq 1-\Pr[\view_{A'BE}\notin \supp(D_{ABE})]-\Pr[W_B\not\subseteq In_{E}]\geq 1-O(\epsilon).$$ 

    Given $\view_{A'BE}\in \supp(D_{ABE})$ and $W_\B\subseteq In_{E}$, we apply~\Cref{lem:BBBV} and obtain that 
    \begin{align*}
        TV\left(\Bs^{\ket{\Tilde{H}}}(m_1),\Bs^{\ket{H}}(m_1)\right)&\leq 4||\ket{\psi_d}-\ket{\phi_d}||\\
        &\leq 8\sqrt{d}\sqrt{\sum_{x\colon\Tilde{H}(x)\neq H(x)}q_x}\\
        &\leq 8\sqrt{d}\sqrt{d\cdot\frac{\epsilon^2}{d^2}}=O\left(\epsilon\right),
    \end{align*}
    where the first inequality comes from~\cite[Theorem~3.1]{BBBV97}, the second inequality is~\Cref{lem:BBBV}, and the third inequality is by that $|\{x\colon\Tilde{H}(x)\neq H(x) \}|\leq|In_{A}|\leq d$.
    By~\Cref{lem:support}, we have that
    \begin{align*}
        \Pr_{(k_{B},m_2)\leftarrow \Bs^{\ket{H}}(m_1)}\left[(k_{B},m_2)\in\supp\left(\Bs^{\ket{\Tilde{H}}}(m_1)\right)\middle\vert \view_{A'BE}\in\supp(D_{ABE})\land W_B\subseteq In_{E}\right]\geq 1-O\left(\epsilon\right).
    \end{align*}
    The final statement can be obtained by a conditional probability formula.
\end{proof}

Now we will prove the first main theorem of the section.
\begin{theorem}\label{thm:perfect_PKE}
    For any public key encryption scheme in the random oracle model, if it satisfies the following conditions:
    \begin{enumerate}
        \item Perfect completeness;
        \item The key generation algorithm makes at most $d$ classical queries to the oracle;
        \item The encryption algorithm makes at most $d$ quantum queries to the oracle;
        \item The decryption algorithm makes at most $D$ quantum queries to the oracle.
    \end{enumerate}
    There exists an adversary Eve that could break the public key encryption scheme w.p. $1-O(\epsilon)$ by making $O(d^4n(\log d+\log(1/\epsilon))/\epsilon^2+D)$ queries to the oracle. 
\end{theorem}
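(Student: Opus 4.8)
The plan is to instantiate the generic CMI-based eavesdropper developed in this section against the two-round key agreement protocol obtained from the QPKE scheme, handling the new wrinkle that the decryption algorithm $\As_2$ now makes up to $D$ quantum queries rather than none. First I would set up the protocol exactly as before: Alice's first stage $\As_1$ makes $d$ classical queries and outputs $(\pk,\sk,m_1)$ with $\sk$ classical, Bob runs the modified algorithm $\Bs'$ (which also performs the $3d^2(\log d + \log(1/\eps))$ sampling rounds of step 2 to populate $R_E$ with heavy queries, by \Cref{lem:heavy_eve}), and Alice's second stage $\As_2$ decrypts to recover the key. Eve generates $O(dn/\eps^2)$ copies of $m_1$, runs $\Bs'$ (without step 3) on each, records $R_E$, and — invoking \Cref{lemma:permuation-invariance} on the permutation-invariant copies together with \Cref{lem:ccnotcmi} to absorb the classical transcript — gets $I(\A:\B \mid \E) \le \eps^2$ at the stage right before $\Bs'$ produces $(k_B, m_2)$. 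Applying the Fawzi–Renner channel (\Cref{thm:quantum_mutual_info_operational}) to $\E$, Eve obtains a fake Alice view $\rho_{A'BE}$ with $TD(\rho_{ABE}, \rho_{A'BE}) \le \eps$, hence $TV(D_{A'BE}, D_{ABE}) \le \eps$ on the classical registers $(st_A, R_A, k_B, m_2, R_E)$, and by \Cref{lem:support} a fresh sample $\view_{A'BE}$ lies in $\supp(D_{ABE})$ except with probability $2\eps$.

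Next I would carry out the oracle-reprogramming argument. Given $\view_{A'BE} = (st_{A'}, R_{A'}, k_B, m_2, R_E) \in \supp(D_{ABE})$, define $\widetilde H$ to agree with $H'$ (the consistent oracle witnessing membership in the support) on $In_{A'}$ and with the real $H$ elsewhere. By \Cref{thm:key_compatible}, conditioned on $\view_{A'BE} \in \supp(D_{ABE})$, with probability $1 - O(\eps)$ the pair $(k_B, m_2)$ is in the support of $\Bs^{\ket{\widetilde H}}(m_1)$ — this is where the heavy-query bookkeeping and \Cref{lem:BBBV} pay off, since the reprogramming only touches the $\le d$ inputs in $In_{A'}$, all of which have query weight below $\eps^2/d^2$ for Bob (those of higher weight are already in $In_E$, hence consistent). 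So the tuple $(st_{A'}, R_{A'}, k_B, m_2)$ is a genuinely reachable partial execution of $(\As_1, \Bs)$ under the single oracle $\widetilde H$.

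The final and most delicate step is simulating $\As_2$: Eve should run $\As_2$ on the fake classical view $st_{A'}$ using the oracle $\widetilde H$ — i.e., the real oracle $H$ but reprogrammed to be consistent with $R_{A'}$ — and this costs $D$ extra quantum queries. Since $(st_{A'}, R_{A'}, k_B, m_2)$ is reachable under $\widetilde H$ and the scheme is \emph{perfectly} complete, running $\As_2^{\ket{\widetilde H}}(st_{A'}, m_2)$ must output exactly $k_B$. Collecting the error terms — $O(\eps)$ from the Fawzi–Renner approximation and the support test, $\eps$ from \Cref{lem:heavy_eve}, $O(\eps)$ from \Cref{thm:key_compatible}, and $\negl(\lambda)$ from the correctness slack of the KA itself (which is why perfect completeness is really needed here) — gives $\Pr[k_E = k_B] \ge 1 - O(\eps)$. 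The query count is $O(dn/\eps^2)$ copies of $m_1$, each running $\Bs'$ with $d + 3d^2(\log d + \log(1/\eps)) \cdot \text{(one classical query)}$ queries, giving $O(d^4 n (\log d + \log(1/\eps))/\eps^2)$ for the CMI phase, plus $D$ for the final decryption, as claimed. The main obstacle is precisely the coupling in the last step: one must ensure that the \emph{same} reprogrammed oracle $\widetilde H$ simultaneously makes Bob's output $(k_B,m_2)$ reachable and makes Alice's fake decryption view $st_{A'}$ consistent — this is exactly the "combine Attempt 1 and Attempt 2" idea from the overview, and it hinges on $In_{A'}$ being disjoint (as a reprogramming locus) from Bob's heavy queries, which in turn is forced by having $\Bs'$ pre-record all heavy queries into $R_E$ before the Markov-chain reconstruction.
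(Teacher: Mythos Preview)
Your proposal is correct and follows essentially the same approach as the paper: run $\Bs'$ on $O(dn/\eps^2)$ copies of $m_1$ to drive the CMI below $\eps^2$, apply the Fawzi--Renner recovery channel to sample a fake Alice view $\view_{A'BE}$, reprogram the real oracle on $In_{A'}$ to obtain $\widetilde H$, invoke \Cref{thm:key_compatible} to certify $(k_B,m_2)$ is reachable under $\widetilde H$, and then run $\As_2$ under $\widetilde H$ to extract $k_B$ by perfect completeness. The only minor slip is in your per-run query accounting for $\Bs'$ (each of the $3d^2(\log d+\log(1/\eps))$ repetitions of step~2 costs up to $d$ oracle queries to simulate $\Bs$ to its $i$th query, not just the one classical query you wrote), but your final total $O(d^4 n(\log d+\log(1/\eps))/\eps^2 + D)$ is correct.
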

\begin{proof}
    We have already described most of the adversary algorithm $\Es$ as we prove the beyond lemmata and theorems. We summarize the algorithm $\Es$ as follows: it first generates $O(dn/\epsilon^2)$ copies of $m_1$, and repeats the $\Bs'$ algorithm on each of these copies without performing step 3. It applies the channel $\Ts$ to the register $\E$(including the full workspace) and generates a state $\rho_{A'BE}$. Applying $\Pi_{A'BE}$ to $\rho_{A'BE}$ to obtain a sample $\view_{A'BE}=(st
    _{A'},R_{A'},k_B,m_2,R_E)$. It runs the algorithm $\As_2(m_2)$ on register $\A'$ with the reprogrammed oracle $\Tilde{H}$, and obtaining the final output $k_E=k_{A'}$.

    Now we argue the correctness of the algorithm. Assume that before performing $\Pi_{A'BE}$, we first perform $\Pi_{ABE}$ on the state $\rho_{ABE}$. This assumption can be made since $\Pi_{A'BE}$ and $\Pi_{ABE}$ commute. By perfect completeness of the protocol, we can see that the result $\view_{ABE}$ is a valid internal state, compatible with oracle $H$. Now we discuss the measurement result of $\Pi_{A'BE}$. Note that in this case, the $(k_B,m_2,R_E)$ in $\view_{A'BE}$ is the same as in $\view_{ABE}$. 
    
    Consider the case that $\view_{A'BE}\in \supp(D_{ABE})$, i.e. $\view_{A'BE}$ is a valid execution for some oracle $H'$,  
    we now argue that actually $\view_{A'BE}$ is also a valid execution for oracle $\Tilde{H}$ with high probability.
    From the perspective of $\A'$, the result $st_{A'},R_{A'}$ is compatible with $\Tilde{H}$, and will output message ${m_1}$. 
    From the perspective of $\Bs^{\ket{H}}(m_1)$, before the measurement on the output register, it can be viewed as a distribution of key-message pairs $\{(k_B,m_2)\}$. Now we imagine we are running the algorithm $\Bs^{\ket{\Tilde{H}}}(m_1)$ instead. By~\Cref{thm:key_compatible}, we can see that with probability $1-O(\epsilon)$, $(k_B,m_2)\leftarrow\Bs^{\ket{H}}(m_1)$ would also be in the support of $\Bs^{\ket{\Tilde{H}}}(m_1)$. 
        
    Together they imply that $\view_{A'BE}$ would be a valid execution for $\As,\Bs$ under oracle $\Tilde{H}$ when $\Bs$ outputs key $k_B$. Thus if we run $\As_2(m_2)$ with the reprogrammed oracle $\Tilde{H}$, by perfect completeness, it would also output $k_A'=k_B$.

    For each run of $\Bs'$, it would simulate $\Bs$ for $O(d^2(\log(d)+\log(1/\epsilon)))$ times, and each time it makes at most $d$ queries. For the adversary algorithm $\Es$, it would run $\Bs'$ for $O(dn/\epsilon^2)$ times, giving us the query complexity beyond. 
\end{proof}

Reviewing our proof beyond, we notice that $m_1$ may not necessarily be a classical message. As long as $\ket{m_1}$ is a pure state, and we can obtain any polynomial copies of $\ket{m_1}$,  the analysis in our proof exactly applies. Recall the IND-CPA Security from~\Cref{def:PKE_qpk}, if the public key is pure, the adversary algorithm $\Es$ can obtain polynomial many copies of $\ket{\pk}$. For any message $m_0\neq m_1$, we can construct a key agreement for one bit by setting the second message as the ciphertext $\ct_0/\ct_1$ respectively. Thus by running our $\Es$ beyond, we can break the IND-CPA security game with advantage $1-O(\epsilon)$.

Now we show that our attack would still apply when the ciphertext is also quantum. 
The proof would be similar to the proof beyond. By running the same algorithm $\Es$ we obtained the state $\rho_{A'BE}$, we would perform the measurement $\Pi_{A'BE}$, but for the $B$ part we do not include the ciphertext $\rho_{m_2}$ register. We can define $\Tilde{H}$ as beyond, and prove the following statements given $\view_{A'BE}$ is a valid execution:
\begin{enumerate}
    \item From the perspective of $\A'$, the result $st_{A'},R_{A'}$ is compatible with $\Tilde{H}$, and will output message $\ket{m_1}$. By definition, $R_{A'}$ is compatible with $\Tilde{H}$. Since $\B$ is not affected by channel $\Ts$, by uncomputing $\Bs$ on state $\rho_{A'B}$, we can see that $\Bs$ would also receive $\ket{m_1}$ from $\A'$. From the observation, we can see that after performing the uncomputation,  $TD(\rho_A,\rho_{A'})=TD(\rho_{A'B},\rho_{AB})\leq \epsilon$, thus $st_{A'},R_{A'}$ is compatible with output $\ket{m_1}$ under oracle $\Tilde{H}$ w.p. $1-\epsilon$.
    \item From the perspective of $\Bs^{\ket{H}}(\ket{m_1})$, the state before the measurement $\Pi_{A'BE}$ can be written as $\rho_B=\sum_{k_B}p_{k_B}\ket{k_B}\bra{k_B}\otimes\rho_{k_B}$. Similarly, we denote the state of $\Bs^{\ket{\Tilde{H}}}(\ket{m_1})$ as $\sigma_B=\sum_{k_B}p'_{k_B}\ket{k_B}\bra{k_B}\otimes\sigma_{k_B}$.  In the proof of~\Cref{thm:key_compatible}, when we apply~\Cref{lem:BBBV}, we can obtain that $TD(\rho_B,\sigma_B)\leq O (\epsilon)$. Now we consider the classical part of $\rho_{B}$ and $\sigma_{B}$. Since partial trace will not increase the trace distance, we have for distribution $D_{B}=\{p_{k_B}\}$ and $D_{B}'=\{p'_{k_B}\}$, $TV(D_{B}, D_{B}')\leq \epsilon$. Thus for state $\rho'_{AB}=\sum_{k_B}p'_{k_B}\ket{k_B}\bra{k_B}\otimes\rho_{k_B}$, $TD(\rho_{B},\rho'_{B})\leq \epsilon$, and by triangular inequality, $TD(\rho'_{B},\sigma_{B})\leq 2\epsilon$.
Since $TD(\rho'_{B},\sigma_{B})=\mathbb{E}[TD(\rho_{k_B},\sigma_{k_B})]$,
using Markov inequality, we can see that
\begin{align*}
    \Pr_{(\sk,k_B)\leftarrow D_{A'B}}[TD(\rho_{k_B},\sigma_{k_B})\leq C\epsilon]\geq 1-\frac{2}{C}.
\end{align*}
    If we take $C={1}/{\sqrt{\epsilon}}$, we can obtain that w.p. $1-O(\sqrt{\epsilon})$, $TD(\rho_{k_B},\sigma_{k_B})\leq\sqrt{\epsilon}$.
\end{enumerate}

 We can see that $\ket{m_1},\sigma_{k_B}$ would be a valid transcript for $\As,\Bs$ under oracle $\Tilde{H}$. Thus when $\rho_{k_B}$ and $\sigma_{k_B}$ is $O(\sqrt{\epsilon})$ close,  $\As^{\Tilde{H}}(\sk,\rho_{k_B})$ will output $k_{A'}=k_B$ with probability $1-O(\sqrt{\epsilon})$. Thus we obtain the following theorem:

For the clonable public key case $\rho_{m_1}=\sum_{i}p_i\ket{\psi_i}\bra{\psi_i}$, we observe that the output of $\Bs(\rho_{m_1})$ would be the convex combination of $\Bs(\ket{\psi_i})$. Thus by the perfect completeness property of the protocol, each $\ket{\psi_i}$ would also be a valid public key. Further observing that the clonable case is a convex combination of pure public keys, we have the following theorem:
 
\begin{theorem}\label{thm:perfect_qPKE}
    For any QPKE scheme with quantum public key in the random oracle model, if it satisfies the following conditions:
    \begin{enumerate}
        \item Perfect completeness;
        \item The public key $\rho_{\pk}$ is pure or clonable.
        \item The key generation algorithm makes at most $d$ classical queries to the oracle;
        \item The encryption algorithm makes at most $d$ quantum queries to the oracle;
        \item The decryption algorithm makes at most $D$ quantum queries to the oracle.
    \end{enumerate}
    There exists an adversary Eve that could break the public key encryption scheme w.p. $1-O(\sqrt{\epsilon})$ by making $O(d^4n(\log d+\log(1/\epsilon))/\epsilon^2+D)$ queries to the oracle. 
\end{theorem}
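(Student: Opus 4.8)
The plan is to bootstrap the already-established \Cref{thm:perfect_PKE} rather than redo the whole argument: I would inspect its proof, locate the (only) place where it uses that the first message $m_1=\pk$ is classical, and check that it survives when $\rho_{\pk}$ is pure or clonable and when the ciphertext is quantum. In the proof of \Cref{thm:perfect_PKE}, classicality of $m_1$ enters in essentially one spot — Eve produces $O(dn/\eps^2)$ copies of $m_1$ and runs the modified encryptor $\Bs'$ on each, so that the resulting Bob-registers together with the honest one are permutation invariant and \Cref{lemma:permuation-invariance} gives $I(\A:\B\mid\E)\le\eps^2$. When $\rho_{\pk}=\ket{\pk}\bra{\pk}$ is pure, the IND-CPA definition (\Cref{def:PKE_qpk}) hands the adversary polynomially many copies $\ket{\pk}^{\otimes t}$; since $\Bs$ (the encryptor, which internally samples the one-bit key being agreed upon) is then run on genuine fresh copies of the \emph{same} pure state, these Bob-registers remain i.i.d.\ and hence permutation invariant, and the CMI bound goes through verbatim. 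Everything downstream — the reconstructed fake Alice $\A'$ from \Cref{thm:quantum_mutual_info_operational}, the reprogrammed oracle $\Tilde H$, and \Cref{lem:heavy_eve}, \Cref{lem:BBBV}, \Cref{thm:key_compatible} — refers only to the number of queries $\Bs$ makes and never to the nature of $m_1$, so it transfers with no change. This already yields the pure-key, classical-ciphertext case with success $1-O(\eps)$ and the stated query bound.

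For a quantum ciphertext the reconstruction $\rho_{A'B}\approx_\eps\rho_{AB}$ is now a trace-distance statement and Bob's message register carries a quantum ciphertext $\rho_{\ct}$, so I would argue in two steps. First, uncomputing $\Bs$ on $\rho_{A'B}$ is legitimate because the channel $\Ts$ of \Cref{thm:quantum_mutual_info_operational} touches only $\E$; the reduced state on $\A'$ is then $\eps$-close to the honest one, and because the honest Alice outputs $\ket{m_1}$ consistently with $H$ (hence with $\Tilde H$, which differs only off $\Bs$'s heavy-query set), the fake Alice outputs $\ket{m_1}$ consistently with $\Tilde H$ with probability $1-\eps$. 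Second, on the event of \Cref{thm:key_compatible} (heavy queries captured, negligible query weight on the reprogrammed points), \Cref{lem:BBBV} bounds the trace distance between the joint (key, ciphertext) states of $\Bs^{\ket H}(\ket{m_1})$ and $\Bs^{\ket{\Tilde H}}(\ket{m_1})$ by $O(\eps)$; tracing out the ciphertext makes the two key-distributions $\eps$-close, so after re-weighting the classical marginal the averaged ciphertext distance $\mathbb{E}_{k_B}\big[TD(\rho_{k_B},\sigma_{k_B})\big]$ is $O(\eps)$, and Markov gives $TD(\rho_{k_B},\sigma_{k_B})\le\sqrt\eps$ except with probability $O(\sqrt\eps)$. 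On that good event $(\ket{m_1},\sigma_{k_B})$ is a genuine $\Tilde H$-transcript with Bob's key $k_B$, so perfect completeness forces $\dec^{\ket{\Tilde H}}(\sk_{A'},\sigma_{k_B})=k_B$ with certainty, and the $\sqrt\eps$-closeness of $\rho_{k_B}$ then yields $\dec^{\ket{\Tilde H}}(\sk_{A'},\rho_{k_B})=k_B$ with probability $1-O(\sqrt\eps)$. Summing the errors gives the pure-key, quantum-ciphertext case with success $1-O(\sqrt\eps)$.

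The clonable case reduces to the pure case. Writing $\rho_{\pk}=\sum_i p_i\ket{\psi_i}\bra{\psi_i}$ in its eigenbasis, perfect completeness forces each $\ket{\psi_i}$ with $p_i>0$ to be a valid pure public key on its own, since $\enc^{\ket H}(\rho_{\pk},m)=\sum_i p_i\,\enc^{\ket H}(\ket{\psi_i},m)$ as a mixture and decryption succeeds with probability $1$ against the mixture. By the clonability hypothesis Eve applies $\Ds$ to a single copy of $\rho_{\pk}$ to obtain $\sum_i p_i\ket{\psi_i}\bra{\psi_i}^{\otimes t}$, i.e.\ Eve is placed in a classical mixture, indexed by $i$, of ``pure public key $\ket{\psi_i}$'' worlds in which she holds $t$ honest copies of $\ket{\psi_i}$; in each world the challenge ciphertext, the simulated Bobs and the reconstruction all live in the single branch $\ket{\psi_i}$, so \Cref{lemma:permuation-invariance} applies branch-by-branch and the pure-case attack of the previous two paragraphs runs verbatim. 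The overall success probability is then the $p_i$-average of the per-branch probabilities, each $\ge1-O(\sqrt\eps)$. For the query count, one invocation of $\Bs'$ simulates $\Bs$ $O(d^2(\log d+\log(1/\eps)))$ times with $d$ queries each, Eve runs $\Bs'$ on $O(dn/\eps^2)$ copies, the final $\dec$ costs $D$ queries, and $\Ds$ and Fawzi--Renner's channel make no queries, for a total of $O(d^4 n(\log d+\log(1/\eps))/\eps^2+D)$.

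I expect the main obstacle to be the second step of the quantum-ciphertext argument: one must be careful that the $\eps$-error from \Cref{thm:quantum_mutual_info_operational} and the extra $\sqrt\eps$ loss from Markov still leave $(\ket{m_1},\sigma_{k_B})$ a legitimate transcript under $\Tilde H$, and, more subtly, that perfect completeness — a statement about the honest \emph{mixed} ciphertext — correctly transfers to the $\sqrt\eps$-perturbed conditional ciphertext $\rho_{k_B}$. A secondary subtlety is verifying in the clonable case that after applying $\Ds$ the challenge ciphertext and Eve's simulated encryptors really do sit in a common pure branch $\ket{\psi_i}$, so that permutation invariance holds branchwise and not merely on average.
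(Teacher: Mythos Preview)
Your proposal is correct and follows essentially the same route as the paper: both bootstrap the classical-public-key attack (\Cref{thm:perfect_PKE}) by observing that a pure (or clonable) $\rho_{\pk}$ still yields the permutation-invariant copies needed for \Cref{lemma:permuation-invariance}, then handle the quantum ciphertext via the identical uncomputing argument, BBBV bound, classical-marginal re-weighting, and Markov step with threshold $\sqrt{\eps}$ to obtain the $1-O(\sqrt{\eps})$ success, and finally reduce the clonable case to the pure case by decomposing $\rho_{\pk}$ in its eigenbasis and invoking perfect completeness on each branch. The subtleties you flag (transferring perfect completeness to the perturbed conditional ciphertext, and ensuring the challenge and simulated Bobs lie in a common eigenbranch) are exactly the points the paper treats, and your treatment of them matches the paper's.
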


This theorem gives a tight characterization of multiple existing QPKE schemes. In~\cite{barooti2023publickey} and~\cite{coladangelo2023quantum}, they both provided a QPKE scheme with a pure quantum public key, but their public key generation algorithms need to make quantum queries. Our result shows that the quantum query is necessary for their key agreement scheme. In~\cite{kitagawa2023quantum,cryptoeprint:2023/500}, they provided another QPKE scheme where the quantum public key is mixed, but the key generation algorithm can only make classical queries. Our result shows that their key must be mixed and unclonable in a strong sense.

\printbibliography

\appendix 

\section{Additional Preliminaries}
\subsection{Compressed Oracle} \label{sec:CO}

\newcommand{\Hsf}{\mathsf{H}}

The analysis of \Cref{sec:non-adaptive} relies on the compressed oracle
techinque of Zhandry \cite{zhandry2019record}. Here, We briefly introduce the
ideas we needed. Zhandry shows that the standard quantum random oracle is
perfectly indistinguishable from the purified random oracle by any unbound
quantum adversary. For a standard random oracle $H:[2^n]\to \{0,1\}$, the initial state of the purified random oracle is
$\sum_{H\in\{0,1\}^{2^n}}\ket{H}$. The oracle access to $\ket{H}$ is defined by
a unitary transformation $U_H$ where
\[U_H\ket{x,y}\ket{H}:=(-1)^{y\cdot H(x)}\ket{x,y}\ket{H}.\] Given a vector
$S\in\{0,1\}^{2^n}$, define $\ket{\hat{D}}:=\sum_{H\in\{0,1\}^{2^n}}
(-1)^{\langle H, D\rangle }\ket{H}$ where $\langle\cdot,\cdot\rangle$ denotes
inner product. We call $\left\{\ket{\hat{D}}\right\}$ the Fourier basis of the oracle
space. Then we have following lemma.
\begin{lemma} \label{lem:CO}
    Let $\As$ be a quantum algorithm that makes $d$ queries to the random
    oracle, the final state of $\As$ can be written as $\sum_{|D|\leq
    d}\alpha_D\ket{\psi_D}\ket{\hat{D}}$ where $|D|$ denotes the number of
    non-zero entries in $D$.
\end{lemma}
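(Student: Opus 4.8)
The plan is to induct on the number of oracle queries, tracking the support of the joint (algorithm $+$ purified-oracle) state in the Fourier basis $\left\{\ket{\hat{D}}\right\}$ of the oracle register. First I would note that the initial state of the purified oracle, $\sum_{H\in\{0,1\}^{2^n}}\ket{H}$, equals (up to normalization) $\ket{\hat{0}}$, the Fourier basis vector indexed by the all-zero vector $0\in\{0,1\}^{2^n}$, which has $|0|=0$. So before any query the joint state lies in the span of vectors $\ket{\psi}\ket{\hat{D}}$ with $|D|\le 0$.

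The key computation is the action of a single query unitary $U_H$ on a Fourier basis vector. For any $x\in[2^n]$, $y\in\{0,1\}$, and $D\in\{0,1\}^{2^n}$,
\begin{align*}
U_H\paren{\ket{x,y}\otimes\ket{\hat{D}}}
&=\ket{x,y}\otimes\sum_{H}(-1)^{y\cdot H(x)}(-1)^{\langle H,D\rangle}\ket{H}\\
&=\ket{x,y}\otimes\sum_{H}(-1)^{\langle H,\,D\oplus y\cdot e_x\rangle}\ket{H}
=\ket{x,y}\otimes\ket{\widehat{D\oplus y\cdot e_x}},
\end{align*}
where $e_x\in\{0,1\}^{2^n}$ is the indicator of coordinate $x$ and all exponents are taken modulo $2$. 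By linearity, applying $U_H$ to any state supported on $\ket{\hat{D}}$ with $|D|\le t$ produces a state supported on $\ket{\hat{D'}}$ with $|D'|\le t+1$, since $D\oplus y\cdot e_x$ differs from $D$ in at most one coordinate (and in none when $y=0$).

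I would then assemble the induction. The algorithm $\As$ alternates local unitaries $U_i$, which act only on $\As$'s internal register and hence leave the oracle register — and therefore its Fourier-basis support — untouched, with the $d$ query unitaries $U_H$. Starting from Fourier weight $0$, each query increases the weight of the support by at most one and each local unitary does not change it; hence after all $d$ queries the final state lies in the span of $\ket{\psi_D}\ket{\hat{D}}$ with $|D|\le d$. Collecting terms with the same $D$ and absorbing the (unnormalized) amplitudes into $\alpha_D$ and $\ket{\psi_D}$ yields exactly the stated form $\sum_{|D|\le d}\alpha_D\ket{\psi_D}\ket{\hat{D}}$.

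The only delicate point — more a step to state carefully than a genuine obstacle — is the Fourier-basis identity for $U_H$: one must verify that the phases combine as $(-1)^{\langle H,D\rangle+yH(x)}=(-1)^{\langle H,\,D\oplus y e_x\rangle}$ in $\mathbb{F}_2$, and that $\ket{\hat{D}}\mapsto\ket{\widehat{D\oplus y e_x}}$ is the correctly normalized permutation of Fourier basis vectors (equivalently, that $U_H$ is diagonalized by the coordinate-wise Hadamard transform on the oracle register). Everything else is bookkeeping of which register each unitary acts on, together with the observation that Hamming weight changes by at most one per query.
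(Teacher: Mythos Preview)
Your proposal is correct and follows essentially the same approach as the paper: induction on the number of queries, observing that the initial purified oracle state is $\ket{\hat{0}}$, and computing that $U_H$ maps $\ket{x,y}\ket{\hat{D}}$ to $\ket{x,y}\ket{\widehat{D\oplus y\cdot e_x}}$ so each query increases the Hamming weight of the Fourier index by at most one. Your notation $D\oplus y\cdot e_x$ is arguably clearer than the paper's $D\oplus(x,y)$, but the argument is the same.
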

\begin{proof}
    Prove by induction on $d$. When $d=0$, the initial joint state of working
    register $\A$, the query register $\Q$ and oracle register $\Hsf$ can be written as
    $\ket{\psi}_{\A\Q}\sum_{H\in\{0,1\}^{2^n}}\ket{H}_{\Hsf}=
    \ket{\psi}_{\A\Q}\ket{\hat{\textbf{0}}}_\Hsf$. Supposing the
    statement holds for $d$, then the current state can be written as
    $\sum_{u,x,y,|D|\leq d} \alpha_{u,x,y,D}\ket{u}_\A \ket{x,y}_\Q
    \ket{\hat{D}}_\Hsf$. After making one more query, the state becomes
    \begin{align*}
        \sum_{u,x,y,|D|\leq d} \alpha_{u,x,y,D}\ket{u}_\A U_H\ket{x,y}_\Q \ket{\hat{D}}_\Hsf
        &=\sum_{u,x,y,|D|\leq d} \alpha_{u,x,y,D}\ket{u}_\A U_H\ket{x,y}_\Q 
        \sum_{H}(-1)^{\langle H, D\rangle }\ket{H}_{\Hsf}\\
        &=\sum_{u,x,y,|D|\leq d} \alpha_{u,x,y,D}\ket{u}_\A \ket{x,y}_\Q 
        \sum_{H}(-1)^{\langle H, D\rangle + y \cdot H(x)}\ket{H}_{\Hsf}\\
        &=\sum_{u,x,y,|D|\leq d} \alpha_{u,x,y,D}\ket{u}_\A \ket{x,y}_\Q 
        \sum_{H}(-1)^{\langle H, D\oplus (x,y)\rangle}\ket{H}_{\Hsf}\\
        &=\sum_{u,x,y,|D|\leq d} \alpha_{u,x,y,D}\ket{u}_\A U_H\ket{x,y}_\Q 
        \ket{\widehat{D\oplus(x,y)}}_\Hsf
    \end{align*}
    where $|D\oplus(x,y)|\leq d+1$.
\end{proof}

\section{Non-Adaptive Quantum Key Agreement}\label{sec:non-adaptive}

Non-adaptive key agreement protocol is a special case of non-interactive
protocol in \Cref{sec:noninteractive}. It restricts Alice and Bob by only
allowing them to make non-adaptive queries, i.e., queries can depend on neither
any previous queries nor classical communication. 
The famous Merkle Puzzles and PRS-based protocols are both examples of this non-adaptive case. Formally,
\begin{definition}[Non-adaptive Quantum Key Agreement]
    A QKA protocol between Alice and Bob with access to a random oracle $H$ is
    called non-adaptive if (i) the protocol is non-interactive (see
    \Cref{def:NIKA}), (ii) Alice's (Bob's) query algorithm consists of the
    following steps:
    \begin{enumerate}
        \item Prepare a quantum state
        $\sigma_1\otimes\sigma_2\otimes\cdots\otimes\sigma_{d}$ where
        $\sigma_i$ is the state of the input and output registers of the $i$-th
        query.     
        \item Make the $d$ queries to $H$ in parallel.
    \end{enumerate}
\end{definition}

We show that for this special
case, the number of queries Eve needed can be improved from $O(d^2 e)$
to $\widetilde{O}(d^2)$, which is independent of the input size of the random
oracle. We remark that unlike \Cref{thm:non-interactive}, this result only works for random oracle.

\begin{theorem} \label{thm:NAKA}
    Given a non-adaptive QKA protocol between Alice and Bob with access to a
    random oracle $H$, let $k_A, k_B$ be key of Alice and Bob respectively. If
    $\Pr[k_A=k_B]\geq \delta$, then there exists an eavesdropper Eve who outputs
    her key $k_E$ with $O\left(\frac{d^2}{\epsilon^2}\log^2 \frac{d}{\epsilon}\right)$ queries such that
    $\Pr[k_E=k_A]\geq \delta-\epsilon$.
\end{theorem}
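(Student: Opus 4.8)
The plan is to run the eavesdropper of \Cref{thm:non-interactive} essentially unchanged, improving only the single step that charges the oracle's input length. Recall that there Eve runs Alice's query algorithm $t$ times on fresh registers to obtain permutation-invariant copies $\A_1^0,\dots,\A_t^0$ of Alice's post-query state, uses the chain rule and permutation invariance (as in \Cref{lemma:permuation-invariance}) to locate an index $i$ with $I(\A^0:\B^0\mid\A_1^0,\dots,\A_{i-1}^0)$ small, notes via \Cref{lem:cmi_op,lem:ccnotcmi} that this survives the (query-free) communication phase, applies the Fawzi--Renner reconstruction \Cref{thm:quantum_mutual_info_operational} to $\E^0=(\A_1^0,\dots,\A_{i-1}^0,\Pi)$ to synthesize a fake Alice $\hat{\A}^f$ with $\rho_{\hat{\A}^f\B^f}\approx\rho_{\A^f\B^f}$, and runs Alice's output computation on $\hat{\A}^f$ to produce $k_E$; with $\Pr[k_A=k_B]\ge\delta$ and reconstruction error $O(\epsilon)$, the completeness bookkeeping of \Cref{thm:non-interactive,thm:perfect_PKE} gives $\Pr[k_E=k_A]\ge\delta-O(\epsilon)$. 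In that argument the index $i$ only satisfies $I(\A^0:\B^0\mid\A_1^0,\dots,\A_{i-1}^0)\le S(\B^0)/t\le 2d\,e_\lambda/t$, which forces $t=\Omega(d\,e_\lambda/\epsilon^2)$; to remove the $e_\lambda$ (hence $n_\lambda$) factor I would replace the bound $I(\A_1^0,\dots,\A_t^0:\B^0)\le S(\B^0)$ used there by a much sharper one.

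Concretely, the claim to prove is: for non-adaptive queries to a uniformly random oracle, $I(\A_1^0,\dots,\A_t^0:\B^0\mid\Pi)=O\!\big(d\log^2(d/\epsilon)\big)$ for \emph{every} $t$, independent of $t$ and of $n_\lambda$. Feeding this into the chain rule as in \Cref{lemma:permuation-invariance} gives $I(\A^0:\B^0\mid\A_1^0,\dots,\A_{i-1}^0,\Pi)\le O(d\log^2(d/\epsilon))/t\le O(\epsilon^2)$ once $t=\Theta\!\big(\tfrac{d}{\epsilon^2}\log^2\tfrac{d}{\epsilon}\big)$, so Eve makes only $O\!\big(\tfrac{d^2}{\epsilon^2}\log^2\tfrac{d}{\epsilon}\big)$ queries. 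To establish the claim I would invoke Zhandry's compressed oracle (\Cref{lem:CO}): with the oracle purified, Bob's post-query state is entangled with the oracle register only through Fourier components $\ket{\hat D}$ with $|D|\le d$, so Bob's state is a function of the oracle only through a ``record'' touching at most $d$ positions; the same holds for each rerun of Alice's algorithm, so every bit of correlation between the $t$ copies of Alice and Bob must flow through the at most $d$ positions that two such records share. Non-adaptivity is what turns this into a clean count: the query input registers are oracle-independent, so Eve can profile the distribution of recorded positions with no oracle queries, split each record into a ``heavy'' part supported on a fixed $\poly(d/\epsilon)$-size set of positions and a ``light'' tail, bound the information in the heavy part by $O(d\log(d/\epsilon))$ (one bit per recorded value, plus $O(\log(d/\epsilon))$ per slot to name which heavy position it landed on), and bound the light tail's effect on the reconstruction by an $O(\epsilon)$ trace-distance error via the BBBV estimate \Cref{lem:BBBV}.

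I expect this compressed-oracle mutual-information bound to be the main obstacle. The subtlety is that the parties may spread their query weight thinly over exponentially many inputs, so one cannot afford to enumerate ``which positions appear in a record'' --- that would reintroduce the $n_\lambda$ factor --- and must instead argue that, up to $O(\epsilon)$ in trace distance, Bob's state depends on the oracle only through a $\widetilde{O}(d)$-bit classical register that is moreover reconstructible from Eve's view, by combining the Fourier-support structure of \Cref{lem:CO}, a weight threshold, and \Cref{lem:BBBV}. This is the only place where both the uniform-random-oracle structure and non-adaptivity are genuinely needed, which is also why --- unlike \Cref{thm:non-interactive} --- the theorem is stated only for a random oracle; the rest of the proof (permutation invariance, Fawzi--Renner, the imperfect-completeness bookkeeping) is identical to what is already developed in \Cref{sec:noninteractive,sec:classical_keygen}.
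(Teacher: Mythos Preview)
Your central claim—that $I(\A_1^0,\dots,\A_t^0:\B^0)=O(d\log^2(d/\epsilon))$ uniformly in $t$—is false, and the heavy/light split does not establish it. Take $d=1$ with the single non-adaptive query at a uniformly random input with phase bit $1$. Then every position has weight $2^{-n}$, so there are \emph{no} heavy positions: your ``heavy part'' is empty and the entire record is light tail. BBBV controls the state perturbation only when the \emph{total} weight on the reprogrammed set is $O(\epsilon^2/d)$, but here the light positions carry all the query mass, so you cannot excise them. In this example the compressed-oracle record $D^1$ is a uniform ${\bf e}_x$, and one computes $I(\A_1^0,\dots,\A_t^0:\B^0)=S(D^1)+S(D^t)-S(D^{t+1})\approx\log(t{+}1)$, unbounded in $t$; indeed as $t\to\infty$ the copies determine $H$, so $S(\B^0\mid\E^0)\to 0$ and the mutual information tends to $S(\B^0)=n$.

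The paper does not bound the total mutual information at all. It computes the CMI directly: by \Cref{lem:CO} and the orthogonality of the $\ket{\psi_D}$ that is specific to non-adaptive queries, $I(\A^0:\B^0\mid\E^0)=2S(D^{t+1})-S(D^t)-S(D^{t+2})$, a \emph{second difference} of classical random-walk entropies. The device for bounding this is Poissonization: repeat each of the $d$ independent query slots $\mathrm{Pois}(\log\mu d)$ times. Because the slots are independent and each touches a single coordinate, the $2^n$ bits of the Poissonized record $\widetilde D^t$ become mutually independent; bit $i$ is the parity of a Poisson variable, with bias $\tfrac12(1-e^{-2tq_i})$ given explicitly. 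Then $S(\widetilde D^t)=\sum_i H\bigl(\tfrac12(1-e^{-2tq_i})\bigr)$, and an elementary calculus estimate on the second difference of each summand gives $O(q_i/t)$, summing to $O(d\log\mu d/t)$ with no $n$-dependence. The rest (Fawzi--Renner, the communication phase via \Cref{lem:ccnotcmi}) is exactly as in \Cref{thm:non-interactive}.
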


The rest of this section is a proof of \Cref{thm:NAKA}. Similar to
non-interactive case, our goal is to show that Eve can somehow repeat Alice's
query algorithm to make CMI sufficiently small so that she can recover the key.

\subsection{Recasting It as a Classical Random Walk Problem}

Since queries are non-adaptive, we assume Alice and Bob use the same query
algorithm denoted by $\mathcal{Q}$ without loss of generality. Let $H:[2^n]\rightarrow
\{0,1\}$ be the random oracle. By \Cref{lem:CO}, the state after applying $\mathcal{Q}$
once can be written as
\[
\sum_{D\in\{0,1\}^{2^{n}}:|D|\leq d} \alpha_D\ket{\psi_D} \ket{\hat{D}}
\]
where $\ket{\psi_D}$ are states in the working register and $\ket{\hat{D}}$ are
the Fourier basis states in the oracle register (see \Cref{sec:CO}). The
following lemma describes properties of the states $\ket{\psi_D}$, which follows
directly from \Cref{lem:CO}.

\begin{lemma}\label{lem:43}
\begin{enumerate}[label=(\alph*),ref=(\alph*)]
    \item \label{lem:43a} If the query is non-adaptive, then $\ket{\psi_D}$ and
    $\ket{\psi_{D'}}$ are orthogonal if $D\neq D'$.
    \item \label{lem:43b} If we repeat $\mathcal{Q}$ for $t$ times, then the joint state of working
    register and oracle register is
    \[
        \sum_{D_1,D_2,\ldots,D_t} \alpha_{D_1}\alpha_{D_2}\cdots\alpha_{D_t} 
        \ket{\psi_{D_1}}\ket{\psi_{D_2}}\cdots \ket{\psi_{D_t}}
        \ket{\widehat{\oplus_{i=1}^t D_i}}.
    \]
\end{enumerate}
\end{lemma}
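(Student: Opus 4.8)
The plan is to obtain both parts directly from \Cref{lem:CO} by exploiting the one‑round parallel structure of a non‑adaptive query. First I would fix notation: write the state of a single run of $\mathcal{Q}$ just before its queries as $\sum_{w,\vec x,\vec y}\gamma_{w,\vec x,\vec y}\,\ket{w}\bigotimes_{i=1}^{d}\ket{x_i,y_i}$ on the workspace and the $d$ input/output register pairs, tensored with the Fourier‑basis oracle state $\ket{\hat{\mathbf 0}}=\sum_H\ket{H}$. Because the query is non‑adaptive, all $d$ queries are the mutually commuting unitaries $U_H^{(1)},\dots,U_H^{(d)}$ applied simultaneously, and in the phase picture of \Cref{sec:CO} each $U_H^{(i)}$ multiplies $\ket{x_i,y_i}\ket{H}$ by $(-1)^{y_i\cdot H(x_i)}=(-1)^{\langle H,\,(x_i,y_i)\rangle}$, where $(x_i,y_i)$ is read as the length‑$2^n$ indicator vector ($y_i$ in coordinate $x_i$, zero elsewhere). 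So the total accumulated phase on $\ket H$ is $(-1)^{\langle H,D\rangle}$ with $D:=\bigoplus_{i=1}^d (x_i,y_i)$ (addition mod $2$), whence $|D|\le d$ and the post‑query state is $\sum_D\alpha_D\ket{\psi_D}\ket{\hat D}$, where $\alpha_D\ket{\psi_D}:=\sum_{(w,\vec x,\vec y):\,\bigoplus_i(x_i,y_i)=D}\gamma_{w,\vec x,\vec y}\ket{w}\bigotimes_i\ket{x_i,y_i}$.

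For \Cref{lem:43a} I would then observe that, for distinct $D$, the unnormalized branches $\alpha_D\ket{\psi_D}$ are supported on disjoint sets of computational basis states (each basis vector $\ket{w}\bigotimes_i\ket{x_i,y_i}$ determines $D$ uniquely), hence are mutually orthogonal, and normalizing gives $\langle\psi_D|\psi_{D'}\rangle=0$ for $D\neq D'$; any subsequent oracle‑independent unitary is unitary on these registers and preserves orthogonality, so the statement is robust to an optional post‑processing step inside $\mathcal{Q}$. I expect the conceptual point worth stressing here — and the one place non‑adaptivity is truly used — is the contrast with the adaptive case: there an oracle‑dependent processing unitary applied between queries rotates the (orthogonal) branches before the next query regroups them, so the recombination into a single $\ket{\psi_D}$ need not remain orthogonal across $D$. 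Making this contrast precise is the only genuinely conceptual step; everything else is bookkeeping.

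For \Cref{lem:43b} I would run the $t$ copies of $\mathcal{Q}$ one after another, each acting on its own fresh workspace/query registers $(w^{(j)},\vec x^{(j)},\vec y^{(j)})$ but on the shared oracle register, and induct on $t$. The base case $t=1$ is exactly the decomposition above. For the step, assume that after $j-1$ copies the joint state is $\sum_{D_1,\dots,D_{j-1}}\bigl(\prod_{\ell<j}\alpha_{D_\ell}\ket{\psi_{D_\ell}}^{(\ell)}\bigr)\,\ket{\widehat{\oplus_{\ell<j}D_\ell}}$; applying $\mathcal{Q}$ on copy $j$ multiplies each oracle branch $\ket H$ by the additional phase $(-1)^{\langle H,D_j\rangle}$ with $D_j=\bigoplus_i(x_i^{(j)},y_i^{(j)})$, and since $(-1)^{\langle H,D\rangle}(-1)^{\langle H,D_j\rangle}=(-1)^{\langle H,D\oplus D_j\rangle}$ this turns $\ket{\widehat{\oplus_{\ell<j}D_\ell}}$ into $\ket{\widehat{\oplus_{\ell\le j}D_\ell}}$ while producing the factor $\alpha_{D_j}\ket{\psi_{D_j}}^{(j)}$ exactly as in part~(a). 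Iterating to $j=t$ gives the claimed expression.

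The main obstacle is not any real difficulty but careful bookkeeping: keeping the oracle‑register normalization consistent, invoking cleanly the identity that $\ket{\hat D}$ acquiring a query phase $(-1)^{\langle H,D'\rangle}$ equals $\ket{\widehat{D\oplus D'}}$ (already implicit in the proof of \Cref{lem:CO}), and pinning down exactly what ``non‑adaptive'' grants us — a single parallel query round with no interleaved oracle‑dependent operations — which is precisely what makes the branch‑grouping clean enough to yield orthogonality in \Cref{lem:43a}.
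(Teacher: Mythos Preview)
Your proposal is correct and matches the paper's intended argument: the paper gives no explicit proof, stating only that the lemma ``follows directly from \Cref{lem:CO}'', and your write-up is precisely the natural unpacking of that claim --- reading off $D$ from the computational basis state $(\vec x,\vec y)$ to get orthogonality in (a), and using the XOR rule $\ket{\hat D}\mapsto\ket{\widehat{D\oplus D'}}$ (the inductive step of \Cref{lem:CO}) to accumulate the $t$ copies in (b). Your remark that non-adaptivity is exactly what prevents oracle-dependent interleaving from destroying the disjoint-support structure is the right conceptual point, and your robustness observation about an optional oracle-independent post-processing unitary is a harmless strengthening.
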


Let $\mathcal{D}$ denote the distribution on $\{0,1\}^{2^n}$ where $\Pr_{D\sim
\mathcal{D}}[D=x]=|\alpha_{x}|^2$. Define random variable $D^t:=D_1\oplus
\cdots\oplus D_t$ where $D_1,D_2,\ldots, D_t$ are i.i.d.~samples drawn from
$\mathcal{D}$.  We remark that $D^t$ can be viewed as a random walk with $t$
steps in the hypercube $\{0,1\}^{2^{n}}$ where each step is sampled by
distribution $\mathcal{D}$. 

We first consider the same attack strategy as in
\Cref{sec:noninteractive} where Eve repeats $\mathcal{Q}$ for sufficient large times. We
have the following lemma.
\begin{lemma} \label{lem:nonadaptive_cmi} Let $\A^0$ and $\B^0$ denote the state
    of Alice and Bob right after applying $\mathcal{Q}$ respectively,  $\E^0$ be the state
    of Eve repeating $\mathcal{Q}$ for $t$ times. Then $I\left(\A^0:\B^0\mid
    \E^0\right)=2S\left(D^{t+1}\right)
    -S\left(D^{t}\right)-S\left(D^{t+2}\right)$.
\end{lemma}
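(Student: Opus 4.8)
The plan is to pass to Zhandry's purified-oracle picture (\Cref{sec:CO}) and use the explicit Fourier-basis form of the post-query state supplied by \Cref{lem:CO} and \Cref{lem:43}. Replace the random oracle by a register $\Hsf$ initialized in $\ket{\hat{\mathbf 0}}$, and have Alice, Bob and Eve all act unitarily on this single shared register: Alice runs $\mathcal Q$ once, Bob runs $\mathcal Q$ once, and Eve runs $\mathcal Q$ a total of $t$ times on $t$ fresh workspace registers. Because the Fourier shifts induced by the queries all commute, combining \Cref{lem:43}(b) for Alice's run, Bob's run, and Eve's $t$ runs gives the joint \emph{pure} state on $\A^0\,\B^0\,\E^0\,\Hsf$:
\begin{equation*}
  \ket{\Phi}=\sum_{D_A,D_B,D_1,\dots,D_t}\alpha_{D_A}\alpha_{D_B}\Big(\textstyle\prod_{i=1}^{t}\alpha_{D_i}\Big)\ket{\psi_{D_A}}_{\A^0}\ket{\psi_{D_B}}_{\B^0}\ket{\psi_{D_1}}\cdots\ket{\psi_{D_t}}_{\E^0}\ket{\widehat{D_A\oplus D_B\oplus\bigoplus_{i=1}^{t}D_i}}_{\Hsf}.
\end{equation*}
By \Cref{lem:43}(a) the $\{\ket{\psi_D}\}_D$ are orthonormal inside each party's register, and $\{\ket{\hat D}\}_D$ is an orthonormal basis of $\Hsf$, so applying the local isometries $\ket{\psi_D}\mapsto\ket D$ and $\ket{\hat D}\mapsto\ket D$ leaves every entropy unchanged; henceforth I may treat $\A^0$ as holding $D_A\sim\mathcal D$, $\B^0$ as holding $D_B\sim\mathcal D$, $\E^0$ as holding $(D_1,\dots,D_t)$ i.i.d.\ $\sim\mathcal D$ (all mutually independent), and $\Hsf$ as holding $S:=D_A\oplus D_B\oplus D_1\oplus\cdots\oplus D_t$.

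Next I would use that the global state on $\A^0\B^0\E^0\Hsf$ is pure to reduce the target quantity to marginals involving $\Hsf$:
\begin{equation*}
  I(\A^0:\B^0\mid\E^0)=S(\A^0\E^0)+S(\B^0\E^0)-S(\A^0\B^0\E^0)-S(\E^0)=S(\B^0\Hsf)+S(\A^0\Hsf)-S(\Hsf)-S(\A^0\B^0\Hsf),
\end{equation*}
and then evaluate the four entropies on the right. The point is that each such marginal is \emph{block-diagonal with rank-one blocks}, indexed by an XOR-syndrome. Concretely, $\rho_\Hsf$ is diagonal in the computational basis with weights $\Pr[D^{t+2}=s]$, giving $S(\Hsf)=S(D^{t+2})$; tracing out $\B^0\E^0$ leaves $\rho_{\A^0\Hsf}$ supported on the mutually orthogonal subspaces $\{\ket a\ket{a\oplus c}:a\}$ indexed by $c=D_A\oplus S=D_B\oplus\bigoplus_i D_i$, rank one on each block with weight $\Pr[D^{t+1}=c]$, giving $S(\A^0\Hsf)=S(D^{t+1})$, and symmetrically $S(\B^0\Hsf)=S(D^{t+1})$; tracing out $\E^0$ leaves $\rho_{\A^0\B^0\Hsf}$ supported on the orthogonal subspaces indexed by $c=D_A\oplus D_B\oplus S=\bigoplus_i D_i$ with block weight $\Pr[D^{t}=c]$, giving $S(\A^0\B^0\Hsf)=S(D^{t})$. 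Substituting the four values yields $I(\A^0:\B^0\mid\E^0)=2S(D^{t+1})-S(D^{t})-S(D^{t+2})$.

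The entropy bookkeeping and the isometry step are routine. The one genuinely load-bearing point, which I would check carefully, is that after tracing out a party the remaining state is \emph{not} a classical mixture — it keeps coherences in $D_A$ (resp.\ in $D_A,D_B$) — but it is nonetheless block-diagonal with rank-one blocks labelled by the relevant XOR-syndrome; equivalently, the per-block vectors are mutually orthogonal for distinct syndrome values. This follows from the normalization $\sum_D|\alpha_D|^2=1$ together with the orthonormality in \Cref{lem:43}(a), and it is exactly what collapses each von Neumann entropy to the Shannon entropy of the corresponding XOR-convolution power of $\mathcal D$, i.e.\ to $S(D^{t}),S(D^{t+1})$ or $S(D^{t+2})$.
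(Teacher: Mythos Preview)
Your proposal is correct and rests on the same ingredients as the paper: the purified oracle, the explicit Fourier form from \Cref{lem:43}, and a purity/Schmidt argument to equate each entropy with a classical entropy of an XOR-convolution of $\mathcal D$.

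The execution differs slightly. You purify \emph{once} on the full system $\A^0\B^0\E^0\Hsf$, use complementarity to rewrite all four terms as $S(\B^0\Hsf),S(\A^0\Hsf),S(\Hsf),S(\A^0\B^0\Hsf)$, and then need the ``block-diagonal with rank-one blocks indexed by the XOR-syndrome'' observation to evaluate, e.g., $S(\A^0\Hsf)$. The paper instead treats each of the four entropies separately: to compute $S(\A^0\E^0)$ it notes that the marginal on $\A\E$ is unaffected by Bob's local unitary on $\B\Hsf$, so it may be computed from the pure state on $\A\E\Hsf$ in which only Alice and Eve have queried; then $S(\A^0\E^0)=S(\Hsf)$ and $\rho_\Hsf$ is immediately diagonal in the Fourier basis with weights $\Pr[D^{t+1}=\cdot]$. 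This sidesteps your block-diagonal step entirely---every time the paper traces out \emph{all} working registers at once, so the oracle marginal is manifestly diagonal by \Cref{lem:43}\ref{lem:43a}. Your route is a touch more work (you have to verify the orthogonality of the per-syndrome vectors $\ket{\phi_c}$, which you correctly flag as the load-bearing check), while the paper's four-separate-purifications route is a bit more economical; both arrive at the same identity.
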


\begin{proof}
    By the definition of CMI, we have
    \[I\left(\A^0:\B^0\mid
    \E^0\right)=S(\A^0\E^0)+S(\B^0\E^0)-S(\E^0)-S(\A^0\B^0\E^0).\] Then we
    compute each term separately. Since Alice and Eve together run $\mathcal{Q}$ for $t+1$
    times, by \Cref{lem:43}\ref{lem:43b} the joint state of Alice $\A$, Eve $\E$
    and Oracle $\Hsf$ can be viewed as pure state
    \[
        \ket{\phi}_{\A\E\Hsf}= \sum_{D_1,D_2,\ldots,D_{t+1}} \alpha_{D_1}\alpha_{D_2}\cdots\alpha_{D_{t+1}}
        \ket{\psi_{D_1}}_\A\left(\ket{\psi_{D_2}}\cdots \ket{\psi_{D_{t+1}}}\right)_\E 
        \ket{\widehat{\oplus_{i=1}^{t+1} D_{i}}}_\Hsf.
    \]
    Let $\sigma_{\A\E\Hsf}=\ket{\phi}_{\A\E\Hsf}\bra{\phi}_{\A\E\Hsf}, 
    \sigma_{\A\E}=\Tr_\Hsf(\sigma_{\A\E\Hsf}), \sigma_{\Hsf}=\Tr_{\A\E}(\sigma_{\A\E\Hsf})$.
    By \Cref{lem:43}\ref{lem:43a}, $\ket{\psi_{D}}$ are orthogonal states. Then
    we have
    \[
    \sigma_{\Hsf} = \Tr_{\A\E}(\sigma_{\A\E\Hsf}) = 
    \sum_{D_1,D_2,\ldots,D_{t+1}} \left|\alpha_{D_1}\alpha_{D_2}\cdots\alpha_{D_{t+1}}\right|^2
    \ket{\widehat{\oplus_{i=1}^{t+1} D_{i}}}
    \bra{\widehat{\oplus_{i=1}^{t+1} D_{i}}}.
    \]
    Since $\sigma_{\A\E\Hsf}$ is a pure state, we have
    $S(\sigma_{\A\E})=S(\sigma_{\Hsf})$. Then $S(\A^0\E^0) = S(\sigma_{\A\E}) =
    S(\sigma_{\Hsf}) = S(D^{t+1})$. Similarly, we have $S(\B^0\E^0)=S(D^{t+1})$ and
    $S(\E^0)=S(D^{t})$ and $S(\A^0\B^0\E^0)=S(D^{t+2})$. Thus
    $I\left(\A^0:\B^0\mid \E^0\right)=2S\left(D^{t+1}\right)
    -S\left(D^{t}\right)-S\left(D^{t+2}\right)$.
\end{proof}

Thus the quantum CMI can be expressed using the entropy of classical variable
$D^t$. If we can show that $2S\left(D^{t+1}\right)
-S\left(D^{t}\right)-S\left(D^{t+2}\right)$ is small, then we can conclude
\Cref{thm:NAKA}. Unfortunately, $S\left(D^{t}\right)$ is difficult to compute, so we slightly modify Eve's strategy using Poissonization trick. We first define Poissonized version of Alice:
\begin{enumerate}
\item For each $i\in[d]$, Alice samples an independent Poisson variable $P_i$ with
parameter $\log \mu d$ where $\mu\geq 2$, and then repeats the $i$-th
query for $P_i$ times.
\item If Alice repeats each query at least once, then she will execute the
original protocol and output the key as before. Otherwise, she aborts and
output key $0$.
\end{enumerate}
Poissonized version of Bob is defined similarly. We remark that Poissonized
Alice and Bob will make expected $O(d\log d)$ queries. W.p. $\left(1-e^{-\log
\mu d}\right)^{2d}\geq 1- 1/\mu$, Poissionized Alice and Bob will repeat each query
at least once and then agree on the same key as before. Thus if Eve can recover
Poissonized Alice and Bob's key, she can also recover the real key w.p. at least
$1-1/\mu$. We have the following theorem.
\begin{theorem} \label{thm:nonadaptive} 
    Given a positive integer $t$, let $\widetilde{\A}^0$ and $\widetilde{\B}^0$
    be the state of Poissonized Alice and Bob right after querying,
    $\widetilde{\E}^0$ be $t$ copies of $\widetilde{\A}^0$. Then
    $I\left(\widetilde{\A}^0:\widetilde{\B}^0\mid
    \widetilde{\E}^0\right)=O\left(d\log \mu d/t\right)$.
\end{theorem}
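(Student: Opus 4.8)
The plan is to run the argument of \Cref{lem:nonadaptive_cmi} for the Poissonized query procedure and then exploit the Poissonization to make the resulting classical entropy quantity factor over the oracle domain. First I would record the Poissonized analog of \Cref{lem:nonadaptive_cmi}: once the random repetition vector $\vec{P}=(P_1,\dots,P_d)$ is fixed, ``repeat the $i$-th query $P_i$ times'' is still a non-adaptive algorithm, and Poissonized Alice may keep $\vec{P}$ in a classical register, so the compressed-oracle description (\Cref{lem:CO}, \Cref{lem:43}) applies verbatim. Purifying the oracle, the joint state of Poissonized Alice, Bob, and Eve's $t$ copies together with the oracle is pure; tracing out all work registers (including the classical $\vec{P}$ registers) leaves the oracle register diagonal in the Fourier basis, with the eigenvalue at $\ket{\hat{E}}$ equal to $\Pr[\widetilde{D}^{(k)}=E]$, where $\widetilde{D}^{(k)}$ is the XOR of $k$ independent copies of the Fourier index $\widetilde{D}$ of a single Poissonized run and $k$ is the number of runs held jointly. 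Exactly as in \Cref{lem:nonadaptive_cmi} (Bob's queries do not affect the reduced state on Alice and Eve, orthogonality of the $\ket{\psi_D}$'s, etc.) this yields
\[
I\bigl(\widetilde{\A}^0:\widetilde{\B}^0\mid\widetilde{\E}^0\bigr)=2S\bigl(\widetilde{D}^{(t+1)}\bigr)-S\bigl(\widetilde{D}^{(t)}\bigr)-S\bigl(\widetilde{D}^{(t+2)}\bigr).
\]

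The key point is that Poissonization makes the coordinates of $\widetilde{D}$ independent. For each $i\in[d]$ let $q_i(x)$ be the weight the $i$-th query places on input $x$ with output bit $1$, so $\sum_x q_i(x)\le 1$, and set $w_x:=\sum_{i=1}^d q_i(x)$, so $\sum_x w_x\le d$. Each repetition of query $i$ contributes the Fourier step $e_x$ with probability $q_i(x)$ and the zero step otherwise; since query $i$ is repeated $\mathrm{Pois}(\lambda)$ times with $\lambda=\log\mu d$, Poisson splitting shows the total number of $e_x$-steps accumulated across all repetitions of all queries is $\mathrm{Pois}(\lambda w_x)$, jointly independent over $x$. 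Hence coordinate $x$ of $\widetilde{D}^{(k)}$ is $\mathrm{Bernoulli}(p_x^{(k)})$ with $p_x^{(k)}=\tfrac12\bigl(1-e^{-2\lambda k w_x}\bigr)$, independent over $x$, so $S(\widetilde{D}^{(k)})=\sum_x h(p_x^{(k)})$ with $h$ the binary entropy and
\[
I\bigl(\widetilde{\A}^0:\widetilde{\B}^0\mid\widetilde{\E}^0\bigr)=\sum_x\Bigl(2h(p_x^{(t+1)})-h(p_x^{(t)})-h(p_x^{(t+2)})\Bigr).
\]
This step crucially needs Poissonization: without it a fixed query produces exactly one step, so the indicators ``step $=e_x$'' are not independent across $x$.

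It remains to bound each summand by $O(w_x\log\mu d/t)$. Fixing $x$ (with $w_x>0$), set $c=2\lambda w_x$ and $g(k)=h\bigl(\tfrac12(1-e^{-ck})\bigr)$; this is a $C^2$ (in fact concave) function of $k$, so by a second-order Taylor expansion the summand equals $-\tfrac12\bigl(g''(\xi_1)+g''(\xi_2)\bigr)$ for some $\xi_1,\xi_2\in[t,t+2]$, hence is at most $\max_{\xi\in[t,t+2]}\abs{g''(\xi)}$ in absolute value, which is $O(c/t)$ once $\abs{g''(\xi)}=O(c/\xi)$ and $\xi\ge t$. Differentiating, $g''=h''(p)(p')^2+h'(p)p''$ with $p=\tfrac12(1-e^{-ck})$, $p'=\tfrac c2 e^{-ck}$, $p''=-\tfrac{c^2}{2}e^{-ck}$; splitting into the cases $ck\le1$ (where $p\ge ck/4$, $1-p\ge\tfrac12$, and $\abs{h'(p)}=O(\log(1/(ck)))$) and $ck>1$ (where $p$ and $1-p$ are bounded away from $0$), and using the elementary bounds $ue^{-u}=O(1)$ for $u\ge0$ and $u\log(1/u)=O(1)$ for $u\in(0,1]$, one gets $\abs{g''(k)}=O(c/k)$ in all cases. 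Therefore each summand is $O(c/t)=O(\lambda w_x/t)$, and summing over $x$ with $\sum_x w_x\le d$ gives $I(\widetilde{\A}^0:\widetilde{\B}^0\mid\widetilde{\E}^0)=O(\lambda d/t)=O(d\log\mu d/t)$.

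The step I expect to be the main obstacle is the uniform analytic estimate $\abs{g''(k)}=O(c/k)$: since binary entropy and its derivatives blow up near $0$, the bound must be argued separately for ``light'' coordinates ($ck\lesssim1$, where $p_x^{(k)}$ is itself tiny and $h'$ is logarithmically large) and ``heavy'' ones ($ck\gtrsim1$, where $p_x^{(k)}$ is bounded away from $0$ but $p''$ carries an extra factor $c$ that must be absorbed into $e^{-ck}$). A secondary, more conceptual, point requiring care is verifying that inside the compressed-oracle picture the independent-Poisson decomposition over oracle points really transfers to the eigenvalues of the oracle's reduced density matrix — this is exactly where carrying $\vec{P}$ in a classical register and the orthogonality of the $\ket{\psi_D}$ states from \Cref{lem:43} do the work.
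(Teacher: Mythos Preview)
Your proposal is correct and tracks the paper's proof closely: the paper likewise reduces the CMI to $2S(\widetilde{D}^{t+1})-S(\widetilde{D}^{t})-S(\widetilde{D}^{t+2})$ (stated as a separate lemma proved ``similarly as \Cref{lem:nonadaptive_cmi}''), then uses exactly your observation that Poissonization makes the $2^n$ bits of $\widetilde{D}^k$ mutually independent so that the entropy factors coordinatewise, with each bit $\mathrm{Bernoulli}\bigl(\tfrac12(1-e^{-2k\lambda w_x})\bigr)$. The only real divergence is in the per-coordinate analytic estimate. The paper sets $p=\lambda w_x$ and bounds $f(p):=2H\bigl(\tfrac{1-e^{-2tp}}{2}\bigr)-H\bigl(\tfrac{1-e^{-2(t-1)p}}{2}\bigr)-H\bigl(\tfrac{1-e^{-2(t+1)p}}{2}\bigr)$ by first using concavity of $H$ to control the first differences and then invoking a hand-computed inequality for $q^{t+1}\bigl(\tfrac1q\tanh^{-1}(q^{t-1})-\tanh^{-1}(q^{t+1})\bigr)$, treating $p\le 1$ and $p>1$ separately (the latter via $H(\tfrac12+x)=1+O(x^2)$). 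Your route via the second-derivative bound $|g''(k)|=O(c/k)$ is more direct and avoids the $\tanh^{-1}$ calculation; it handles both regimes through the same $ck\le 1$ vs.\ $ck>1$ split and yields the identical $O(\lambda w_x/t)$ per-coordinate bound. The ``main obstacle'' you flag (the blowup of $h',h''$ near $0$) is exactly what drives the paper's case split as well, so your anticipation is on target.
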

Before proving the above theorem, we first use it to conclude \Cref{thm:NAKA}.
\begin{proof}[Proof of \Cref{thm:NAKA}]
    Let $\widetilde{\A}^0$ and $\widetilde{\B}^0$ be the state of Poissonized
    Alice and Bob right after querying, $\widetilde{\A}^f$ and
    $\widetilde{\B}^f$ be the states of Poissonized Alice and Bob respectively
    right after finishing the communication but before outputting the key, $\Pi$
    denote the communications. Let Eve's attack strategy be as follows:
    \begin{enumerate}
        \item Eve repeats $\widetilde{\A}^0$ for $O\left(\frac{d\log \mu
        d}{\epsilon^2}\right)$ times and obtains state
        $\widetilde{\E}^0$.
        \item Eve applies the channel in \Cref{thm:qmarkov} w.r.t. system
        $\widetilde{\A}^f\widetilde{\E}^0\widetilde{\B}^f$ and obtains a fake
        view $\widetilde{\A}^{f'}$ of $\widetilde{\A}^f$. Then use
        $\widetilde{\A}^{f'}$ to output key $k_E$.
    \end{enumerate}
    By \Cref{thm:nonadaptive}, we have
    $I\left(\widetilde{\A}^0:\widetilde{\B}^0\mid\widetilde{\E}^0\right)=O(\epsilon^{2})$.
    Since classical communication does not increase CMI by \Cref{lem:ccnotcmi}
    and \Cref{lem:cmi_op}, we have
    $I\left(\widetilde{\A}^f:\widetilde{\B}^f\mid\widetilde{\E}^0,\Pi\right)\leq
    I\left(\widetilde{\A}^0:\widetilde{\B}^0\mid\widetilde{\E}^0\right)=O(\epsilon^{2})$.
    Then by \Cref{thm:quantum_mutual_info_operational}, $\widetilde{\A}^{f'}\widetilde{\B}^f$ is
    $O(\epsilon)$-close to $\widetilde{\A}^f\widetilde{\B}^f$. Let
    $\widetilde{k}_A$ be the key of Poissonized Alice. Then
    $\Pr\left[k_E=\widetilde{k}_A\right]\geq \delta-O(\epsilon)$. Since w.p.
    $1-1/\mu$ the Poissonized Alice and Bob will be the same as real Alice and
    Bob, we have $\Pr\left[k_E=k_A\right]\geq\delta-O(\epsilon)-1/\mu$. By
    setting $\mu=2/\epsilon$ and adjusting constants, we have
    $\Pr\left[k_E=k_A\right]\geq\delta-\epsilon$ and Eve needs $O\left(d\log\mu
    d\cdot\frac{d\log \mu
    d}{\epsilon^2}\right)=O(\frac{d^2}{\epsilon^2}\log^2\frac{d}{\epsilon})$
    queries.
\end{proof}

\subsection{\texorpdfstring{Proof of \Cref{thm:nonadaptive}}{Proof of The Non-adaptive Case}}

First observe that because non-adaptive queries do not depend on each other, the
distribution $\mathcal{D}$ can be viewed as the sum of $d$ \emph{independent}
distributions $\mathcal{D}[1], \mathcal{D}[2], \ldots, \mathcal{D}[d]$, all of
which are on $\{0,1\}^{2^{n}}$ with Hamming weight $1$. That is,
$\mathcal{D}=\mathcal{D}[1]\oplus \mathcal{D}[2]\oplus\cdots \oplus
\mathcal{D}[d]$. Given a positive integer $t$, define a random variable
$\widetilde{D}^{t}$ as follows:
\begin{enumerate}
    \item For each $\ell\in[d]$, first sample $\widetilde{t}_\ell$ from $\mathrm{Pois}(t\log \mu d)$
    independently and then draw i.i.d.~samples $D_1[\ell], D_2[\ell], \ldots,
    D_{\widetilde{t}_\ell}[\ell]$ from $\mathcal{D}[\ell]$. 
    \item Let
    $\widetilde{D}^{t}=\bigoplus_{\ell=1}^d\bigoplus_{i=1}^{\widetilde{t}_\ell}
    D_{i}[\ell]$.
\end{enumerate}
Then we have the following lemma, which can be proved similarly as \Cref{lem:nonadaptive_cmi}.
\begin{lemma}
    Given a positive integer $t$, let $\widetilde{\E}^0$ be $t$ copies of
    $\widetilde{\A}^0$. Then \[I\left(\widetilde{\A}^0:\widetilde{\B}^0\mid
    \widetilde{\E}^0\right)=2S\left(\widetilde{D}^{t+1}\right)
    -S\left(\widetilde{D}^{t}\right)-S\left(\widetilde{D}^{t+2}\right).\]
\end{lemma}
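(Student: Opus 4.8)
The plan is to mirror the proof of \Cref{lem:nonadaptive_cmi} essentially line by line, with the role of $D^t$ replaced by its Poissonized analogue $\widetilde{D}^t$. Expanding the definition of conditional mutual information,
\[
I\left(\widetilde{\A}^0:\widetilde{\B}^0\mid \widetilde{\E}^0\right)=S\left(\widetilde{\A}^0\widetilde{\E}^0\right)+S\left(\widetilde{\B}^0\widetilde{\E}^0\right)-S\left(\widetilde{\E}^0\right)-S\left(\widetilde{\A}^0\widetilde{\B}^0\widetilde{\E}^0\right),
\]
it suffices to establish the single claim: for every positive integer $k$, if a system $\mathsf{X}$ is obtained by running the Poissonized query algorithm $k$ independent times on the (shared) random oracle, then $S(\mathsf{X})=S(\widetilde{D}^{k})$. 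Granting the claim, and recalling that Alice, Bob, and each of Eve's $t$ registers run the same query algorithm $\mathcal{Q}$ (so ``Alice $+$ Eve'' and ``Bob $+$ Eve'' are each $t+1$ copies, ``Eve'' alone is $t$ copies, and ``Alice $+$ Bob $+$ Eve'' is $t+2$ copies), the four terms above equal $S(\widetilde{D}^{t+1})$, $S(\widetilde{D}^{t+1})$, $S(\widetilde{D}^{t})$, $S(\widetilde{D}^{t+2})$, and the stated identity follows.

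To prove the claim I would first purify the Poisson randomness: replace ``sample $P_\ell\sim\mathrm{Pois}(\log\mu d)$ and repeat slot $\ell$ that many times'' by coherently preparing a counter register $\sum_p\sqrt{\Pr[\mathrm{Pois}(\log\mu d)=p]}\ket{p}$ and performing the repetitions controlled on it (padding the query registers with $\ket{\bot}$ beyond the realized count). Since the counter is only measured at the very end, to decide the abort, this does not change the key distribution, but now the joint state of $\mathsf{X}$ together with the purified oracle register $\Hsf$ is a pure state. Next I would prove the Poissonized analogue of \Cref{lem:43}: because the queries are non-adaptive, each repetition of slot $\ell$ acts on a fixed input state, so by \Cref{lem:CO} it contributes to the Fourier oracle an independent vector of Hamming weight at most $1$ distributed as $\mathcal{D}[\ell]$, while leaving the working-and-counter registers in branches that are mutually orthogonal across distinct repetition counts and distinct recorded inputs. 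Hence tracing the $\mathsf{X}$ registers out of the pure global state leaves $\Hsf$ diagonal in the Fourier basis, with the eigenvalue of $\ket{\hat D}\bra{\hat D}$ equal to the probability that the XOR of all realized draws equals $D$; by additivity of Poisson variables the number of slot-$\ell$ draws summed over the $k$ copies is $\mathrm{Pois}(k\log\mu d)$, so this probability is exactly $\Pr[\widetilde{D}^{k}=D]$. Therefore $S(\Hsf)=S(\widetilde{D}^{k})$, and since the global state is pure, $S(\mathsf{X})=S(\Hsf)=S(\widetilde{D}^{k})$.

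I expect the main obstacle to be the bookkeeping in this Poissonized version of \Cref{lem:43}: one must verify that running a non-adaptive slot a coherently-controlled, variable number of times still yields independent Fourier recordings with the ``leftover'' branches genuinely orthogonal, so that the partial trace produces an honest classical mixture rather than one with surviving cross terms. The counter register is precisely what enforces orthogonality between branches with different repetition counts, which is why purifying the Poisson randomness into the party's own register (rather than leaving it as external classical randomness) is the right move. Everything else --- the CMI expansion, the ``pure state implies equal marginal entropies'' step, and Poisson additivity --- is routine and proceeds exactly as in \Cref{lem:nonadaptive_cmi}.
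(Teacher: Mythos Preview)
Your proposal is correct and takes essentially the same approach as the paper, which simply states that the lemma ``can be proved similarly as \Cref{lem:nonadaptive_cmi}.'' Your explicit purification of the Poisson counts into a counter register (so that the joint party--oracle state is pure and the branches with different repetition counts are orthogonal) is exactly the detail needed to carry the pure-state/Schmidt argument from \Cref{lem:nonadaptive_cmi} over to the Poissonized setting.
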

Thus \Cref{thm:nonadaptive} follows directly from the following theorem.
\begin{theorem}
    Given an integer $t>1$,
    $2S\left(\widetilde{D}^t\right)-S\left(\widetilde{D}^{t-1}\right)-S\left(\widetilde{D}^{t+1}\right)=O\left(d\log
    \mu d/t\right)$.
\end{theorem}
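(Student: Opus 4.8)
The plan is to turn this into a one-dimensional, coordinate-by-coordinate calculus statement by fully exploiting the Poisson structure. First I would make the distribution of $\widetilde{D}^{t}$ explicit. For $\ell\in[d]$ and $j\in[2^{n}]$ let $p_{\ell,j}$ be the mass $\mathcal{D}[\ell]$ places on the weight-one string $e_{j}$ (so $\sum_{j}p_{\ell,j}\le 1$). By Poisson thinning, out of the $\widetilde{t}_{\ell}\sim\mathrm{Pois}(t\log\mu d)$ i.i.d.\ samples drawn from $\mathcal{D}[\ell]$, the number equal to $e_{j}$ is $\mathrm{Pois}(t\log\mu d\cdot p_{\ell,j})$, and all these counts (over all pairs $(\ell,j)$) are mutually independent. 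Since the XOR of an odd number of copies of $e_{j}$ is $e_{j}$ and of an even number is $0$, and $\Pr[\mathrm{Pois}(\nu)\text{ is odd}]=\tfrac12(1-e^{-2\nu})$, the $j$-th bit of $\widetilde{D}^{t}$ is a XOR of independent Bernoullis; the identity $\Pr[\bigoplus_{i}\mathrm{Bern}(r_{i})=1]=\tfrac12(1-\prod_{i}(1-2r_{i}))$ then gives that this bit equals $1$ with probability $q_{j}^{(t)}=\tfrac12\bigl(1-e^{-2tP_{j}}\bigr)$, where $P_{j}:=\log\mu d\cdot\sum_{\ell}p_{\ell,j}$. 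Crucially the coordinates of $\widetilde{D}^{t}$ are mutually independent (they involve disjoint sets of the underlying Poisson variables), so $S(\widetilde{D}^{t})=\sum_{j}h\bigl(q_{j}^{(t)}\bigr)$, with $h$ the binary entropy function.

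Next I would reduce the target quantity to a second-derivative bound. Setting $\phi_{j}(x):=h\bigl(\tfrac12(1-e^{-2xP_{j}})\bigr)$, a smooth (in fact concave) function of $x\ge 0$, we have $2S(\widetilde{D}^{t})-S(\widetilde{D}^{t-1})-S(\widetilde{D}^{t+1})=\sum_{j}\bigl(2\phi_{j}(t)-\phi_{j}(t-1)-\phi_{j}(t+1)\bigr)$, and by the exact Taylor identity $f(t-1)+f(t+1)-2f(t)=\int_{-1}^{1}(1-|u|)f''(t+u)\,du$ (whose weight $\int_{-1}^{1}(1-|u|)\,du$ is $1$), it suffices to bound $\max_{s\in[t-1,t+1]}\bigl(-\phi_{j}''(s)\bigr)$ and sum over $j$.

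The technical heart is the calculus estimate. I would compute $-\phi_{j}''(s)=\dfrac{4P_{j}^{2}e^{-2u}}{1-e^{-2u}}+2P_{j}^{2}e^{-u}\ln\dfrac{1+e^{-u}}{1-e^{-u}}\ (\ge 0)$ with $u=2sP_{j}$, and show this is $O(P_{j}/t)$ uniformly for $s\in[t-1,t+1]$. The point is that on this range $u=2sP_{j}\ge tP_{j}$ (using $t\ge 2$, so $s\ge t/2$), hence $P_{j}\le u/t$; then $e^{2u}-1\ge 2u$ bounds the first term by $\tfrac{2P_{j}^{2}}{u}\le\tfrac{2P_{j}}{t}$, while $1-e^{-u}\ge\tfrac12\min(u,1)$ gives $\ln\tfrac{1+e^{-u}}{1-e^{-u}}\le\ln\tfrac{4}{\min(u,1)}$, and since $u\mapsto u\,e^{-u}\ln\tfrac{4}{\min(u,1)}$ is bounded on $(0,\infty)$ by an absolute constant, the second term is $O(P_{j}/t)$ too. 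Summing, $2S(\widetilde{D}^{t})-S(\widetilde{D}^{t-1})-S(\widetilde{D}^{t+1})\le O\bigl(\tfrac1t\sum_{j}P_{j}\bigr)$, and $\sum_{j}P_{j}=\log\mu d\cdot\sum_{\ell}\sum_{j}p_{\ell,j}\le d\log\mu d$, which yields the claimed $O(d\log\mu d/t)$.

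The step I expect to be the main obstacle is precisely this last calculus estimate: arranging to extract a factor $P_{j}$ \emph{and} a factor $1/t$ from $-\phi_{j}''$ simultaneously over the whole interval, in particular taming the small-$u$ regime where $\ln\tfrac{4}{\min(u,1)}\to\infty$ but is killed by the vanishing $u$ factor, and keeping the Poisson-parameter bookkeeping straight (the $\log\mu d$ versus $t\log\mu d$ scaling). Everything else --- Poisson thinning, the XOR-of-Bernoullis identity, independence of coordinates, and the Taylor identity --- is routine.
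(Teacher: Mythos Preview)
Your proposal is correct and follows the same overall architecture as the paper: Poisson thinning makes the $2^n$ coordinates of $\widetilde{D}^t$ mutually independent Bernoullis with parameter $\tfrac12(1-e^{-2tP_j})$, reducing everything to a per-coordinate one-variable estimate, which is then summed using $\sum_j P_j\le d\log\mu d$.

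The one genuine difference is in how the per-coordinate estimate is carried out. The paper sets $f(p)=2H\bigl(\tfrac{1-e^{-2tp}}{2}\bigr)-H\bigl(\tfrac{1-e^{-2(t-1)p}}{2}\bigr)-H\bigl(\tfrac{1-e^{-2(t+1)p}}{2}\bigr)$, bounds each bracketed difference by a first-derivative (concavity) inequality, rewrites everything with $\tanh^{-1}$, and proves a separate lemma handling the two regimes $p\in(0,1]$ (where $f(p)=O(p/t)$) and $p>1$ (where $f(p)=O(e^{-t})$). You instead invoke the exact second-difference identity $2\phi(t)-\phi(t-1)-\phi(t+1)=\int_{-1}^{1}(1-|u|)\bigl(-\phi''(t+u)\bigr)\,du$ and bound $-\phi_j''$ directly, obtaining $O(P_j/t)$ uniformly without the case split. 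Your route is a bit more streamlined and avoids the auxiliary $\tanh^{-1}$ lemma; the paper's route makes the two regimes explicit and gets the sharper $O(e^{-t})$ bound when $P_j>1$, which however is not needed for the final $O(d\log\mu d/t)$ conclusion.
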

\begin{proof}

    Let $p_{\ell,i}=\Pr\left(D[\ell]={\bf e}_i\right)$ an
    $p_{\ell,0}=\Pr\left(D[\ell]=0^{2^n}\right)$ for $\ell\in[d]$ where
    ${\bf e}_i$ denotes the $i$-th standard basis vector in $\{0,1\}^{2^n}$. The
    crucial observation is that $\widetilde{D}^{t}$ can also be
    generated by the following process:
    \begin{itemize}
    \item For each $\ell\in[d]$ and $i\in[2^n]$, sample
    $\widetilde{t}_{\ell,i}$ from $\text{Pois}(p_{\ell,i} t  \log \mu d)$
    independently.
    \item Let $\widetilde{D}^{t}=\sum_{l\in[d]}\sum_{i\in[2^n]} \widetilde{t}_{\ell,i}
    {\bf e}_i \bmod{2}$.
    \end{itemize}
    Note that
    $\left\{\widetilde{t}_{1,1},\ldots,\widetilde{t}_{d,2^{n}}\right\}$
    are mutually independent. Then all of the $2^{n}$ bits of $\widetilde{D}^{t}$
    are mutually independent, so
    \[
    S\left(\widetilde{D}^t\right)
    =\sum_{i=1}^{2^n} S\left(\text{the $i$-th bit of }\widetilde{D}^t\right)
    =\sum_{i=1}^{2^n} S\left(\mathrm{parity}\left(\sum_{\ell\in[d]}\widetilde{t}_{\ell,i}\right)\right). 
    \]
    
    Note that $\sum_{\ell\in[d]}\widetilde{t}_{\ell,i}\sim
    \text{Pois}\left(t\log \mu d\cdot  \sum_{\ell\in[d]} p_{\ell,i}\right)$. According to Formula (6) in
    \cite{muller1992parity}, 
    \[
    \Pr\left[\mathrm{parity}\left(\sum_{\ell\in[d]}\widetilde{t}_{\ell,i}\right)=1\right]
    =\frac{1-e^{-2\mathbb{E}\left[\sum_{\ell\in[d]}\widetilde{t}_{\ell,i}\right]}}{2}    
    =\frac{1-e^{-2t\log \mu d\cdot \sum_{\ell\in[d]} p_{\ell,i}}}{2}.
    \]
    Then we have
    \[
    2S\left(\widetilde{D}^t\right)-S\left(\widetilde{D}^{t-1}\right)-S\left(\widetilde{D}^{t+1}\right)
    =\sum_{i=1}^{2^n} f\left(\log \mu d\cdot  \sum_{\ell\in[d]}p_{\ell,i}\right)
    \]
    where
    \begin{align*}
    f(p) &:= 2H\left(\frac{1-e^{-2tp}}{2}\right)-H\left(\frac{1-e^{-2(t-1)p}}{2}\right)-H\left(\frac{1-e^{-2(t+1)p}}{2}\right), \\
    H(p) &:= -p\log p-(1-p)\log(1-p).
    \end{align*}
    By Lemma~\ref{lem:nonadaptive-analysis} and the fact that
    $\sum_{\ell, i}p_{\ell,i}\leq d$, we conclude that
    \begin{align*}
        2S\left(\widetilde{D}^t\right)-S\left(\widetilde{D}^{t-1}\right)-S\left(\widetilde{D}^{t+1}\right)
        =O\left(\left(e^{-t}d+\sum_{\ell, i}p_{\ell,i}/t\right)\log \mu d\right)=O(d\log \mu d/t).
    \end{align*}
    \end{proof}

\begin{lemma}\label{lem:nonadaptive-analysis}
    For $p\in [0,1]$, $f(p)=O(p/t)$. 
    For $p>1$, $f(p)=O(e^{-t})$.
\end{lemma}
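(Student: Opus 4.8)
The plan is to recognise $f(p)$ as (minus) a second finite difference and bound it by the supremum of a second derivative. Write $g(s):=H\!\left(\tfrac{1-e^{-2sp}}{2}\right)$, so that $f(p)=2g(t)-g(t-1)-g(t+1)=-\bigl(g(t+1)-2g(t)+g(t-1)\bigr)$. For $s,p>0$ the argument $u(s):=\tfrac{1-e^{-2sp}}{2}$ lies in $(0,\tfrac12)$, so $g$ is $C^{\infty}$ on a neighbourhood of $[t-1,t+1]$, and the integral remainder form of the second difference gives
\[
|f(p)|=\left|\int_0^1\!\!\int_{t-1+v}^{t+v}g''(w)\,dw\,dv\right|\le \max_{s\in[t-1,t+1]}|g''(s)|.
\]
Since the theorem assumes $t>1$, we have $t\ge 2$ and hence $s\ge t-1\ge 1$ throughout, so it suffices to bound $|g''(s)|$ uniformly on this interval.

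Next I would compute $g''$ by the chain rule. Using $u'(s)=p\,e^{-2sp}$ and $u''(s)=-2p^2e^{-2sp}$, the identities $H'(u)=\log\tfrac{1-u}{u}$, $H''(u)=-\tfrac{1}{u(1-u)}$, and $u(s)\bigl(1-u(s)\bigr)=\tfrac14(1-e^{-4sp})$, $\tfrac{1-u(s)}{u(s)}=\tfrac{1+e^{-2sp}}{1-e^{-2sp}}$, the formula $g''=H''(u)(u')^2+H'(u)u''$ becomes, with $y:=e^{-2sp}\in(0,1)$,
\[
|g''(s)|=\frac{4p^2y^2}{1-y^2}+2p^2 y\,\log\frac{1+y}{1-y}.
\]
Bounding $\log\tfrac{1+y}{1-y}=2\int_0^y\tfrac{dt}{1-t^2}\le\tfrac{2y}{1-y^2}$ controls the second term by $\tfrac{4p^2y^2}{1-y^2}$, so
\[
|g''(s)|\le \frac{8p^2y^2}{1-y^2}=\frac{8p^2}{e^{4sp}-1}.
\]

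The two regimes then follow from elementary estimates on $\tfrac{p^2}{e^{4sp}-1}$. For $p\in(0,1]$, using $e^{x}-1\ge x$ gives $|g''(s)|\le\tfrac{8p^2}{4sp}=\tfrac{2p}{s}\le\tfrac{2p}{t-1}\le\tfrac{4p}{t}=O(p/t)$, where $s\ge t-1\ge t/2$ since $t\ge 2$ (and $f(0)=0$ handles $p=0$). For $p>1$, note $4sp\ge 4s\ge 4(t-1)\ge 4$, so $e^{4sp}-1\ge\tfrac12 e^{4sp}$ and $|g''(s)|\le 16p^2e^{-4sp}$; splitting $e^{-4sp}=e^{-2sp}e^{-2sp}$ and using $s\ge 1$, $s\ge t-1$, and that $p\mapsto p^2e^{-2p}$ is decreasing on $[1,\infty)$, we get $p^2e^{-2sp}\le p^2e^{-2p}\le e^{-2}$ and $e^{-2sp}\le e^{-2(t-1)}\le e^{2}e^{-t}$, whence $|g''(s)|\le 16e^{-t}=O(e^{-t})$. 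Combining with $|f(p)|\le\max_{[t-1,t+1]}|g''|$ proves both claims.

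The only delicate point is the small-$p$ case: there $u(s)\to\tfrac12$, so $H''$ blows up, but $(u')^2=p^2e^{-4sp}\to 0$ cancels it exactly; this is precisely why one should bound $|g''(s)|$ through the combined quantity $\tfrac{p^2}{e^{4sp}-1}$ rather than estimating $H''(u(s))$ and $u'(s)$ separately. After that reduction everything is routine calculus.
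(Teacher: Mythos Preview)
Your argument is correct and takes a genuinely different route from the paper. The paper splits $f(p)$ into two first differences and, for $p\in(0,1]$, bounds each by concavity of $H$, arriving at an expression of the form $q^{t+1}\bigl(q^{-1}\tanh^{-1}(q^{t-1})-\tanh^{-1}(q^{t+1})\bigr)$ with $q=e^{-2p}$; this expression is then handled by a separate technical lemma. For $p>1$ the paper instead expands $H(1/2+x)=1+O(x^2)$. Your approach is more uniform: you recognise $f(p)=-(g(t+1)-2g(t)+g(t-1))$ as a second difference, bound it by $\max_{[t-1,t+1]}|g''|$ via the integral form of the remainder, and then estimate the single closed-form quantity $|g''(s)|\le 8p^2/(e^{4sp}-1)$ in both regimes with elementary inequalities. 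This avoids the auxiliary $\tanh^{-1}$ lemma entirely and treats both cases by the same mechanism; the price is the explicit chain-rule computation of $g''$, but that is short. The calculations check out (including the bound $\log\tfrac{1+y}{1-y}\le \tfrac{2y}{1-y^2}$, the use of $t\ge 2$ so $s\ge t-1\ge 1$, and the monotonicity of $p\mapsto p^2e^{-2p}$ on $[1,\infty)$), and your observation that the potential blow-up of $H''$ near $u=1/2$ is exactly cancelled by $(u')^2$ is the right diagnosis of why the combined bound $p^2/(e^{4sp}-1)$ is the correct object to estimate.
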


\begin{proof}
    $f(p)=0$ when $p=0$. When $p\in(0,1]$, since $H(p)$ is concave, 
    \begin{align*}
        f(p) &= \left[H\left(\frac{1}{2}-\frac{e^{-2tp}}{2}\right)-H\left(\frac{1}{2}-\frac{e^{-2(t-1)p}}{2}\right)\right]-\left[H\left(\frac{1}{2}-\frac{e^{-2(t+1)p}}{2}\right)-H\left(\frac{1}{2}-\frac{e^{-2tp}}{2}\right)\right] \\
        &\leq H'\left(\frac{1}{2}-\frac{e^{-2(t-1)p}}{2}\right)\left(\frac{e^{-2(t-1)p}}{2}-\frac{e^{-2tp}}{2}\right)-H'\left(\frac{1}{2}-\frac{e^{-2(t+1)p}}{2}\right)\left(\frac{e^{-2tp}}{2}-\frac{e^{-2(t+1)p}}{2}\right) \\
        &=\frac{1}{\ln 2}\left(e^{2 p}-1\right) e^{-2 p (t+1)} \left(e^{2 p} \tanh ^{-1}\left(e^{-2 p (t-1)}\right)-\tanh ^{-1}\left(e^{-2 p (t+1)}\right)\right).
    \end{align*}
    For $p\in(0,1]$, $\left(e^{2 p}-1\right)\leq e^2 p=O(p)$. Then
    \(
        f(p) \leq \frac{e^2 p}{\ln 2}\cdot q^{t+1}\left(\frac{1}{q} \tanh ^{-1}\left(q^{t-1}\right)-\tanh ^{-1}\left(q^{ t+1}\right)\right)
    \)
    where $q=e^{-2p}\in[e^{-2},1)$. By Lemma \ref{conj:maxi}, $f(p)=O(p/t)$.

    Note that $H(1/2+x)=1+O(x^2)$ when $x$ is small. Then for $p>1$, 
    \begin{align*}
        f(p) &= H\left(\frac{1-e^{-2tp}}{2}\right)-H\left(\frac{1-e^{-2(t-1)p}}{2}\right)
        -H\left(\frac{1-e^{-2(t+1)p}}{2}\right) \\
        &= \left[2 + O\left(e^{-4pt}\right)\right] - 
        \left[1 + O\left(e^{-4p(t-1)}\right)\right] - 
        \left[1 + O\left(e^{-4p(t+1)}\right)\right] = O(e^{-t}).
    \end{align*}
\end{proof}

\begin{lemma} \label{conj:maxi}
    Given $t>1, q\in[0,1)$, then
    \(
        q^{t+1}\left(\frac{1}{q} \tanh ^{-1}\left(q^{t-1}\right)-\tanh ^{-1}\left(q^{ t+1}\right)\right)
        = O(1/t).
    \)
\end{lemma}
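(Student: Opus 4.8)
The plan is to split the quantity into two manifestly nonnegative pieces and bound each by $O(1/t)$ using only the elementary inequality $1-q\le -\ln q$ and an integral bound for $\tanh^{-1}$. Throughout, $t$ is a positive integer with $t>1$, so $t-1\ge 1$; the case $q=0$ is trivial, so assume $q\in(0,1)$ and set $w:=q^{\,t-1}\in(0,1)$. Adding and subtracting $q^{\,t+1}\tanh^{-1}(q^{\,t-1})$ and using $q^{t+1}/q=q^{t}$, the target equals
\[
q^{t}(1-q)\,\tanh^{-1}(q^{t-1})\;+\;q^{t+1}\bigl(\tanh^{-1}(q^{t-1})-\tanh^{-1}(q^{t+1})\bigr)=:T_1+T_2 ,
\]
and $T_1,T_2\ge 0$ because $\tanh^{-1}$ is nonnegative and increasing on $[0,1)$ and $q^{t-1}\ge q^{t+1}$. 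So it suffices to upper-bound $T_1+T_2$.

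For $T_1$: from $\ln q\le q-1$ we get $1-q\le -\ln q=\tfrac{1}{t-1}\ln(1/w)$ (since $(t-1)\ln q=\ln w$), and with $q^{t}\le 1$ this gives $T_1\le \tfrac{1}{t-1}\,\Phi(w)$ where $\Phi(w):=\ln(1/w)\cdot\tanh^{-1}(w)$. For $T_2$: since the integrand is increasing, $\tanh^{-1}(q^{t-1})-\tanh^{-1}(q^{t+1})=\int_{q^{t+1}}^{q^{t-1}}\frac{dx}{1-x^2}\le\frac{q^{t-1}-q^{t+1}}{1-w^2}$, and $q^{t-1}-q^{t+1}=w(1-q^2)\le 2w(1-q)\le \tfrac{2}{t-1}\,w\ln(1/w)$; with $q^{t+1}\le 1$ this gives $T_2\le \tfrac{2}{t-1}\,\Psi(w)$ where $\Psi(w):=\dfrac{w\ln(1/w)}{1-w^2}$. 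Finally, from $\tanh^{-1}(w)=\int_0^w\frac{dx}{1-x^2}\le\frac{w}{1-w^2}$ we see $\Phi(w)\le\Psi(w)$, so everything reduces to showing $\Psi$ is bounded on $(0,1)$ by an absolute constant.

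Boundedness of $\Psi$ is the only genuine estimate. The function $\Psi$ extends continuously to the compact interval $[0,1]$: as $w\to 0^+$, $\Psi(w)\sim w\ln(1/w)\to 0$, and writing $w=1-\delta$ one has $\ln(1/w)\sim\delta$ and $1-w^2\sim 2\delta$, so $\Psi(w)\to\tfrac12$ as $w\to 1^-$; hence $\Psi$ is bounded. (For an explicit constant, split at $w=\tfrac12$: for $w\le\tfrac12$, $\Psi(w)\le\frac{w\ln(1/w)}{3/4}\le\frac{4}{3e}$ using $w\ln(1/w)\le 1/e$; for $w=1-\delta$ with $\delta\le\tfrac12$, the bounds $-\ln(1-\delta)\le 2\delta$ and $1+w\ge\tfrac32$ give $\Psi(w)\le\tfrac43$. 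So $\Psi\le\tfrac43$ everywhere, and $\Phi\le\tfrac43$ as well.) Therefore $T_1+T_2\le\frac{\Phi(w)+2\Psi(w)}{t-1}\le\frac{3\Psi(w)}{t-1}\le\frac{4}{t-1}\le\frac{8}{t}$ for $t\ge 2$, which is $O(1/t)$.

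The hard part, as indicated, is the $w\to 1$ behavior of $\Psi$ (equivalently $\Phi$): there $\ln(1/q)\to 0$ while $\tanh^{-1}(q^{t-1})$ and $\tfrac{1}{1-q^{2(t-1)}}$ blow up, so one must verify the rates cancel ($\delta\ln(1/\delta)\to 0$ and $\delta\cdot\tfrac1\delta$ stays bounded). Everything else is algebraic bookkeeping. This is exactly the mechanism used to estimate $f(p)$ in Lemma~\ref{lem:nonadaptive-analysis} --- isolate a ``difference of arguments'' term from a ``coefficient'' term and tame each with $1-q\le -\ln q$ --- so the lemma slots in cleanly there.
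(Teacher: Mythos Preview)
Your proof is correct. Both you and the paper split the expression into a ``coefficient'' term $(1-q)q^{t}\tanh^{-1}(\cdot)$ and a ``difference'' term $q^{*}\bigl(\tanh^{-1}(q^{t-1})-\tanh^{-1}(q^{t+1})\bigr)$ (you add and subtract $q^{t+1}\tanh^{-1}(q^{t-1})$, the paper adds and subtracts $q^{t}\tanh^{-1}(q^{t+1})$), and both bound the difference term via the derivative $\frac{1}{1-x^{2}}$ at the larger endpoint. The genuine divergence is in how the $1/t$ factor is extracted. The paper keeps everything in terms of $q$ and uses the geometric--series trick
\[
\frac{1-q}{1-q^{k}}=\frac{1}{1+q+\cdots+q^{k-1}}\le\frac{1}{k\,q^{k-1}},
\]
which directly produces $\frac{1}{t-1}$ and $\frac{1}{t+1}$ with essentially no residual analysis. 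You instead substitute $w=q^{t-1}$, pull out the $1/(t-1)$ via the logarithmic inequality $1-q\le -\ln q=\tfrac{1}{t-1}\ln(1/w)$, and then reduce to the boundedness of the single-variable function $\Psi(w)=\frac{w\ln(1/w)}{1-w^{2}}$ on $(0,1)$, which you settle by a compactness argument with explicit constants. The paper's route is a bit more direct and gives slightly sharper constants; your route is also elementary and has the pleasant feature of isolating the only ``hard'' piece as a clean one-variable estimate. (One small remark: your closing sentence says the paper ``tames each with $1-q\le -\ln q$'', but in fact the paper uses the geometric-series bound above, not the log inequality.)
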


\begin{proof}
    Split the left-hand side of the equation into two parts:
    \[
        \left[q^{t}\left(\tanh ^{-1}\left(q^{t-1}\right)-\tanh ^{-1}\left(q^{ t+1}\right)\right)\right] + \left[(1-q) q^t \tanh ^{-1}\left(q^{t+1}\right)\right].
    \]
    By the fact that $\tanh^{-1}(x)$ is convex for $x\in[0,1)$, the first part
    \begin{align*}
        q^{t}\left(\tanh ^{-1}\left(q^{t-1}\right)-\tanh ^{-1}\left(q^{ t+1}\right)\right) &\leq q^{t} \left(q^{t-1}-q^{t+1}\right)\tanh^{-1'}\left(q^{t-1}\right) \\
        &= q^{2t-1}\frac{1-q^2}{1-q^{2 (t-1)}} \\
        &= \frac{q^{2t-1}}{1+q^2+q^4+\cdots+q^{2(t-2)}} \\
        &\leq \frac{q^{2t-1}}{(t-1)q^{2(t-2)}}=\frac{q^3}{t-1}\leq \frac{1}{t-1}.
    \end{align*}
    By the inequality $\tanh^{-1}(x)\leq \frac{1}{2}\left(\frac{1+x}{1-x}-1\right)$ for $x\in[0,1)$, the second part
    \begin{align*}
        (1-q) q^t \tanh ^{-1}\left(q^{t+1}\right) &\leq (1-q) q^t\cdot \frac{1}{2}\left(\frac{1+q^{t+1}}{1-q^{t+1}}-1\right) \\
        &=q^{2 t+1}\frac{1-q}{1-q^{t+1}} \\
        &=\frac{q^{2 t+1}}{1+q+q^2+\cdots+q^t} \\
        &\leq \frac{q^{2 t+1}}{(t+1)q^t}=\frac{q^{t+1}}{t+1}\leq \frac{1}{t+1}.
    \end{align*}
    Thus $q^{t+1}\left(\frac{1}{q} \tanh ^{-1}\left(q^{t-1}\right)-\tanh ^{-1}\left(q^{ t+1}\right)\right)\leq \frac{1}{t-1}+\frac{1}{t+1}=O(1/t)$ for $q\in[0,1)$.
\end{proof}

\section{Public Key Encryption with a Short Classical Secret Key}
\label{sec:shortkey}

In this section, we consider a special case of QPKE where the secret key
is classical string of logarithmic length. We show that such kind
of QPKE does not exist in any classical oracle model. In the following, we use
the same notations as in \Cref{def:PKEck} except that the key generation $\gen^{\ket{H}}$
is also quantum. Formally, 
\begin{theorem} \label{thm:QPKEshort}
    Given a QPKE scheme $(\gen^{\ket{H}}, \enc^{\ket{H}}, \dec^{\ket{H}})$ with classical keys in
    the oracle model where $H\gets \mathcal{O}_\lambda$ is any oracle whose
    quantum query unitary $U_H$ acts on $e_{\lambda}$ qubits, if the scheme
    satisfies the following conditions:
    \begin{enumerate}
        \item the secret key $\sk$ is classical string of $O(\log \lambda)$ length;
        \item it is $\delta$-complete i.e.,
        $\Pr_{r,H}[\dec^{\ket{H}}(\sk,\enc^{\ket{H}}(\pk,m))=m\colon
        \gen^{\ket{H}}(1^\lambda)\to(\pk,\sk)]\geq \delta$;
        \item it makes at most $d_\lambda$ queries to $H$ at each stage,
    \end{enumerate}
    then there exists an adversary Eve who outputs her guess $m_E$ of $m$
    such that $\Pr[m_E=m]\geq \delta-\epsilon$ by making
    $O\left(d_\lambda^2e_\lambda \poly(\lambda)/\epsilon^2\right)$ queries.
\end{theorem}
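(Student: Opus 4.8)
The plan is to recast the scheme as a two‑round key‑agreement protocol, exactly as in \Cref{sec:classical_keygen}: Alice runs $\gen^{\ket H}$ (call this stage $\As_1$) and sends the classical public key $\pk$; Bob runs $\enc^{\ket H}(\pk,m)$ on a uniformly random plaintext $m$ (stage $\Bs$) and returns $\ct$; Alice runs $\dec^{\ket H}(\sk,\ct)$ (stage $\As_2$) to recover $m$. The only datum $\As_2$ needs from $\As_1$ is the classical secret key $\sk$, so Alice's state between the two stages may be taken to be the classical string $\sk$ alone, and the short‑key hypothesis gives $S(\sk)\le|\sk|=O(\log\lambda)$. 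This is precisely the leverage: unlike in \Cref{thm:non-interactive}, where the register to be reconstructed has entropy $\Theta(d_\lambda e_\lambda)$, here the register Eve must reconstruct carries only logarithmically many bits, so the number of replays needed to drive the conditional mutual information down does not scale with the oracle width $e_\lambda$.

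First I would have Eve replay $\Bs(\pk)$ many times, augmented---as in \Cref{thm:perfect_PKE}---by a cheap $\Bs'$-style subroutine that measures one random query per replay so that, with probability $1-O(\epsilon)$, she also records Bob's heavy‑weight query inputs (\Cref{lem:heavy_eve}). Viewing the real Bob $\B_0$ together with Eve's $t$ replayed copies $\B_1,\dots,\B_t$ as permutation‑invariant registers inside the oracle‑averaged global state---which is separable since, conditioned on $(H,\pk)$, the copies are i.i.d.\ and independent of $\sk$---\Cref{lemma:permuation-invariance} supplies, once $t$ is large enough, an index $i\le t$ with $I(\B_0:\sk\mid\pk,\B_1,\dots,\B_i)\le S(\sk)/(t+1)\le\epsilon^2$. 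Then \Cref{thm:quantum_mutual_info_operational} yields a query‑free channel on Eve's registers producing a fake secret key $\sk'$ whose joint distribution with Bob's relevant registers (the plaintext $m$, the ciphertext $\ct$, the recorded heavy‑query list) is $O(\epsilon)$-close to that of the real $\sk$.

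The delicate step---and, I expect, the main obstacle---is to run $\dec^{\ket H}(\sk',\ct)$ against the actual oracle and argue it outputs $m$ with probability $\ge\delta-O(\epsilon)$. The reconstruction is faithful only after averaging over $H$, so $\sk'$ need not be consistent with the particular oracle Eve holds, while $\dec$ does make queries; in \Cref{thm:perfect_PKE} this was settled by reprogramming $H$ to agree with $\gen$'s classical query list and invoking \Cref{lem:BBBV}, but now $\gen$'s queries are quantum and no such list exists. I would patch this by additionally probing $\gen$'s heavy‑weight inputs with a $\Bs'$-style subroutine (harmless, since $\gen$'s output is classical), reconstructing---jointly with $\sk'$---the oracle values a genuine execution producing $(\pk,\sk')$ would expect on those inputs, and defining a reprogrammed oracle $\widetilde H$ that agrees with this reconstruction on the heavy inputs of $\gen$ and of $\Bs$ and with $H$ elsewhere. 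Then $\widetilde H$ differs from $H$ only on inputs of small query weight for $\dec$ and $\Bs$, so by \Cref{lem:BBBV} the relevant output distributions move by $O(\epsilon)$; since $(\sk',\ct)$ paired with $\widetilde H$ is $O(\epsilon)$-close in statistical distance to a genuine execution, $\delta$-completeness closes the bound, just as \Cref{thm:key_compatible} does in \Cref{sec:classical_keygen}. Because $\As_1$ is quantum‑query, $\sk'$ cannot be read off a query list; instead Eve enumerates over the $2^{|\sk|}=\poly(\lambda)$ candidate secret keys, running $\dec$ against the corresponding reprogrammed oracle for each, at the cost of a $\poly(\lambda)$ factor, and a Gentle Measurement step as in \Cref{thm:main_theorem3} handles a quantum ciphertext. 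Tallying the replays of $\Bs$ and $\gen$ and the enumeration yields the claimed $O(d_\lambda^2 e_\lambda\,\poly(\lambda)/\epsilon^2)$ queries. The point needing the most care is that the reconstructed oracle values on $\gen$'s heavy set be simultaneously consistent with the real Bob's state and with $\delta$-good decryption---the consistency juggling underlying \Cref{thm:key_compatible}---now compounded by the absence of an explicit query list for $\gen$.
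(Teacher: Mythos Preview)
Your approach diverges from the paper's in a crucial way, and the step you flag as ``needing the most care'' is in fact a genuine gap.

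You reconstruct a fake $\sk'$ via Fawzi--Renner and then try to run $\dec^{\widetilde H}(\sk',\ct)$ against a reprogrammed oracle. The difficulty you identify is real and not resolved by your patch: the reprogramming technique of \Cref{sec:classical_keygen} works because a \emph{classical}-query $\gen$ leaves an explicit query list $R_{A'}$, and one can set $\widetilde H$ to agree with $R_{A'}$ exactly, forcing $(\pk,\sk')$ into the support of $\gen^{\widetilde H}$ by construction. With quantum-query $\gen$ there is no such list; probing $\gen$'s heavy inputs gives you a set $L_A$, but you still have to decide what values to write there, and ``the oracle values a genuine execution producing $(\pk,\sk')$ would expect'' is not well-defined for a quantum $\gen$. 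Even if you jointly reconstruct $(\sk',H'|_{L_A})$, the BBBV bound only gives that $\gen^{\widetilde H}$ and $\gen^{H^*}$ have $O(\epsilon)$-close output distributions; it does not put $(\pk,\sk')$ into the support of $\gen^{\widetilde H}$, so the completeness-based ``valid execution under $\widetilde H$'' argument breaks.

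The paper sidesteps this entirely. The short secret key is exploited not through the entropy bound $S(\sk)=O(\log\lambda)$ in the CMI (as you do), but through \emph{enumeration}: Eve runs $\As_2=\dec^{\ket H}(i,\ct)$ for \emph{every} candidate $i\in\{0,1\}^{|\sk|}$, $t$ times each, under the real oracle $H$ and the real ciphertext $\ct$. Conditioned on $\sk=s$, the real Alice's post-decryption state $\A$ and Eve's copies $\A_1^{(s)},\dots,\A_t^{(s)}$ are permutation invariant, which together with the $\Bs$-replays drives $I(\sk,\A:\B\mid \E,\pi)$ down (\Cref{lem:shortSkCMI}). Eve then applies Fawzi--Renner to reconstruct Bob's state directly and reads off $m$. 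No decryption is run after reconstruction, no reprogramming, no BBBV; the argument works in any oracle model and needs only $\delta$-completeness. The $\poly(\lambda)$ factor in the query count comes from the enumeration over $2^{|\sk|}$ candidates.
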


\begin{proof}
    Eve only needs to break the two-round QKA between Alice and Bob consisting of
    the following steps:
    \begin{enumerate}
        \item Alice runs $\gen^{\ket{H}}$ to produce $\pk, \sk$ where $\sk$ is a
        classical string of $O(\log \lambda)$ length, and then sends $\pk$ as
        the first message $\pi_A$ to Bob. We call Alice's algorithm at this
        stage $\As_1$.
        \item Upon receiving $\pi_A:=\pk$, Bob samples a uniformly random key
        $k_B$, computes $\ct \gets \enc^{\ket{H}}(\pk, k_B)$ and sends $\pi_B
        := \ct$ to Alice. We call Bob's algorithm at this stage $\Bs$.
        \item Upon receiving $\pi_B:=\ct$, Alice outputs her key $k_A:=\dec^{\ket{H}}(\sk, \ct)$. We call
        Alice's algorithm at this stage $\As_2$.
    \end{enumerate}
    Let $\B$ denote the state of Bob aftering running $\Bs$, $\A$ denote the state
    of Alice after runing $\As_2$, and $\pi$ denote
    $(\pi_A, \pi_B)$. Eve's strategy is as follows:
    \begin{enumerate}
    \item Run $\Bs$ for $t$ times and obtain state $\B_1, \B_2, \ldots, \B_t$.
    \item For each $i\in\{0,1\}^{O(\log \lambda)}$, Run $\As_2$ with
    $\sk=i$ for $t$ times and obtain state $\A_1^{(i)}, \A_2^{(i)}, \ldots,
    \A_t^{(i)}$. 
    \item Finally, apply the recovery channel in \Cref{thm:quantum_mutual_info_operational} to generate
    a fake view $\hat{\B}$ of Bob $\B$ and then use $\hat{\B}$ to compute key
    $k_E$.
    \end{enumerate}
    By setting $t=2\ln 2\cdot e_\lambda d_\lambda/\epsilon^2$, we have $\A\hat{\B}$ is
    $O(\epsilon)$-close to $\A\B$ by \Cref{lem:shortSkCMI} and
    \Cref{thm:qmarkov}. Thus we have $\Pr[k_E=k]\geq \delta-\epsilon$ and Eve
    needs to make $t\cdot O(d_\lambda\poly(\lambda))=O(d_\lambda^2e_\lambda
    \poly(\lambda)/\epsilon^2)$ queries.
\end{proof}

\begin{remark}
    Similar to the generalization from \Cref{thm:main_theorem1} to \Cref{thm:main_theorem2}, 
    \Cref{thm:QPKEshort} can be generalized to the case where the public key $\pk$ is quantum, but $\pk$ needs to be either pure or ``efficiently clonable''.
\end{remark}

\begin{lemma} \label{lem:shortSkCMI}
    There exists $q, q_1, q_2, \ldots, q_{\poly(\lambda)} \in [t]$ such that \[I\left(\sk,
    \A:\B|\E, \pi\right)\leq 2 e_\lambda d_\lambda/t\] where \(\E=\left(\B_1, \B_2, \ldots, \B_{q},
    \A_1^{(1)}, \A_2^{(1)}, \ldots, \A_{q_1}^{(1)}, \A_1^{(2)}, \A_2^{(2)}, \ldots,
    \A_{q_2}^{(2)}, \ldots, \A_1^{(\poly(\lambda))}, \ldots, \A_{q_{\poly(\lambda)}}^{(\poly(\lambda))}\right)\).
\end{lemma}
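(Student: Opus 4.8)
The plan is to derive this conditional mutual information bound as a two-family version of \Cref{lemma:permuation-invariance}. Write $\B_{[m]}:=\B_1,\dots,\B_m$ and $\A^{(i)}_{[m]}:=\A^{(i)}_1,\dots,\A^{(i)}_m$, and let $N=2^{O(\log\lambda)}=\poly(\lambda)$ be the number of candidate secret keys. First I would record the structure we rely on. (i) Averaging over the oracle, the global state on $\sk,\A,\B,\pi$ together with all copies $\B_{[t]},\A^{(1)}_{[t]},\dots,\A^{(N)}_{[t]}$ is fully separable: conditioned on $H$ and the classical data $(\sk,\pi)$ each listed register is prepared independently, so the state is a mixture of product states; hence $S(\cdot\mid\cdot)\ge 0$ and every conditional mutual information between groupings of these registers is non-negative. (ii) $S(\B)\le 2e_\lambda d_\lambda$ and $S(\sk)=O(\log\lambda)$: Bob prepares $\B$ from the (essentially classical) input $\pi_A$ with $d_\lambda$ quantum queries, and, modeling each query as a round of quantum communication on $e_\lambda$ qubits exactly as in the proof of \Cref{thm:non-interactive}, each query raises the von Neumann entropy by at most $2e_\lambda$; and $\sk$ is an $O(\log\lambda)$-bit string. (iii) Permutation invariance: $\B,\B_1,\dots,\B_t$ are exchangeable (i.i.d.\ runs of $\Bs(\pi_A)$, after averaging over $H$), and --- this is exactly why $\sk$ appears on the left of the conditional mutual information --- conditioned on the event $\sk=i$ the states $\A,\A^{(i)}_1,\dots,\A^{(i)}_t$ are exchangeable, being i.i.d.\ runs of $\As_2(i,\pi_B)$; moreover, since conditioned on $(\sk=i,\pi,H)$ all the other registers are independent of $\A$, this exchangeability is unaffected by carrying those registers along as passive conditioning. (The classical message $\pi_B$ is kept from spoiling the $\B$-copy exchangeability by forming $\E$ and performing the later reconstruction before Bob's output measurement, copying the classical transcript out only afterwards, as in the proof of \Cref{thm:perfect_PKE}, together with \Cref{lem:ccnotcmi}.)

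Next I would run the chain rule, splitting $I(\sk,\A:\B\mid\E,\pi)=I(\sk:\B\mid\E,\pi)+I(\A:\B\mid\E,\pi,\sk)$. For the first summand, the chain rule (\Cref{fact:chain_rule}) over $\B_1,\dots,\B_t$ with target $\sk$ makes $\sum_{m=1}^{t}I(\B_m:\sk\mid\pi,\B_{[m-1]})=I(\B_{[t]}:\sk\mid\pi)\le S(\sk)$, so some increment is at most $S(\sk)/t$, and by exchangeability of $\{\B,\B_1,\dots,\B_t\}$ that increment equals $I(\B:\sk\mid\pi,\B_{[q]})$ for the corresponding prefix length $q$. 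For the second summand, fixing the value $\sk=i$, the chain rule over $\A^{(i)}_1,\dots,\A^{(i)}_t$ with target $\B$ yields a prefix length $q_i$ for which, by the $\sk=i$-conditional exchangeability of $\{\A,\A^{(i)}_1,\dots,\A^{(i)}_t\}$, the selected increment equals $I(\A:\B\mid\sk=i,\pi,\dots,\A^{(i)}_{[q_i]})\le S(\B)/t\le 2e_\lambda d_\lambda/t$. Picking $q$ and the $q_i$ in a single pass (each relative to the data already committed), then $\Pr[\sk=i]$-averaging the second family of bounds over $i$ and adding the first, gives $I(\sk,\A:\B\mid\E,\pi)\le(S(\sk)+2e_\lambda d_\lambda)/t$ for the staircase register $\E=(\B_{[q]},\A^{(1)}_{[q_1]},\dots,\A^{(N)}_{[q_N]})$; absorbing $S(\sk)=O(\log\lambda)$ into $2e_\lambda d_\lambda$ (or adjusting the constant) gives the stated bound, which the proof of \Cref{thm:QPKEshort} then feeds into \Cref{thm:quantum_mutual_info_operational}.

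The step I expect to be the main obstacle is the ``single pass'' just invoked. Quantum conditional mutual information is not monotone under additional conditioning (the obstruction already flagged around \Cref{lem:inf}), so one cannot just place every copy into $\E$ and hope the CMI shrinks, nor choose each family's prefix in isolation and append the remaining families for free. The argument must pin down one ordering and one vector $(q,q_1,\dots,q_N)$ for which every telescoped chain-rule sum is still genuinely bounded by $S(\B)$ (resp.\ $S(\sk)$) and for which, once $\sk=i$ is conditioned on, the copies $\A^{(i')}_\bullet$ with $i'\ne i$ --- i.i.d.\ runs of $\As_2$ on unrelated keys, conditionally independent of $\A$ and $\B$ given $(\sk=i,\pi,H)$ --- can be carried in the conditioning without inflating the $i$-th term; this is exactly where the ``exchangeability persists under passive conditioning'' refinement of (iii) is used. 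The remaining checks --- full separability of the global state so the telescoped sums really are bounded, that $S(\B)\le 2e_\lambda d_\lambda$ survives conditioning on $\pi$ because $S(\B\mid\cdot)\le S(\B)$ on separable states, and that $\E$ has exactly the stated shape --- are routine.
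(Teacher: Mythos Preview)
Your approach is the paper's: split $I(\sk,\A:\B\mid\E,\pi)=I(\sk:\B\mid\E,\pi)+\mathbb{E}_s I(\A:\B\mid\sk=s,\pi,\E)$ and hit each piece with \Cref{lemma:permuation-invariance}, once via the exchangeability of $\B,\B_1,\dots,\B_t$ and once (per value of $\sk$) via the exchangeability of $\A,\A^{(s)}_1,\dots,\A^{(s)}_t$. Your bounds $S(\sk)/t$ and $S(\B)/t$ are in fact the correct outputs of \Cref{lemma:permuation-invariance}; the paper writes $S(\B)/(t+1)$ and $S(\A\mid\sk=s)/(t+1)$, which look like typos but land on the same $2e_\lambda d_\lambda/t$.

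On the point you flag as the main obstacle, one correction. Your proposed resolution---that the $\A^{(i')}_\bullet$ for $i'\ne i$ are conditionally independent of $\A,\B$ given $(\sk=i,\pi,H)$ and therefore ``can be carried in the conditioning without inflating the $i$-th term''---does not go through as stated: you are not conditioning on $H$, and in the $H$-averaged state an extra register that carries information about $H$ can increase a CMI. The paper's proof skates over the same circularity (it simply writes ``w.r.t.\ $\pi$ and the rest of Eve'' when choosing each of $q,q_s$). The clean fix, implicit in both arguments, is to average rather than pick sequentially: for every fixed $(q_1,\dots,q_N)$ the telescoped sum $\sum_{q}I(\sk:\B\mid\pi,\B_{[q]},\A^{(\cdot)}_{[q_\cdot]})\le S(\sk)$, and for every fixed $(q,\{q_{s'}\}_{s'\ne s})$ the sum $\sum_{q_s}I(\A:\B\mid\sk=s,\pi,\E)\le S(\B)_{\sk=s}$; hence the average of $I(\sk,\A:\B\mid\E,\pi)$ over all $(q,q_1,\dots,q_N)\in\{0,\dots,t-1\}^{N+1}$ is at most $(S(\sk)+S(\B\mid\sk))/t$, and some single tuple achieves it. With that adjustment your proof and the paper's coincide.
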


\begin{proof}
    By chain rule and the fact that $\sk$ is classical, 
    \begin{align*}
        I(\sk, \A:\B|\E,\pi) 
        &= I(\sk:\B|\E,\pi) + I(\A:\B|\E,\pi,\sk) \\
        &= I(\sk:\B|\E,\pi) + \mathbb{E}_{s} I(\A:\B|\E,\pi,\sk=s).
    \end{align*}
    For the first part, observe that states $\B, \B_1, \ldots, \B_t$ are
    permutation invariant w.r.t. $\sk, \pi$ and the rest of Eve. Then by
    \Cref{lemma:permuation-invariance}, there exists $q\in[t]$ such that
    $I(\sk:\B|\pi, \ldots, \B_1,\B_2\ldots,\B_q)\leq S(\B)/(t+1)\leq 2e_\lambda
    d_\lambda/t$. For the second part, observe that conditioned on $\sk=s$, states $\A,
    \A_1^{(s)}, \ldots, \A_{t}^{(s)}$ are permutation invariant w.r.t. $\pi$ and
    the rest of Eve. Then by \Cref{lemma:permuation-invariance}, there exists
    $q_s\in[t]$ such that $I(\A:\B|\sk=s, \pi, \ldots, \A_1^{(s)}, \A_2^{(s)},
    \ldots, \A_{q_s}^{(s)})\leq S(\A|\sk=s)/(t+1)\leq 2e_\lambda d_\lambda/t$. Thus we have
    $I(\sk, \A:\B|\E,\pi)\leq 2e_\lambda d_\lambda/t$.
\end{proof}

\end{document}